\title{Heat Kernel Renormalization on Manifolds with Boundary}
\author{Benjamin I. Albert}
\begin{document}
\maketitle
\begin{abstract}
  In the monograph \emph{Renormalization and Effective Field Theory}, Costello made two major advances towards the mathematical formulation of quantum field theory.  Firstly, he developed an inductive position space renormalization procedure for constructing effective field theories that is based on heat kernel regularization of the propagator.  Secondly, he gave a rigorous formulation of quantum gauge theory within effective field theory that makes use of the BV formalism.  In this work, we extend Costello's inductive renormalization procedure from manifolds without boundary to a class of manifolds with boundary.  In addition, we reorganize the presentation of the preexisting material, filling in details and strengthening the results.  

\end{abstract}

\tableofcontents{}
\section{Introduction}

Effective field theory and its renormalization group were developed by Wilson and by Kadanoff separately in papers in the late 1960s and early 1970s (see \cite{Wilson_1975} for a contemporaneous review).  There are many variants, but the basic theme involves two steps: mode elimination and rescaling \cite{Kopietz_2010} \cite{Costello_2011}.  In this introduction, we shall focus on the intuitive idea of mode elimination in order to motivate the body of the paper.

For simplicity, we will describe mode elimination for a scalar field theory on a compact manifold $M$.  Suppose that we have an action functional $S[\Lambda](\phi)$ describing physics below an energy scale $\Lambda$.  That is, $S[\Lambda](\phi)$ is a functional of $\phi \in \mathcal{E}_{[0, \Lambda]}$, where $\mathcal{E}_{[0, \Lambda]}$ is the span of the scalar fields with energy eigenvalues in the interval $[0, \Lambda]$.  Given energy scales $\Lambda_L < \Lambda_H$, the action functional $S[\Lambda_L](\phi)$ describing physics at scale $\Lambda_L$ should be given by ``eliminating the modes'' of $S[\Lambda_H](\phi)$ with energy between $\Lambda_L$ and scale $\Lambda_H$.  This is described by the renormalization group equation (RGE)
\begin{align}
  e^{S[\Lambda_L](\phi)/\hbar} = \int_{\phi' \in \mathcal{E}_{(\Lambda_L, \Lambda_H]}} e^{S[\Lambda_H](\phi + \phi')/\hbar} \mathcal{D} \phi'.
\end{align}
where the integral is over $\mathcal{E}_{(\Lambda_L, \Lambda_H]}$, the span of the fields with energy in the interval $(\Lambda_L, \Lambda_H]$.  Equivalently, we can write 
\begin{align}
  S[\Lambda_L](\phi) = \hbar \log \int_{\phi' \in \mathcal{E}_{(\Lambda_L, \Lambda_H]}} e^{S[\Lambda_H](\phi + \phi')/\hbar} \mathcal{D} \phi'.
\end{align}
In order to define the effective action $S[\Lambda](\phi)$, one might be tempted to let the higher energy cutoff go to infinity and write
\begin{align}\label{eq:s_lambda}
  S[\Lambda](\phi) = \lim_{\Lambda' \to \infty}\hbar \log \int_{\phi' \in \mathcal{E}_{(\Lambda, \Lambda']}} e^{S(\phi + \phi')/\hbar} \mathcal{D} \phi'.
\end{align}
where $S(\phi)$ is the classical action, but this limit will almost always not exist due to ultraviolet divergences.  However, the limit should exist after an appropriate renormalization of the functional integral in the right-hand side of (\ref{eq:s_lambda}) by adding counterterms depending on $\hbar$ to the classical action $S(\phi)$.

It is easier to work with the renormalization group equation when it is expressed in terms of the interaction part of the action functional.  Let $D$ be the Laplacian acting on smooth functions on $M$, a compact manifold as before.  Assume that the action is of the form
\begin{align}
  S(\phi) = -\frac{1}{2}\langle \phi, D  \phi \rangle + I(\phi).
\end{align}
where $\langle \phi, D\phi \rangle = \int_M \phi D \phi$ is the quadratic part of the action and $I(\phi)$ has cubic and higher degree parts. 
Because $\phi \in \mathcal{E}_{[0, \Lambda]}$ and $\phi' \in \mathcal{E}_{(\Lambda, \infty)}$ are orthogonal, we have
\begin{align}
  S(\phi + \phi') = -\frac{1}{2}\langle \phi, D\phi \rangle - \frac{1}{2} \langle \phi', D  \phi' \rangle + I(\phi + \phi').
\end{align}
Letting $S[\Lambda](\phi) = -\frac{1}{2}\langle \phi, D \phi \rangle + I[\Lambda](\phi)$, the renormalization group equation simpifies to 
\begin{align}\label{eq:rge_int}
  e^{I[\Lambda_L](\phi)/\hbar} = \int_{\phi' \in \mathcal{E}_{(\Lambda_L, \Lambda_H]}} e^{-\frac{1}{2}\langle \phi', D \phi' \rangle/\hbar + I[\Lambda_H](\phi + \phi')/\hbar} \mathcal{D} \phi'.
\end{align}
or equivalently
\begin{align}\label{eq:rge_int_log}
  I[\Lambda_L](\phi) = \hbar \log \int_{\phi' \in \mathcal{E}_{(\Lambda_L, \Lambda_H]}} e^{-\frac{1}{2}\langle \phi', D \phi' \rangle/\hbar + I[\Lambda_H](\phi + \phi')/\hbar} \mathcal{D} \phi'.
\end{align}

Let $P = P(\Lambda_L, \Lambda_H)$ be the inverse of the quadratic form $\langle \phi', D \phi' \rangle$ on $\mathcal{E}_{(\Lambda_L, \Lambda_H]}$ and let $\partial_P$ be the operator that contracts functionals of the fields in $\mathcal{E}_{(\Lambda_L, \Lambda_H]}$ with $P$.  Using Wick's theorem on the finite dimensional vector space $\mathcal{E}_{(\Lambda_L, \Lambda_H]}$ to rewrite the right-hand side of \eqref{eq:rge_int}, we have
\begin{align}
  \exp(I[\Lambda_L]/\hbar) = \exp(\hbar \partial_{P(\Lambda_L, \Lambda_H)})\exp(I[\Lambda_H]/\hbar)
\end{align}
and thus we can also rewrite \eqref{eq:rge_int_log} as
\begin{align}
  I[\Lambda_L] = \hbar \log[\exp(\hbar \partial_{P(\Lambda_L, \Lambda_H)})\exp(I[\Lambda_H]/\hbar)].
\end{align}

While the version of effective field theory with sharp energy cutoffs described above paints an intuitive physical picture, there are disadvantages to working with it, as discussed in \cite{Costello_2011}.  In particular, there does not seem to be a systematic way to construct counterterms and in so doing also ensure in quantum gauge theory that the obstruction to the existence of quantum gauge symmetry is a local functional.

Costello gives an alternate approach that comes from the relationship between the propagator without cutoff $P(0, \infty)$ (Green's function) and the heat kernel.  Let $K_t(x, y)$ be the heat kernel for $D$.  That is 
\begin{align}
  \partial_t K_t(x, y) + D_xK_t(x, y) = 0
\end{align}
and $\lim_{t \to 0^+} \int_M K_t(x, y) \phi(y)\, dy = \phi(x)$.  Then
\begin{align}
  P(0, \infty) = G(x, y) = \int_0^\infty K_t \, dt
\end{align}
is an inverse to $D$ on $\mathcal{E}_{(0, \infty)}$; that is, away from the energy zero fields.

Instead of cutting off the space of fields, we work with the entire space of fields $\mathcal{E}$ and introduce the regularized propagator
\begin{align}\label{eq:req_prop}
  P_\epsilon^L = \int_\epsilon^L K_t \, dt
\end{align}
An effective field theory now becomes a collection of ``length'' (or ``time'') scale regularized interactions satisfying
\begin{align}
  I[L] =  \hbar \log \left[\exp\left(\hbar \partial_{P_\epsilon^L}\right)\exp\left(I[\epsilon] /\hbar\right)\right].
\end{align}

We now naively might try to define the scale $L$ effective interaction as
\begin{align}\label{eq:naivequant}
  I[L] = \lim_{\epsilon \to 0^+}\hbar \log \left[\exp\left(\hbar \partial_{P_\epsilon^L}\right)\exp\left(I/\hbar\right)\right]
\end{align}
However, this limit may not exist and expression then has to be renormalized.  That is, an $\epsilon$ dependent interaction functional $I(\epsilon)$ of counterterms for $I$ is constructed such that
\begin{align}
  I[L] = \lim_{\epsilon \to 0^+} \hbar \log \left[\exp\left(\hbar \partial_{P_\epsilon^L}\right)\exp\left((I - I(\epsilon))/\hbar\right)\right]
\end{align}
exists.

In Section \ref{sec:fey_diag_exp}, we define the spaces to which the propagator $P$ and the interaction functional $I$ belong.  We define stable Feynman graphs which give a way of organizing the combinatorics of the contractions in
\begin{align}
  V(P, I) := \exp\left(\hbar \partial_P\right)\exp\left(I /\hbar\right)
\end{align}
and $W(P, I) = \hbar \log V(P, I)$.
Theorem \ref{thm:feynm-diagr-expans} expresses $V(P, I)$ as a summation over all stable graphs while Corollary \ref{cor:feynm-diagr-expans} expresses $W(P, I)$ as a summation over connected stable graphs.

In Section \ref{sec:wick_thm}, we state and prove several variants of Wick's theorem.  In \ref{sec:wick_ab}, we calculate the $1$ dimensional Gaussian integral
\begin{align}
  I_{m, \alpha}(a, b) = \int_a^b x^m e^{-\alpha x^2/2} \, dx
\end{align}
in terms of $I_{0, \alpha}(a, b)$ and $J_{i, \alpha}(a, b) = x^{i}e^{-\alpha x^2/2}|_{x = a}^{x = b}$ for $i < m$.  The formula reduces to expected results on $\RR$ and $\RR_{\geq 0}$ which are recalled in \ref{sec:wick_r} and \ref{sec:wick_rplus} respectively.  In \ref{sec:gener-wicks-theor}, we generalize the formula for $I_{m, \alpha}(a, b)$ to a formula for
\begin{align}
  I_{m, \alpha, \beta}(a, b) = \int_a^b x^m e^{-\alpha x^2/2 + \beta x} \, dx.
\end{align}
The proof, which is analogous to the one for the case $\beta = 0$ in \ref{sec:wick_ab} is omitted.
The next two sections are focused on the many variables Wick's theorem.  That is, the computation of the integral 
\begin{align}\label{eq:int_PetoQ}
  \int_Px_{m_1} \dots x_{m_k} e^{-Q(x)/2} \, dx
\end{align}
where $Q(x)$ is a nondegenerate quadratic form and $P$ is a polyhedron.
In \ref{sec:gau_sev}, we recall the standard statement of Wick's theorem on $P = \RR^n$ and give a proof by diagonalizing the quadratic form and applying the result of \ref{sec:wick_r}.  This can be used to calculate the counterterms on $\RR^n$ in \ref{sec:count_rn}.  In \ref{sec:wicks-theor-polyt} it is shown that the result of \ref{sec:wick_ab} is sufficient to compute (\ref{eq:int_PetoQ}) inductively, when $P$ is any polyhedron, although the answer cannot be simplified much.
 
Section \ref{sec:heat_kern_count}, forms the body of the paper.  We begin with \ref{sec:some-examples}, where the construction of the counterterms for a general Feynman weight is motivated by working out the example of a Feynman weight associated to a particular $1$-loop graph in the $\phi^4$-theory.  The renormalization procedure is based on the ability to cover $(0, \infty)^k$ and a fortiori $(\epsilon, L)^k$ by sets defined by inequalities of the form $t_i \leq t_j^R$, where $R > 1$.  In \ref{sec:covering_egamma}, the covering lemma that was proved by Costello in \cite{Costello_2011} is strengthened and proved.  Also, much more detail about the nature of the sets in the cover is given.  Other preliminary concepts needed for the renormalization procedure like local functionals and the form of Feynman weights associated to them are then discussed in \ref{sec:local-funct-feynm}.

In \ref{sec:count_rn}, we formulate Costello's renormalization procedure on $\RR^n$.  We give explicit formulas whenever possible and fill in a few steps in the argument omitted by Costello, such as the introduction of what we call spanning tree coordinates.  The basic construction that is carried out in \ref{sec:count_rn} is the ingredient that is then used in \ref{sec:count-rrn:-error} inductively to provide counterterms on each of the sets in the cover of $(\epsilon, L)^k$, where $k$ is the number of edges of the Feynman graph whose weight we are trying to renormalize.  In \ref{sec:count-rrn:-error}, we also importantly treat how to control the error of the construction of \ref{sec:count_rn}, which is an essential part of the inductive procedure.

In \ref{sec:count_upp_half}, the renormalization procedure is adapted to $\HH^n$, the upper half space with the Euclidean metric.  The procedure does not carry over without modification since the quadratic form in the integral computing the Feynman weight now depends on the center of mass coordinate.  Luckily, this difficulty can be circumvented when necessary by doing an additional Taylor expansion in the normal to the boundary component of the center of mass coordinate.  The counterterms have a more complicated form than those on $\RR^n$, but we argue that the inductive procedure of \ref{sec:count-rrn:-error} can be carried out with appropriate modifications.

In \ref{sec:count-comp-manif}, we discuss the counterterm construction on a Riemannian manifold without boundary following similar reasoning to \cite{Costello_2011}, but with modifications.  We correct what seems to be an oversight in Costello's reasoning.  In particular, on a Riemannian manifold $M$, Costello uses the asymptotic expansion of the heat kernel $K_t(x, y) \sim e^{-d(x, y)^2/4t}\sum_i \phi_i(x, y)t^i$, but for each chart in a cover replaces the geodesic distance $d(x, y)$ with the coordinate distance $\|x - y\|$.  Thus, taking a partition of unity, the Feynman weight under consideration becomes a sum of integrals whose integrands will contain the exponential of a quadratic form, which allows us to apply Wick's theorem to construct counterterms.  However, it is not correct that $K_t(x, y) \sim e^{-\|x - y\|^2/4t}\sum_i \phi_i(x, y)t^i$, at least not uniformly in $x$ and $y$.  We show how this issue can be resolved by adding an additional step to the procedure.  In the end, while structure of the counterterms will be more complicated than on $\RR^n$, one can still bound the error.  The inductive step in the constuction thus remains valid.

The culmination of these results is \ref{sec:count-comp-manif-bdry} where we show the renormalization procedure can be carried out on a Riemannian manifold with boundary with a cylindrical collar neighborhood, with the argument reducing to that of \ref{sec:count_upp_half} near the boundary and that of \ref{sec:count-comp-manif} away from the boundary.  We also comment on generalizing the discussion to Riemannian manifolds with boundary for which the double (always a smooth manifold) is also a Riemannian manifold (the theory of such manifolds is described in detail in Appendix \ref{sec:riemdouble})

In Section \ref{sec:constr_eft}, we move beyond the construction of counterterms for each Feynman weight and construct the counterterms $I^{CT}(\epsilon)$ for the entire effective interaction, essentially following Costello's book \cite{Costello_2011} and an earlier paper \cite{2007arXiv0706.1533C}.

\section{Feynman Diagrams}
\label{sec:fey_diag_exp}
\subsection{General Setup}
Let $\mathcal{E}$ be a graded object in an appropriate symmetric monoidal category with dual objects, which contains a field $\KK = \RR, \CC$ as its monoidal unit.  For toy examples one can work with the category of finite dimensional vector spaces over $\KK$.  For quantum field theory one will need to work with an appropriate subcategory of the category of topological vector spaces where the identifications $(\mathcal{E} \otimes \mathcal{F})^* \cong \mathcal{E}^* \otimes \mathcal{F}^*$ and $\Hom(\mathcal{E}, \mathcal{F}) \cong \mathcal{E}^* \otimes \mathcal{F}$ can be made.  We will not dwell on the issue any further and direct the interested reader to the appendices of \cite{Costello_2011} and \cite{Costello_2016}.

Fix an element $P \in \Sym^2(\mathcal{E})$ which will be called a \emph{propagator}.  We define the algebra of formal power series on $\mathcal{E}$,
\begin{align}
  \mathcal{O}(\mathcal{E}) = \prod_{n \geq 0} \Hom(\otimes^n\mathcal{E}, \KK)_{S_n} = \prod_{n \geq 0} \Sym^n(\mathcal{E}^*)
\end{align}
Here $\Sym$ means taking coinvariants of the $n$-fold tensor product with respect to the symmetric group action.  An element of $I \in \mathcal{O}(\mathcal{E})[[\hbar]]$ is of the form $I = \sum_{i, k \geq 0} I_{i, k}\hbar^i$, where $I_{i, k} \in \Sym^k(\mathcal{E}^{*})$.  Let
\begin{align}
  \mathcal{O}(\mathcal{E})^{+}[[\hbar]] \subset \mathcal{O}(\mathcal{E})[[\hbar]]
\end{align}
be the functionals of the form $I = \sum_{i, k \geq 0} I_{i, k}\hbar^i$, where $I_{0, k} = 0$ for $k < 3$ and $I_{1, 0} = 0$.  We will see the reason for this restricted class of functionals later in the section.

We are interested in combinatorial formulas for ``functional integrals'' of the form
\begin{align}
  V(P, I) = e^{\hbar \partial_P}e^{I/\hbar}
\end{align}
and
\begin{align}
  W(P, I) = \hbar \log (e^{\hbar \partial_P}e^{I/\hbar}),
\end{align}
where $\partial_P$ denotes the contraction operator $\frac{1}{2}\sum_i \partial_{P^{(1)}_i} \partial_{P^{(2)}_i}$ for $P = \sum_i P_i^{(1)} \otimes P_i^{(2)}$.
Such formulas can be obtained directly and transparently by replacing the exponentials with their Taylor series and identifying the coefficients of the resulting formal series:
\begin{lem}[Pre-Feynman Expansion]\label{lem:pre-feyn}
  \begin{align}
    V(P, I) = \sum_{\{n_{i, k}\}}\sum_{j \geq 0} C(\{n_{i, k}\}, j) \hbar^{p(\{n_{i, k}\}, j)} \partial^j_P\prod_{i, k \geq 0}I_{i, k}^{n_{i, k}}
  \end{align}
  where in the outer summation, we sum over the collection of double sequences of almost all zero nonnegative integers $\{n_{i, k}\}_{i, k \geq 0}$ (double sequences of nonnegative integers $\{n_{i, k}\}_{i, k \geq 0}$ with the requirement that for all but finitely many $i, k$, $n_{i, k} = 0$).  We have defined
  \begin{align}
     C(\{n_{i, k}\}, j) = \frac{1}{j!}\prod_{i, k \geq 0}\frac{1}{n_{i, k}!}
  \end{align}
  and
  \begin{align}
    p(\{n_{i, k}\}, j) = \sum_{i, k \geq 0} i\,n_{i, k} - \sum_{i, k \geq 0} n_{i, k} + j.
  \end{align}

\end{lem}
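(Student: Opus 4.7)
My plan is a direct power-series computation: expand each exponential, distribute, and match coefficients. The whole argument is essentially a multinomial expansion, so there is no conceptual obstacle---only bookkeeping.

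First, I would rewrite $e^{I/\hbar}$ as a product of exponentials of the individual pieces $I_{i,k}$. Since $I = \sum_{i,k} \hbar^i I_{i,k}$, we have $I/\hbar = \sum_{i,k} \hbar^{i-1} I_{i,k}$, and by commutativity in $\mathcal{O}(\mathcal{E})[[\hbar]]$,
\begin{align*}
e^{I/\hbar} = \prod_{i,k} \exp\bigl(\hbar^{i-1} I_{i,k}\bigr) = \prod_{i,k} \sum_{n_{i,k} \geq 0} \frac{\hbar^{(i-1) n_{i,k}} I_{i,k}^{n_{i,k}}}{n_{i,k}!}.
\end{align*}
Distributing the product over the sums converts this into a single sum over finitely supported double sequences:
\begin{align*}
e^{I/\hbar} = \sum_{\{n_{i,k}\}} \Bigl(\prod_{i,k} \frac{1}{n_{i,k}!}\Bigr)\,\hbar^{\sum_{i,k} (i-1) n_{i,k}} \prod_{i,k} I_{i,k}^{n_{i,k}}.
\end{align*}

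Next, I would expand $e^{\hbar \partial_P} = \sum_{j \geq 0} \frac{\hbar^j}{j!}\partial_P^{\,j}$ and apply it term by term to the previous display. Pulling all scalars out and collecting factors of $\hbar$ gives
\begin{align*}
V(P, I) = \sum_{\{n_{i,k}\},\, j} \frac{1}{j!}\Bigl(\prod_{i,k} \frac{1}{n_{i,k}!}\Bigr)\,\hbar^{\,j + \sum_{i,k} (i-1) n_{i,k}}\,\partial_P^{\,j}\prod_{i,k} I_{i,k}^{n_{i,k}}.
\end{align*}
The rational coefficient is by definition $C(\{n_{i,k}\}, j)$, and rewriting $j + \sum(i-1) n_{i,k} = j + \sum i\, n_{i,k} - \sum n_{i,k}$ identifies the $\hbar$-exponent with $p(\{n_{i,k}\}, j)$, which is precisely the claimed formula.

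The only point deserving care is the legitimacy of the rearrangements: for $V(P,I)$ to be a well-defined element of $\mathcal{O}(\mathcal{E})[[\hbar]]$, each power of $\hbar$ must receive only finitely many contributions. The hypothesis $I \in \mathcal{O}(\mathcal{E})^{+}[[\hbar]]$, which forces $I_{0,k} = 0$ for $k < 3$ and $I_{1,0} = 0$, is exactly what bounds $p(\{n_{i,k}\}, j)$ below on the terms contributing to any fixed symmetric degree, so I expect this finiteness check to be the main (and essentially the only) thing to verify carefully.
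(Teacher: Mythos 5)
Your proposal is correct and follows essentially the same route as the paper: the paper expands $e^{I/\hbar}$ via the multinomial formula (grouping by $j=\sum n_{i,k}$ and then recombining into a single sum over finitely supported double sequences), which yields exactly the intermediate expression you obtain by writing the exponential of the sum as a product of exponentials and distributing; the application of $e^{\hbar\partial_P}$ and the collection of $\hbar$-powers then proceed identically. Your closing remark about the finiteness of contributions to each power of $\hbar$, guaranteed by $I\in\mathcal{O}(\mathcal{E})^{+}[[\hbar]]$, is a point the paper defers to its surrounding discussion rather than the proof itself, and is a welcome addition.
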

\begin{proof}
  By the multinomial formula
  \begin{align}
    \exp\left(\sum_{i, k \geq 0} I_{i, k} \hbar^{i - 1}\right) &= \sum_{j \geq 0} \frac{(\sum_{i, k \geq 0} I_{i, k} \hbar^{i - 1})^j }{j!}\\
    &= \sum_{j \geq 0} \sum_{\,|\{n_{i, k}\}| = j \,} \prod_{i, k \geq 0} \frac{\hbar^{(i - 1)n_{i, k}}}{n_{i, k}!} I_{i, k}^{n_{i, k}},
  \end{align}
where the inner sum is over sequences of nonnegative numbers $\{n_{i, k}\}$ such that $\sum_{i, k \geq 0} n_{i, k} = j$.
We can reexpress this as a single sum over all double sequences of almost all zero nonnegative integers $\{n_{i, k}\}$
\begin{align}
  \exp\left(\sum_{i, k \geq 0} I_{i, k} \hbar^{i - 1}\right) = \sum_{\{n_{i, k}\}} \prod_{i, k \geq 0} \frac{\hbar^{(i - 1)n_{i, k}}}{n_{i, k}!} I_{i, k}^{n_{i, k}}.
\end{align}
Thus,
  \begin{align}
    V(P, I) &= \sum_{\{n_{i, k}\}}\left(\sum_{j \geq 0} \frac{\hbar^j}{j!} \partial^j_P\right)\prod_{i, k} \frac{\hbar^{(i - 1)n_{i, k}}}{n_{i, k}!} I_{i, k}^{n_{i, k}}\\
    &= \sum_{\{n_{i, k}\}}\sum_{j \geq 0} C(\{n_{i, k}\}, j) \hbar^{p(\{n_{i, k}\}, j)} \partial^j_P\prod_{i, k}I_{i, k}^{n_{i, k}}
  \end{align}
\end{proof}
It remains to investigate the combinatorial structure of the expression
\begin{align}
  \partial^j_P\prod_{i, k}I_{i, k}^{n_{i, k}}.
\end{align}
Before doing so, we shall make a definition.

\begin{defin}
  A \emph{stable graph} is defined by
  \begin{description}
  \item[$V(\gamma)$] a set of vertices
  \item[$E(\gamma)$] a set of edges each connecting two vertices
  \item[$T(\gamma)$] a set of tails each connected to one vertex
  \end{description}
  and a function $g: V(\gamma) \to \ZZ^{\geq 0}$ associating a ``genus'' to each vertex.
\end{defin}
A note for what follows in the next section: there is a natural partial ordering on the vertices of a stable graph: We write $v_1 \preceq v_2$ if $g(v_1) < g(v_2)$ or $g(v_1) = g(v_2)$ and $k_1 \leq k_2$ where $v_1$ has valency $k_1$ and $v_2$ has valency $k_2$.

\subsection{Feynman Diagram Expansion}
\label{sec:fey_ungraded}
Begin with the expression of Lemma \ref{lem:pre-feyn}:
\begin{align}
  V(P, I) = \sum_{\{n_{i, k}\}}\sum_{j \geq 0} \left(\frac{1}{j!2^j}\prod_{i, k \geq 0}\frac{1}{n_{i, k}!}\right) \hbar^{p(\{n_{i, k}\}, j)} \left(\sum_l \partial_{P_l^{(1)}}\partial_{P_l^{(2)}}\right)^j\prod_{i, k \geq 0}I_{i, k}^{n_{i, k}}.
\end{align}
Make the substitution $I_{i, k} = S^kI_{i, k}/k!$ where $S^kI_{i, k} = \sum_{\sigma \in S_k} I_{i, k}^\sigma = k! I_{i, k}$.  Then
\begin{align}
    V(P, I) = \sum_{\{n_{i, k}\}}\sum_{j \geq 0} \widetilde{C}(\{n_{i, k}\}, j) \hbar^{p(\{n_{i, k}\}, j)} \left(\sum_l \partial_{P_l^{(1)}}\partial_{P_l^{(2)}}\right)^j\prod_{i, k \geq 0}(S^kI_{i, k})^{n_{i, k}}.
\end{align}
where
\begin{align}
  \widetilde{C}(\{n_{i, k}\}, j) = \left(\frac{1}{j!2^j}\prod_{i, k}\frac{1}{n_{i, k}!(k!)^{n_{i, k}}}\right) 
\end{align}

Note that
\begin{align}
  \left(\sum_l \partial_{P_l^{(1)}}\partial_{P_l^{(2)}}\right)^j\prod_{i, k \geq 0}(S^kI_{i, k})^{n_{i, k}}.
\end{align}
 will be a sum over contractions that can be parametrized by injections $Q: H_j \to V(\{n_{i, k}\})$ of the set $H_j = \{1^{(1)}, 1^{(2)}, \dots, j^{(1)}, j^{(2)}\}$ into the set of inputs to the interactions
 \begin{align}
   V(\{n_{i, k}\}) = \bigsqcup_{i, k \geq 0}  (\{1^{(1)}, \dots, k^{(1)}\} \cup \dots \cup \{1^{(n_{i, k})}, \dots, k^{(n_{i, k})}\})
 \end{align}

Because $P$ is degree $0$ (in the grading on $\Sym^2 \mathcal{E}$) by assumption, we can reorder the contractions so that $Q(1^{(1)}), \dots, Q(j^{(1)})$ is in ascending order.  There are $j!$ contractions that will be reordered to the same contraction in this way.  Also, because $P$ is a symmetric tensor, we can also reorder so that $Q(1^{(1)})$ comes before $Q(1^{(2)})$, $Q(2^{(1)})$ comes before $Q(2^{(2)})$, and so forth.  There are $2^j$ contractions that will be reordered to the same contraction in this way.

Injections up to these reorderings are in one-to-one correspondence with partitions of $V(\{n_{i, k}\})$ into $j$ subsets with two elements and $1$ additional subset containing the remaining $|V(\{n_{i, k}\})| - 2j$ elements.  Let $\mathcal{Q}(\{n_{i, k}\}, j)$ be the collection of such partitions and for $Q \in \mathcal{Q}(\{n_{i, k}\}, j)$ let $w_Q(P, I)$ denote the corresponding contraction.

Then
\begin{align}\label{eq:v_pi_1}
  V(P, I) = \sum_{\{n_{i, k}\}}\sum_{j \geq 0} \sum_{Q \in \mathcal{Q}(\{n_{i, k}\}, j)}\left(\prod_{i, k \geq 0}\frac{1}{n_{i, k}!(k!)^{n_{i, k}}}\right) \hbar^{p(\{n_{i, k}\}, j)}w_Q(P, I)
\end{align}

Any partition $Q \in \mathcal{Q}(\{n_{i, k}\}, j)$ determines a stable graph $\gamma$ in an obvious way.  Consider $\mathcal{Q}_\gamma(\{n_{i, k}\}, j)$, the collection of partitions which determine the same stable graph $\gamma$ (up to isomorphism).  Let $G(\{n_{i, k}\}) = \prod_{i, k \geq 0}(S^{n_{i, k}}_{k} \rtimes S_{n_{i, k}})$. 
This acts on $V(\{n_{i, k}\})$ by permuting the $n_{i, k}$ interactions of type $i, k$ and their $k$ inputs.  As a consequence, it acts on $\mathcal{Q}(\{n_{i, k}\}, j)$.  In fact, it acts transitively on $\mathcal{Q}_\gamma(\{n_{i, k}\}, j)$.  The stabilizer subgroup of a given partition $Q \in \mathcal{Q}_\gamma(\{n_{i, k}\}, j)$ is equal to $\Aut(\gamma)$, the group of automorphisms of the stable graph $\gamma$ associated to $Q$.  By the orbit-stabilizer theorem, $|\mathcal{Q}_\gamma(\{n_{i, k}\}, j)|$, the number of partitions which determine the same stable graph $\gamma$ is given by
\begin{align}\label{eq:G_nik}
  \frac{|G(\{n_{i, k}\})|}{|\Aut(\gamma)|} =\frac{\prod_{i, k \geq 0}n_{i, k}!(k!)^{n_{i, k}}}{|\Aut(\gamma)|}
\end{align}

Therefore,
\begin{thm}[Feynman Diagram Expansion]\label{thm:feynm-diagr-expans}
  For a stable graph $\gamma$, we define
\begin{align}
  g(\gamma) = b(\gamma) + \sum_{v \in V(\gamma)} g(v) 
\end{align}
where $b(\gamma)$ is the first Betti number of $\gamma$, the number of independent loops.  Let $C(\gamma)$ be the number of connected components of $\gamma$.  Then
  \begin{align}
  V(P, I) = \sum_\gamma\frac{1}{|\Aut(\gamma)|}\hbar^{g(\gamma) - C(\gamma)} w_\gamma(P, I)
\end{align}
with the sum being over all (isomorphism classes of) stable graphs.
\end{thm}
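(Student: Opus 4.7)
The plan is to start from equation \eqref{eq:v_pi_1}, reorganize the inner sum over partitions as a sum over stable graphs via the orbit-stabilizer count \eqref{eq:G_nik}, and finally match the exponent of $\hbar$ with $g(\gamma) - C(\gamma)$ using Euler's formula for graphs.

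First I would verify that the contraction $w_Q(P, I)$ depends only on the stable graph that $Q$ determines, which legitimizes the notation $w_\gamma(P, I)$. Since each $I_{i,k}$ is symmetric in its $k$ inputs and since interactions of the same type $(i,k)$ are interchangeable, elements of $G(\{n_{i,k}\}, j) = \prod_{i,k}(S_k^{n_{i,k}} \rtimes S_{n_{i,k}})$ send a partition $Q$ to another partition determining the same graph while preserving the tensor contraction. This is exactly what transitivity of the $G(\{n_{i,k}\}, j)$-action on $\mathcal{Q}_\gamma(\{n_{i,k}\}, j)$ encodes.

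Next I would apply \eqref{eq:G_nik} to rewrite the sum $\sum_{Q \in \mathcal{Q}(\{n_{i,k}\}, j)} w_Q(P, I)$ as $\sum_\gamma \frac{\prod_{i,k} n_{i,k}!(k!)^{n_{i,k}}}{|\Aut(\gamma)|} w_\gamma(P, I)$, where $\gamma$ ranges over stable graphs compatible with $(\{n_{i,k}\}, j)$. The prefactor $\prod_{i,k}(n_{i,k}!(k!)^{n_{i,k}})^{-1}$ in \eqref{eq:v_pi_1} then cancels the numerator, leaving $|\Aut(\gamma)|^{-1}$. Because the vertex profile $\{n_{i,k}\}$ and the edge count $j = |E(\gamma)|$ are determined by $\gamma$ itself, the triple sum collapses to a single unrestricted sum over stable graphs.

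The final step is to identify $p(\{n_{i,k}\}, j) = \sum_{i,k} i\, n_{i,k} - \sum_{i,k} n_{i,k} + j$ with $g(\gamma) - C(\gamma)$. Via the dictionary $\sum_{i,k} i\, n_{i,k} = \sum_{v \in V(\gamma)} g(v)$, $\sum_{i,k} n_{i,k} = |V(\gamma)|$, and $j = |E(\gamma)|$, combined with Euler's formula $b(\gamma) = |E(\gamma)| - |V(\gamma)| + C(\gamma)$, one computes $p = \sum_v g(v) + b(\gamma) - C(\gamma) = g(\gamma) - C(\gamma)$, which is the desired exponent. The argument is largely bookkeeping assembled from preceding paragraphs; the only conceptually delicate point is the well-definedness of $w_\gamma$ on $\mathcal{Q}_\gamma$, which rests on the symmetric tensor structure of the $I_{i,k}$ and is precisely why the $G(\{n_{i,k}\}, j)$-action was introduced.
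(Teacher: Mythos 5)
Your proposal is correct and follows essentially the same route as the paper's proof: collapse the sum in \eqref{eq:v_pi_1} to a sum over stable graphs using the orbit-stabilizer count \eqref{eq:G_nik}, then identify $p(\{n_{i,k}\}, j)$ with $g(\gamma) - C(\gamma)$ via the dictionary $\sum_{i,k} i\,n_{i,k} = \sum_v g(v)$, $\sum_{i,k} n_{i,k} = |V(\gamma)|$, $j = |E(\gamma)|$ and the formula $b(\gamma) = |E(\gamma)| - |V(\gamma)| + C(\gamma)$. Your explicit check that $w_Q(P,I)$ depends only on the graph $Q$ determines is a point the paper leaves implicit, but it does not change the argument.
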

\begin{proof}
  The constant $p(\{n_{i, k}\}, j) = \sum_{i, k} in_{i, k} - \sum_{i, k} n_{i, k} + j$ has a very simple interpretation in terms of the stable graph $\gamma$ since
  \begin{align}
    \sum_{v \in V(\gamma)} g(v) &= \sum_{i, k} i\,n_{i, k},
  \end{align}
$|V(\gamma)| = \sum_{i, k} n_{i, k}$ and $|E(\gamma)| = j$.    Using the fact that
\begin{align}
  b(\gamma) = |E(\gamma)| - |V(\gamma)| + C(\gamma),
\end{align}
and the definition
\begin{align}
  g(\gamma) = b(\gamma) + \sum_{v \in V(\gamma)} g(v)
\end{align}
we have
\begin{align}
  g(\gamma) - C(\gamma) = p(\{n_{i, k}\}, j).
\end{align}
Lastly define $w_\gamma(P, I)$ to be $w_Q(P, I)$ where $Q$ is a partition of $V(\{n_{i, k}\})$ that determines $\gamma$.  The formula now follows from \eqref{eq:v_pi_1} and \eqref{eq:G_nik}.
\end{proof}

Now we describe a combinatorial formula for $W(P, I) = \hbar \log(e^{\hbar \partial_P} e^{I/\hbar})$:
\begin{cor}\label{cor:feynm-diagr-expans}
  \begin{align}
    W(P, I) = \sum_{\gamma\text{ conn}}\frac{1}{|\Aut(\gamma)|}\hbar^{g(\gamma)} w_\gamma(P, I)
  \end{align}
\end{cor}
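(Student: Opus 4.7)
The plan is to prove the corollary by exponentiating the proposed formula for $W$ and matching it against Theorem \ref{thm:feynm-diagr-expans} term by term, organized by isomorphism types of connected components.

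First I would record the multiplicativity of all invariants under disjoint union. If $\gamma$ decomposes into connected components $\gamma_1, \dots, \gamma_n$, then $w_\gamma(P, I) = \prod_i w_{\gamma_i}(P, I)$ (each contraction factors over components), $g(\gamma) = \sum_i g(\gamma_i)$ (Betti numbers and vertex genera are additive over disjoint unions), and $C(\gamma) = n$. The slightly subtler ingredient, and the one to pin down carefully, is the automorphism count: if we group the components by isomorphism type and let $m_{[\gamma']}$ be the number of components isomorphic to a given connected $\gamma'$, then
\begin{align*}
|\Aut(\gamma)| = \prod_{[\gamma']} m_{[\gamma']}! \,|\Aut(\gamma')|^{m_{[\gamma']}},
\end{align*}
since an automorphism is uniquely specified by permuting components of each isomorphism type and then applying an automorphism within each component.

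Next I would formally exponentiate. Writing $W(P,I)/\hbar = \sum_{\gamma'\text{ conn}} \frac{\hbar^{g(\gamma')-1}}{|\Aut(\gamma')|} w_{\gamma'}(P,I)$ and expanding as a product of ordinary exponentials over the countable set of isomorphism classes of connected stable graphs gives
\begin{align*}
e^{W(P,I)/\hbar} = \prod_{[\gamma']\text{ conn}} \sum_{m \geq 0} \frac{1}{m!}\!\left(\frac{\hbar^{g(\gamma')-1} w_{\gamma'}(P,I)}{|\Aut(\gamma')|}\right)^{\!m}.
\end{align*}
Expanding the product and indexing the resulting sum by tuples $\{m_{[\gamma']}\}$ (finitely many nonzero), each such tuple corresponds to the disjoint-union graph $\gamma = \bigsqcup_{[\gamma']} m_{[\gamma']}\cdot\gamma'$. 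The multiplicativity established above identifies the coefficient with $\frac{1}{|\Aut(\gamma)|}\hbar^{g(\gamma)-C(\gamma)} w_\gamma(P,I)$, and summing over all such tuples is the same as summing over all isomorphism classes of (not necessarily connected) stable graphs. By Theorem \ref{thm:feynm-diagr-expans} this is $V(P,I) = e^{\hbar\partial_P}e^{I/\hbar}$, so taking $\hbar\log$ of both sides yields the claim.

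I expect the main obstacle to be the bookkeeping for $|\Aut(\gamma)|$ and the accompanying orbit count that identifies the $1/m_{[\gamma']}!$ factor from the exponential series with the permutation-of-isomorphic-components contribution to $|\Aut(\gamma)|$; everything else is simply tracking additive invariants. A minor point to be explicit about is that the restriction $I\in\mathcal{O}(\mathcal{E})^+[[\hbar]]$ (so $I_{0,k}=0$ for $k<3$ and $I_{1,0}=0$) is what ensures only finitely many stable graphs contribute to any given order in $\hbar$, which makes both the sum defining $W$ and its formal exponential well-defined in $\mathcal{O}(\mathcal{E})[[\hbar]]$ and justifies the rearrangement of terms in the product.
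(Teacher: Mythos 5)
Your proposal is correct and follows essentially the same route as the paper: record that $g$, $C$, and $w_\gamma$ are multiplicative/additive over disjoint unions together with the formula $|\Aut(\gamma)| = \prod_{[\gamma']} m_{[\gamma']}!\,|\Aut(\gamma')|^{m_{[\gamma']}}$, exponentiate the claimed series for $W(P,I)/\hbar$, reindex the resulting product of exponentials by tuples of multiplicities (equivalently, by arbitrary stable graphs), and identify the result with $V(P,I)$ from Theorem \ref{thm:feynm-diagr-expans}. Your closing remark on the role of $\mathcal{O}(\mathcal{E})^{+}[[\hbar]]$ in justifying the rearrangement is a useful point the paper leaves implicit, but it does not change the argument.
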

\begin{proof}
  If $\gamma_1 \sqcup \dots \sqcup \gamma_k$ is the disjoint union of not necessarily distinct (isomorphism classes of) connected stable graphs $\gamma_1, \dots, \gamma_k$, then it is clear that
  \begin{align}
    g(\gamma_1 \sqcup \dots \sqcup \gamma_k) &= g(\gamma_1) + \dots +  g(\gamma_k)\\
    C(\gamma_1 \sqcup \dots \sqcup \gamma_k) &= C(\gamma_1) + \dots + C(\gamma_2)
  \end{align}
  and if $\gamma = (\sqcup^{k_1} \gamma_1) \sqcup \dots \sqcup(\sqcup^{k_n} \gamma_n)$ where $\gamma_1, \dots, \gamma_n$ are distinct
  \begin{align}
    |\Aut(\gamma)| = k_1!\dots k_n!|\Aut(\gamma_1)|^{k_1}\dots |\Aut(\gamma_n)|^{k_n}
  \end{align}
Thus,
  \begin{align}
    \exp\left( W(P, I)/ \hbar\right) &= \exp\left(\sum_{\gamma\text{ conn}}\frac{1}{|\Aut(\gamma)|}\hbar^{g(\gamma) - 1} w_\gamma(P, I)\right)\\
                                     &= \sum_{\{k_\gamma\}} \prod_{\gamma \text{ conn}} \frac{1}{|\Aut(\gamma)|^{k_\gamma} k_\gamma!}\hbar^{k_\gamma(g(\gamma) - 1)} w_{\sqcup_{k_\gamma}\gamma}(P, I)\\
    &= \sum_\gamma\frac{1}{|\Aut(\gamma)|}\hbar^{g(\gamma) - C(\gamma)} w_\gamma(P, I)
  \end{align}
  In the second line above, the outer summation is over the collection of all labelings of connected stable graphs by non-negative integers $\{k_\gamma\}$, where $k_\gamma = 0$ for all but finitely many $\gamma$.
\end{proof}
\begin{cor}
  For $I \in \mathcal{O}(\mathcal{E})^{+}[[\hbar]]$,
  \begin{align}
    W(P, I) \in \mathcal{O}(\mathcal{E})^{+}[[\hbar]]
  \end{align}
\end{cor}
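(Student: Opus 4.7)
The plan is to combine Corollary \ref{cor:feynm-diagr-expans}, which expresses
\begin{align*}
W(P, I) = \sum_{\gamma\text{ conn}} \frac{1}{|\Aut(\gamma)|} \hbar^{g(\gamma)} w_\gamma(P, I),
\end{align*}
with the observation that $w_\gamma(P, I) \in \Sym^{|T(\gamma)|}\mathcal{E}^*$ (one $\mathcal{E}^*$-slot per tail) and that only graphs whose vertex types $(g(v), \operatorname{val}(v))$ all have $I_{g(v), \operatorname{val}(v)} \neq 0$ actually contribute. The hypothesis $I \in \mathcal{O}(\mathcal{E})^{+}[[\hbar]]$ forbids the pairs $(0, 0), (0, 1), (0, 2), (1, 0)$, so every contributing genus-zero vertex has valency at least $3$ and every contributing genus-one vertex has valency at least $1$. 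Showing $W(P, I) \in \mathcal{O}(\mathcal{E})^{+}[[\hbar]]$ therefore reduces to the two combinatorial claims: (a) if $g(\gamma) = 0$ then $|T(\gamma)| \geq 3$, and (b) if $g(\gamma) = 1$ then $|T(\gamma)| \geq 1$, for every contributing connected stable graph $\gamma$.

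The tool is the handshake identity $\sum_{v \in V(\gamma)} \operatorname{val}(v) = 2|E(\gamma)| + |T(\gamma)|$, combined with $b(\gamma) = |E(\gamma)| - |V(\gamma)| + 1$ for connected $\gamma$. For (a), a connected graph with $g(\gamma) = 0$ is a tree with every vertex of genus $0$, so the identity yields $3|V(\gamma)| \leq 2(|V(\gamma)| - 1) + |T(\gamma)|$, giving $|T(\gamma)| \geq |V(\gamma)| + 2 \geq 3$. For (b), the condition $g(\gamma) = 1$ splits into two subcases. If $\gamma$ is a one-loop graph with all vertices of genus $0$, then $|E(\gamma)| = |V(\gamma)|$ and the valency bound gives $3|V(\gamma)| \leq 2|V(\gamma)| + |T(\gamma)|$, so $|T(\gamma)| \geq |V(\gamma)| \geq 1$. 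If instead $\gamma$ is a tree with exactly one vertex of genus $1$ and valency $k_1$ and all other vertices of genus $0$, counting valencies gives $k_1 + 3(|V(\gamma)| - 1) \leq 2(|V(\gamma)| - 1) + |T(\gamma)|$, so $|T(\gamma)| \geq k_1 + (|V(\gamma)| - 1)$; the right side is at least $1$ in both the single-vertex case (where $I_{1, 0} = 0$ forces $k_1 \geq 1$) and the multi-vertex case (where $|V(\gamma)| - 1 \geq 1$).

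The only real work is the case analysis in (b), where the two ways a graph can accrue genus $1$ (from a vertex versus from a cycle) must each be ruled out against a distinct piece of the hypothesis on $I$, namely $I_{1, 0} = 0$ and $I_{0, k} = 0$ for $k < 3$. Once these cases are separated the inequalities are entirely elementary, and no subtler issue intervenes.
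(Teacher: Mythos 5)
Your argument is correct and complete. The paper actually states this corollary without proof, so there is nothing to compare against; your write-up supplies exactly the omitted reasoning. The reduction via Corollary \ref{cor:feynm-diagr-expans} to the two graph-theoretic claims is the intended one, and your case analysis is sound: the identity $\sum_{v} \operatorname{val}(v) = 2|E(\gamma)| + |T(\gamma)|$ together with $|E(\gamma)| = |V(\gamma)| - 1 + b(\gamma)$ for connected $\gamma$, the valency bound $\operatorname{val}(v) \geq 3$ for genus-zero vertices forced by $I_{0,k} = 0$ ($k < 3$), and the bound $\operatorname{val}(v) \geq 1$ for the single genus-one vertex forced by $I_{1,0} = 0$ in the tree subcase of $g(\gamma) = 1$, give precisely the required inequalities $|T(\gamma)| \geq 3$ when $g(\gamma) = 0$ and $|T(\gamma)| \geq 1$ when $g(\gamma) = 1$. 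The one observation worth making explicit, which you do make, is that graphs containing a vertex of a forbidden type contribute $w_\gamma(P, I) = 0$ and so can be discarded before counting.
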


\section{Wick's Theorem}
\label{sec:wick_thm}

\subsection{Wick's Theorem on \texorpdfstring{$\RR$}{R}}
\label{sec:wick_r}

In one variable, Wick's theorem is the statement
\begin{align}
  \int_{-\infty}^\infty x^m e^{-\alpha x^2/2} \, dx  &=  C_m\frac{\sqrt{2 \pi}}{\alpha^{(m + 1)/2}}
\end{align}
where
\begin{align}
  C_m = 
  \begin{cases}
    (m - 1)!!  & \text{if $m$ is even}\\
    0 & \text{if $m$ is odd}.
  \end{cases}
\end{align}
Above, $(m - 1)!! = (m - 1)(m - 3) \dots 1$ denotes the double factorial.

\subsection{Wick's theorem on \texorpdfstring{$\RR_{\geq 0}$}{R+}}
\label{sec:wick_rplus}
The following will be seen as a corollary of Wick's theorem on an interval $(a, b)$ as stated in Section \ref{sec:wick_ab}:
\begin{align}
  \int_{0}^\infty x^m e^{-\alpha x^2/2} \, dx &= C_m\frac{\sqrt{2\pi}}{2}\frac{1}{\alpha^{(m + 1)/2}} + \tilde{C}_m\frac{1}{\alpha^{(m + 1)/2}} 
\end{align}
where $C_m$ is as above and
\begin{align}
  \tilde{C}_m =
  \begin{cases}
    0 & \text{$m$ even} \\
    (m - 1)!! & \text{$m$ odd}
  \end{cases}
\end{align}

\subsection{Generalized Wick's Theorem on $\RR$}
\label{sec:gen-wicks-on-r}
The following can be seen as consequence of the generalized Wick's theorem on an interval $(a, b)$.  It can also be seen as consequence of the Wick's theorem on $\RR$ by completing the square in the exponent and changing variables:
\begin{align}
  \int_{-\infty}^\infty x^m e^{-\alpha x^2/2 + \beta x} \, dx  &= e^{-\beta^2/2\alpha}\int_{-\infty}^\infty  (x + \beta/\alpha)^m e^{-\alpha x^2/2} \, dx \\
                                                               &= e^{-\beta^2/2\alpha} \sum_{i = 0}^m \binom{m}{i} \frac{C_i \sqrt{2\pi}}{\alpha^{(i + 1)/2}} \frac{\beta^{m - i}}{\alpha^{m - i}}.
\end{align}

\subsection{Wick's Theorem on \texorpdfstring{$(a, b)$}{(a,b)}}
\label{sec:wick_ab}
There are several ways of proving the formula for $\RR$ which one might try to adapt, such as integration by parts and differentiation of $\alpha$ under the integral sign.  The proof by integration by parts is the one that works here.

We wish to compute the integral
\begin{align}
  I_{m, \alpha}(a, b) = \int_a^b x^m e^{-\alpha x^2/2} \, dx
\end{align}
for $-\infty \leq a \leq b \leq\infty$ and to check that the result agrees with the standard formula for $a = -\infty$ and $b = \infty$.  Let
\begin{align}
  J_{m, \alpha}(a, b) = x^me^{-\alpha x^2/2}\bigg|_{x = a}^{x = b}.
\end{align}
By integration by parts,
\begin{align}\label{eq:int_by_parts}
  I_{m, \alpha}(a, b) = \frac{m - 1}{\alpha}I_{m - 2, \alpha}(a, b) - \frac{1}{\alpha} J_{m - 1, \alpha}(a, b).
\end{align}
For $m$ even, we can thus express $I_{m, \alpha}(a, b)$ in terms of $I_{0, \alpha}(a, b)$ and $J_{l, \alpha}(a, b)$ where $l$ ranges over odd integers less than $m$.
For $m$ odd, since $I_{1, \alpha}(a, b) = -(1/\alpha)J_{0, \alpha}(a, b)$, we can express $I_{m, \alpha}(a, b)$ in terms of $J_{l, \alpha}(a, b)$, where $l$ ranges over even integers less than $m$.

We can prove a precise formula by induction:
\begin{prop}\label{prop:I_malpha}
  \begin{align}
    I_{m, \alpha}(a, b) = \frac{C_{m}}{\alpha^{m/2}} I_{0, \alpha}(a, b) - \sum_{i = 0}^{\lfloor\frac{m - 1}{2}\rfloor} \frac{\tilde{C}_{i, m}}{\alpha^{i + 1}} J_{m - 1 - 2i, \alpha}(a, b),
  \end{align}
where $C_m = (m - 1)!!$ when $m$ is even and $C_m = 0$ when $m$ is odd and for all $m$
\begin{align}
  \tilde C_{i, m} = \frac{(m - 1)!!}{(m - 1 - 2i)!!}
\end{align}

\end{prop}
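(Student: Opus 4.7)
The proof is a straightforward induction on $m$ using the integration-by-parts recurrence \eqref{eq:int_by_parts} that the author has already established. The plan is to induct on $m$, taking as base cases $m = 0$ and $m = 1$, and then applying the recurrence plus the inductive hypothesis to pass from $m - 2$ to $m$.

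For the base cases, $m = 0$ reduces to $I_{0,\alpha}(a,b) = I_{0,\alpha}(a,b)$ with $C_0 = 1 = (-1)!!$ by the usual convention and an empty sum (since $\lfloor -1/2 \rfloor < 0$). For $m = 1$, the claim becomes $I_{1,\alpha}(a,b) = -\alpha^{-1} J_{0,\alpha}(a,b)$, which is just a direct antiderivative and matches $C_1 = 0$, $\tilde C_{0,1} = 0!!/0!! = 1$.

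For the inductive step, I substitute the formula for $I_{m-2,\alpha}$ into \eqref{eq:int_by_parts}. This yields three kinds of terms: a multiple of $I_{0,\alpha}(a,b)$, boundary terms $J_{m-3-2i,\alpha}(a,b)$ inherited from the inductive hypothesis (which, after reindexing $i \mapsto i-1$, become $J_{m-1-2i,\alpha}(a,b)$ for $i \geq 1$), and the new boundary term $-\alpha^{-1} J_{m-1,\alpha}(a,b)$ arising directly from integration by parts (which is the $i = 0$ term in the target sum). The coefficient check for $I_{0,\alpha}$ uses $(m-1) C_{m-2} = C_m$ — for $m$ even, this is $(m-1)(m-3)!! = (m-1)!!$; for $m$ odd, both sides vanish. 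The check for the new $J_{m-1,\alpha}$ term is $\tilde C_{0,m} = (m-1)!!/(m-1)!! = 1$. For $i \geq 1$, one verifies $(m-1)\tilde C_{i-1, m-2} = \tilde C_{i, m}$, which reduces to $(m-1)(m-3)!!/(m-1-2i)!! = (m-1)!!/(m-1-2i)!!$.

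The main obstacle — if one can call it that — is bookkeeping: confirming that the upper limit $\lfloor (m-3)/2 \rfloor$ of the inherited sum, shifted by one, together with the freshly produced $i = 0$ term, exactly reproduces the range $0 \leq i \leq \lfloor (m-1)/2 \rfloor$ in both parities of $m$. A quick case split ($m$ even gives new range $0$ to $m/2 - 1$; $m$ odd gives $0$ to $(m-1)/2$) confirms this. No conceptual difficulty arises; the entire argument is a symbol-level verification once the recurrence \eqref{eq:int_by_parts} is in hand.
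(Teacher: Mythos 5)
Your proposal is correct and follows essentially the same route as the paper: induction on $m$ with base cases $m = 0, 1$, using the integration-by-parts recurrence \eqref{eq:int_by_parts} and the coefficient identities $(m-1)C_{m-2} = C_m$ and $(m-1)\tilde C_{i-1, m-2} = \tilde C_{i, m}$ (the paper phrases the step as $m \to m+2$ rather than $m-2 \to m$, which is the same thing). Your bookkeeping of the index range and the extraction of the $i = 0$ boundary term also matches the paper's argument.
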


\begin{proof}
  The even and odd base cases when $m = 0$ and $m = 1$ are clearly satisfied.  Suppose the result is true for $I_{m, \alpha}(a, b)$.  Then using \eqref{eq:int_by_parts},
  \begin{align}
    I_{m + 2, \alpha}(a, b) &= \frac{m + 1}{\alpha}\frac{C_m}{\alpha^{m/2}} - \frac{m + 1}{\alpha} \sum_{i = 0}^{\lfloor \frac{m - 1}{2} \rfloor} \frac{\tilde C_{i, m}}{\alpha^{i + 1}} J_{m - 1 - 2i, \alpha}(a, b)\\
    &- \frac{1}{\alpha}J_{m + 1, \alpha}(a, b),
  \end{align}
  and
  \begin{align}
    (m + 1) \tilde C_{i, m} &= \frac{(m + 1)!!}{(m + 1 - 2(i + 1))!!}\\
    &= \tilde C_{i + 1, m + 2}
  \end{align}
  so
  \begin{align}
    &\frac{m + 1}{\alpha} \sum_{i = 0}^{\lfloor \frac{m - 1}{2} \rfloor} \frac{\tilde C_{i, m}}{\alpha^{i + 1}} J_{m - 1 - 2i, \alpha}(a, b) \\
    &= \sum_{i = 0}^{\lfloor \frac{m - 1}{2} \rfloor} \frac{\tilde C_{i + 1, m + 2}}{\alpha^{(i + 1) + 1}} J_{(m + 2) - 1 - 2(i + 1), \alpha}(a, b) \\
    &= \sum_{i = 1}^{\lfloor \frac{(m + 2) + 1}{2} \rfloor} \frac{\tilde C_{i, m + 2}}{\alpha^{i + 1}} J_{(m + 2) - 1 - 2i}(a, b)
  \end{align}

  The induction step is now completed by employing the fact that
  \begin{align}
    \frac{(m + 1)C_m}{\alpha \alpha^{m/2}} = \frac{C_{m + 2}}{\alpha^{(m + 2)/2}}
  \end{align}
\end{proof}
For $a = -\infty$ and $b = \infty$, we have $J_{l, \alpha}(a, b) = 0$ for all $l$ and $I_{0, \alpha}(a, b) = \sqrt{2\pi/\alpha}$, so we recover the statement of Wick's theorem on $\RR$.

Similarly, if $a = 0$ and $b = \infty$, then $J_{l, \alpha}(a, b) = 0$ for $l \neq 0$ and $J_{0, \alpha}(a, b) = -1$.  Since $I_{0, \alpha}(a, b) = \frac{1}{2}\sqrt{2\pi/\alpha}$ we recover the statement of Wick's theorem on $\RR_{\geq 0}$.

\subsection{Generalized Wick's Theorem on \texorpdfstring{$(a, b)$}{(a,b)}}
\label{sec:gener-wicks-theor}
In \ref{sec:wicks-theor-polyt} and later in \ref{sec:count_upp_half}, we shall encounter integrals over polyhedra of polynomials multiplied by exponentials of inhomogeneous quadratic forms.  Here we establish the one dimensional result that can be used iteratively to calculate such integrals explicitly.

We wish to compute the integral
\begin{align}
  I_{m, \alpha, \beta}(a, b) = \int_a^b x^m e^{-\alpha x^2/2 + \beta x} \, dx
\end{align}
for $-\infty \leq a \leq b \leq\infty$ and to check that the result agrees with the standard formula for $a = -\infty$ and $b = \infty$.  Let
\begin{align}
  J_{m, \alpha, \beta}(a, b) = x^me^{-\alpha x^2/2 + \beta x}\bigg|_{x = a}^{x = b}.
\end{align}
Firstly, \eqref{eq:int_by_parts} generalizes to 
\begin{align}
  I_{m, \alpha, \beta}(a, b) = - \frac{1}{\alpha} J_{m - 1, \alpha, \beta}(a, b) + \frac{\beta}{\alpha}I_{m - 1, \alpha, \beta}(a, b) + \frac{m - 1}{\alpha}I_{m - 2, \alpha, \beta}(a, b).
\end{align}
The following is a generalization of Proposition \ref{prop:I_malpha}
\begin{prop}
  \begin{align}
    I_{m, \alpha, \beta}(a, b) = &-\sum_{i = 0}^{m - 1}\sum_{\substack{\{r_j\}\\ \sum r_j = i}} \frac{\beta^{|r^{-1}(1)|} \prod_{k \in r^{-1}(2)}(m - 1 + s_k  - i)}{\alpha^{|l(r)| + 1}} J_{\alpha, \beta, m - i - 1}(a, b) \nonumber\\
    &+ \sum_{\substack{\{r_j\}\\ \sum r_j = m}} \frac{\beta^{|r^{-1}(1)|} \prod_{k \in r^{-1}(2)}(s_k - 1)}{\alpha^{|l(r)|}} I_{\alpha, \beta, 0}(a, b)
  \end{align}
where $\{r_j\}$ ranges over finite sequences such that $r_j \in \{1, 2\}$ for all $j$.  We use $l(r)$ to denote the length of the sequence $\{r_i\}$ and $s_k = \sum_{j = 1}^k r_j$.
\end{prop}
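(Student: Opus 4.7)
The plan is to prove the formula by induction on $m$, exactly parallel to the proof of Proposition \ref{prop:I_malpha}, with the recurrence
\begin{align*}
  I_{m,\alpha,\beta}(a,b) = -\tfrac{1}{\alpha}J_{m-1,\alpha,\beta}(a,b) + \tfrac{\beta}{\alpha}I_{m-1,\alpha,\beta}(a,b) + \tfrac{m-1}{\alpha}I_{m-2,\alpha,\beta}(a,b)
\end{align*}
playing the role that \eqref{eq:int_by_parts} played there. The base cases $m=0$ and $m=1$ are immediate: for $m=0$ only the empty sequence contributes (to the $I_0$ sum, with $i=m=0$), and for $m=1$ one checks that the length-$0$ sequence gives $-J_0/\alpha$ and the sequence $\{1\}$ gives $\beta I_0/\alpha$.

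The combinatorial content I would make explicit before the induction is the interpretation of the sequences $\{a_j\}$: I read them \emph{bottom-up}, so that $a_{l(a)}$ records the first reduction applied to $I_m$ and $a_1$ records the last reduction before one either stops at $J_{m-i-1}$ (first sum) or arrives at $I_0$ (second sum). With this convention, the running index just before the top-down application of $a_k$ equals $m-i+s_k$ for the $J$-paths and $s_k$ for the $I_0$-paths, which is exactly why the recurrence supplies the factor $(m-1+s_k-i)/\alpha$ or $(s_k-1)/\alpha$ whenever $a_k=2$, and $\beta/\alpha$ whenever $a_k=1$.

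For the inductive step, I substitute the formulas for $I_{m-1,\alpha,\beta}$ and $I_{m-2,\alpha,\beta}$ into the recurrence and group the result according to whether a given sequence for $I_m$ has $a_{l(a)}$ undefined (empty sequence), $a_{l(a)}=1$, or $a_{l(a)}=2$. The empty sequence produces the $i=0$ term $-J_{m-1,\alpha,\beta}/\alpha$ directly. Sequences with $a_{l(a)}=1$ come from the $(\beta/\alpha)I_{m-1,\alpha,\beta}$ summand: truncating the top $1$ gives a sequence for $I_{m-1}$ whose total sum is $i-1$ (or $m-1$), and the extra $\beta/\alpha$ from the recurrence is exactly the new $\beta^{|a^{-1}(1)|}$ factor. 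Sequences with $a_{l(a)}=2$ come from the $((m-1)/\alpha)I_{m-2,\alpha,\beta}$ summand, and the key compatibility to verify is that the external factor $(m-1)/\alpha$ matches the formula's expected top factor: in the $J$-path case $s_{l(a)}=i$ so $(m-1+s_{l(a)}-i)/\alpha=(m-1)/\alpha$, and in the $I_0$-path case $s_{l(a)}=m$ so $(s_{l(a)}-1)/\alpha=(m-1)/\alpha$. All of the interior factors for $k<l(a)$ are unchanged, because $m-i+s_k$ is invariant under $(m,i)\mapsto(m-1,i-1)$ and under $(m,i)\mapsto(m-2,i-2)$.

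The main obstacle, and the only part that is genuinely delicate rather than mechanical, is keeping the bookkeeping consistent between the bottom-up sequence convention and the top-down recurrence. Once that correspondence is nailed down, the induction is a direct check of three disjoint cases, exactly analogous to Proposition \ref{prop:I_malpha}, so following the pattern of that proof I would present the verification compactly rather than re-deriving each shifted double-factorial by hand.
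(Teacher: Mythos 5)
Your proposal is correct and follows exactly the route the paper indicates: the paper omits the proof, stating only that it is ``a straightforward induction like the proof of Proposition \ref{prop:I_malpha}'' based on the generalized integration-by-parts recurrence, and your induction on $m$ with the three-way case split on the topmost entry of the sequence (empty, $a_{l(a)}=1$, $a_{l(a)}=2$) is precisely that argument, with the bottom-up reading of $\{a_j\}$ correctly accounting for the factors $(m-1+s_k-i)$ and $(s_k-1)$. I verified the bookkeeping (invariance of the interior factors under $(m,i)\mapsto(m-1,i-1)$ and $(m,i)\mapsto(m-2,i-2)$, and the matching of the external factor $(m-1)/\alpha$) against the cases $m\leq 3$, and it is consistent.
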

We shall not give the proof which is a straightforward induction like the proof of Proposition \ref{prop:I_malpha}.  However, let us just check that it reduces to the formula of Proposition \ref{prop:I_malpha} in the case that $\beta = 0$.  Since $0^0 = 1$ and $0^k = 0$ for $k > 0$ the only nonzero terms in the sums will come from sequences with $r^{-1}(1) = \emptyset$.  But there is exactly one such sequence such that $\sum r_j = m$ for $m$ even and it has $l(r) = m/2$ and no such sequences for $m$ odd.  It is clear that this then becomes the formula of Proposition \ref{prop:I_malpha}.

\subsection{Wick's theorem on \texorpdfstring{$\RR^n$}{R^n}}
\label{sec:gau_sev}

Suppose that $A$ is an invertible symmetric $n \times n$ matrix and consider the associated quadratic form $Q(x) = \langle x, Ax \rangle = x^iA_{ij}x^j$.  We wish to compute the integral
\begin{align}
  I_{J, A} &= \int_{\RR^n}x_{m_1} \dots x_{m_k} e^{-Q(x)/2} \, dx
\end{align}
where $J = (j_1, \dots, j_n)$ is a multi-index such that $x_{m_1}\dots x_{m_{k}} = x_1^{j_1}\dots x_n^{j_n}$.

\begin{thm}[Wick's Theorem on $\RR^n$]
  For $k$ even
  \begin{align}
    \int_{\RR^n}x_{m_1} \dots x_{m_k} e^{-Q(x)/2} \, dx =   \frac{\sqrt{2\pi}}{\sqrt{\det(A)}}\sum_{\beta}\prod_{j = 1}^{k/2}A^{-1}_{\beta_j^{(1)}, \beta_j^{(2)}}
  \end{align}
where the sum is over partitions of the multiset $\{m_1, \dots, m_k\}$ into $k/2$ subtuples of $2$ elements.  Here $\beta_j^{(1)}$ and $\beta_j^{(2)}$ denote respectively the first and second elements of the $j$-th multi-subset in the partition.
\end{thm}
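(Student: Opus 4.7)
The plan is to diagonalize the quadratic form by an orthogonal change of coordinates, reduce the multivariable integral to a product of one-dimensional Gaussian moment integrals handled by Subsection \ref{sec:wick_r}, and then combinatorially reorganize the resulting sum as a sum over pairings. Since $A$ is symmetric, there exists an orthogonal matrix $O$ with $O^T A O = D = \operatorname{diag}(\lambda_1, \dots, \lambda_n)$, and invertibility of $A$ guarantees every $\lambda_l \neq 0$. The substitution $x = Oy$ preserves Lebesgue measure, turns $Q(x)$ into $\sum_l \lambda_l y_l^2$, and turns each factor $x_{m_j}$ into $\sum_{i_j} O_{m_j, i_j} y_{i_j}$.

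After this change of variables and expansion of the product, the integral becomes
\begin{align*}
\sum_{i_1, \dots, i_k} O_{m_1, i_1} \cdots O_{m_k, i_k} \int_{\RR^n} y_{i_1} \cdots y_{i_k}\, e^{-\sum_l \lambda_l y_l^2/2}\, dy.
\end{align*}
By Fubini the inner integral factors as $\prod_l \int_\RR y_l^{p_l}\, e^{-\lambda_l y_l^2/2}\, dy_l$ where $p_l = \#\{j : i_j = l\}$, and Subsection \ref{sec:wick_r} evaluates this coordinate by coordinate to the overall Gaussian normalization $(2\pi)^{n/2}/\sqrt{\det A}$ times $\prod_l (p_l - 1)!!\,\lambda_l^{-p_l/2}$ when every $p_l$ is even, and to zero otherwise.

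It remains to identify the sum
\begin{align*}
\sum_{(i_j)} O_{m_1, i_1} \cdots O_{m_k, i_k} \prod_l (p_l - 1)!!\, \lambda_l^{-p_l/2}
\end{align*}
over multi-indices with all $p_l$ even with $\sum_\beta \prod_j A^{-1}_{m_{\beta_j^{(1)}}, m_{\beta_j^{(2)}}}$. Since $(p_l - 1)!!$ is the number of perfect matchings on a $p_l$-element set, I would reindex by first choosing a pairing $\beta$ of $\{1, \dots, k\}$ and then assigning a common coordinate index to each pair, obtaining $\sum_\beta \prod_{j = 1}^{k/2}\bigl(\sum_i O_{m_{\beta_j^{(1)}}, i}\, O_{m_{\beta_j^{(2)}}, i}\,\lambda_i^{-1}\bigr)$; each inner sum is the $(m_{\beta_j^{(1)}}, m_{\beta_j^{(2)}})$-entry of $O D^{-1} O^T = A^{-1}$. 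The main obstacle is precisely this reindexing: one must verify that the $(p_l - 1)!!$ factor from one-dimensional Wick matches the count of pairings of $\{1, \dots, k\}$ compatible with each admissible index assignment, so that no over- or under-counting occurs; the diagonalization and the vanishing of odd moments are routine.
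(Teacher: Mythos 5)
Your proposal is correct and follows essentially the same route as the paper: diagonalize $A$ orthogonally, expand each $x_{m_j}$ in the new coordinates, apply the one-variable Wick formula factor by factor, and reorganize the resulting sum over multi-indices as a sum over pairings using $OD^{-1}O^T = A^{-1}$. The reindexing step you flag as the main obstacle is exactly the identity the paper asserts when it rewrites $\prod_i C_{k_i}/D_{ii}^{k_i}$ as a sum over partitions and then switches the order of summation, and your matching-count argument via $(p_l-1)!!$ is the standard (and correct) justification for it; note also that your normalization $(2\pi)^{n/2}$ agrees with what the paper's proof actually produces, the $\sqrt{2\pi}$ in the theorem statement being a typo.
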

\begin{proof}
  Let $D$ denote the diagonalization of $A$ and assume that $D$ has diagonal entries $\alpha_1, \dots, \alpha_n$.  In this new basis, using the change of basis matrix $S$, we have (in Einstein summation convention) a linear combination
  \begin{align}
    S^{i_1}_{m_1}\dots S^{i_k}_{m_k} \int_{\RR^n} y_{i_1} \dots y_{i_k} e^{-\alpha_1 y_1^2/2} \dots e^{-\alpha_n y_n^2/2}  \, dy.
  \end{align}
Apply Wick's theorem on $\RR$ separately in each variable.  Let $y_1^{l_1} \dots y_n^{l_n} = y_{i_1} \dots y_{i_k}$.  We have
\begin{align}
  \int_{\RR^n}y_1^{k_1} \dots y_n^{k_n} e^{-\alpha_1 y_1^2/2} \dots e^{-\alpha_n y_n^2/2}  \, dy &= \frac{1}{\sqrt{\alpha_1 \dots \alpha_n}} \prod_{i = 1}^n \frac{C_{k_i}}{\alpha_i^{k_i/2}} = \frac{1}{\sqrt{\det(A)}} \prod_{i = 1}^n \frac{C_{k_i}}{D_{ii}^{k_i/2}} \\                                                                                          &= \frac{(\sqrt{2\pi})^n}{\sqrt{\det(A)}}\sum_{\beta} \prod_{j = 1}^{k/2}D^{-1}_{\beta_j^{(1)}, \beta_j^{(2)}}
\end{align}
where the sum is over partitions of the multiset $\{i_1, \dots, i_k\}$ into $k/2$ multi-subsets of $2$ elements (there is only one such partition for which the product can be nonzero).  We then switch the order of summation so that the inner sum over partitions $\beta$ of $\{i_1, \dots, i_k\}$ becomes the outer sum over partitions $\beta$ of $\{m_1, \dots, m_k \}$; that is, 
\begin{align}
  \frac{(\sqrt{2\pi})^n}{\sqrt{\det(A)}}\sum_{i_1, \dots, i_k} \sum_{\beta} S^{i_1}_{m_1}\dots S^{i_k}_{m_k}\prod_{j = 1}^{k/2}D^{-1}_{\beta_j^{(1)}, \beta_j^{(2)}} &= \frac{(\sqrt{2\pi})^n}{\sqrt{\det(A)}} \sum_{\beta} \prod_{j = 1}^{k/2} S^{i_{2j - 1}}_{\beta_j^{(1)}}S^{i_{2j}}_{\beta_j^{(2)}}D^{-1}_{{i_{2j - 1}}, {i_{2j}}}\\
  &= \frac{(\sqrt{2\pi})^n}{\sqrt{\det(A)}}\sum_{\beta}\prod_{j = 1}^{k/2}A^{-1}_{\beta_j^{(1)}, \beta_j^{(2)}}
\end{align}
\end{proof}

\subsection{Generalized Wick's Theorem for Polyhedra}\label{sec:wicks-theor-polyt}
More generally, we wish to compute the integral
\begin{align}
  I_{J, A} &= \int_P x_{m_1} \dots x_{m_k} e^{-Q(x)/2} \, dx
\end{align}
where $J = (j_1, \dots, j_n)$ is a multi-index such that $x_1^{j_1}\dots x_n^{j_n} = x_{m_1}\dots x_{m_k}$ and $P$ is a polyhedron.  We can also assume that $Q(x) = \sum_{i,j}x_iA_{ij}x_j + B_jx^j$ is inhomogeneous.

The purpose of this section is to show that the result of \ref{sec:wick_ab} can be used inductively to calculate a Wick integral of a nondegenerate inhomogeneous quadratic form $Q(x)$ over a polyhedron $P$ (possibly unbounded) in $\RR^n$ and to get a flavor for the form of the answer.  We shall work with a particular situation that arises in Section \ref{sec:count_upp_half}.  Furthermore, we shall assume that $P$ is determined by inequalities of the form
\begin{align}
  0 \leq u_i + \sum_jR_{ij}x_j
\end{align}
for $i = 1, \dots, n + 1$.

Because $A$ is a symmetric matrix, it can be diagonalized by an orthogonal transformation.  This will produce a linear combination of integrals of the form
\begin{align}
  \int_Py_1^{k_1} \dots y_n^{k_n} e^{-\alpha_1 y^2_1/2 + \beta_1 y_1}\dots e^{-\alpha_n y^2_r/2 + \beta_n y_r}\, dy,
\end{align}
where $P$ is some new polyhedron.  Decomposing $P$ into subsets $P'$ that can be expressed in the form
\begin{align}
  a_1(y_2, \dots, y_n) &\leq y_1 \leq b_1(y_2, \dots, y_n)\\
  \vdots\nonumber\\
  a_{n - 1}(y_n) &\leq y_{n - 1} \leq b_{n - 1}(y_n)\nonumber\\  
  a_n &\leq y_n \leq b_n \nonumber
\end{align}
where $a_j(y_{j + 1}, \dots, y_n)$ and $b_j(y_{j + 1}, \dots, y_n)$ are linear in $y_{j + 1}, \dots, y_n$ with the possibility that $a_j = -\infty$ or $b_j = \infty$, and $a_n$ and $b_n$ are constants, possibly equal to $-\infty$ or $\infty$.

Decompose the integral over $P'$ as 
\begin{align}
  \int_{P_2'}\int_{a_1}^{b_1} y_1^{k_1} \dots y_n^{k_n} e^{-\alpha_1 y^2_1/2}\dots e^{-\alpha_n y^2_n/2} e^{\beta_1 y_1}\dots e^{\beta_n y_n}\, dy_1 \dots dy_n,
\end{align}

We apply Wick's theorem in one variable to $y_1$ to get a linear combination of elements of the form
\begin{align}
    \int_{P_2'} y_2^{k_2} \dots y_n^{k_n} e^{-\alpha_2 y_2^2/2} \dots e^{-\alpha_n y^2_n/2} e^{\beta_2 y_2}\dots e^{\beta_n y_n}J_{l, \alpha_1, \beta_1}(a_1, b_1)\, dy_2\dots dy_n 
\end{align}
for $l < k_1$ with an element of the form
\begin{align}
    \int_{P_2'} y_2^{k_2} \dots y_n^{k_n} e^{-\alpha_2 y_2^2/2} \dots e^{-\alpha_n y^2_n/2} e^{\beta_2 y_2}\dots e^{\beta_n y_n}I_{0, \alpha_1, \beta_1}(a_1, b_1)\, dy_2\dots dy_n 
\end{align}

By substituting the definition of $J_{l, \alpha_1, \beta_1}$, terms of the first form are equal to
\begin{align}
  &\int_{P_2'} y_2^{k_2} \dots y_n^{k_n} e^{-\alpha_2 y_2^2/2} \dots e^{-\alpha_n y^2_n/2} e^{\beta_2 y_2}\dots e^{\beta_n y_n} b_1^le^{-\alpha_1 b_1^2 + \beta_1 b_1} \, dy_2\dots dy_n \\
  -&\int_{P_2'} y_2^{k_2} \dots y_n^{k_n} e^{-\alpha_2 y_2^2/2} \dots e^{-\alpha_n y^2_n/2} e^{\beta_2 y_2}\dots e^{\beta_n y_n}a_1^le^{-\alpha_1 a_1^2 + \beta_1 a_1} \, dy_2\dots dy_n 
\end{align}

Let us focus our attention on the terms involving $b_1= d_2y_2 + \dots + d_ny_n$.  The analysis for the terms involving $a_1$ is similar.  Note that
\begin{align}
  &\alpha_2y_2^2 + \dots + \alpha_ny_n^2 + \alpha_1(d_2y_2 + \dots + d_ny_n)^2
\end{align}
has rank $n - 1$.  To prove this, let $d$ be the thought of as a column vector with entries $d_2, \dots, d_r$.  Let $c = \sqrt{\alpha_1}d$ and let $A = \diag(\alpha_2, \dots, \alpha_r)$.  Then
\begin{align}
  \det(A + \alpha_1 dd^t) &= \det(A)\det(I + A^{-1}cc^t) \\
  &= \det(A)(1 + c^tA^{-1}c) = \det(A)(1 + |\sqrt{A^{-1}}c|^2) > 0
\end{align}
which implies that the quadratic form in the variables $y_2, \dots, y_r$ is nondegenerate.

Regarding the term involving $I_{0, \alpha_1, \beta_1}(a_1, b_1)$, there does not seem to be any further simplification, unless $a_1 = -\infty$ and $b_1 = \infty$, so, in general, we leave the answer in this form.

\section{Heat Kernel Counter Terms}
\label{sec:heat_kern_count}

\subsection{A Motivating Example}\label{sec:some-examples}
Due to the complexity of the renormalization procedure for a general Feynman graph, it is helpful to begin with a particular Feynman graph in a particular theory.  We will work in the $\phi^4$ theory, i.e.\ the scalar field theory theory with classical interaction
\begin{align}
  I(\phi) = \frac{1}{4!}\int_M \phi^4.
\end{align}

Nothing needs to be done to renormalize a Feynman weight at the $0$-loop level, since the limit
\begin{align}
  \lim_{\epsilon \to 0^+} w_\gamma(P_\epsilon^L, I)
\end{align}
already exists for any tree $\gamma$.

To illustrate what happens at the higher loop level, we will work with the $1$-loop graph $\gamma$
\begin{figure}[H]
  \centering
  \begin{tikzpicture}
    \node[style={circle,draw, fill=black}] (A) at (-1, 0) {};
    \node[style={circle,draw, fill=black}] (B) at (1, 0) {};
    \node (L1) at (-2, -1){};
    \node (L2) at (-2, 1){};
    \node (L3) at (2, -1){};
    \node (L4) at (2, 1){};
    
    \path (A) edge[bend right] node[auto,below] {} (B);
    \path (A) edge[bend left] node[auto,above] {} (B);
    \draw (A) --  (L1);
    \draw (A) --  (L2);
    \draw (B) --  (L3);
    \draw (B) --  (L4);

  \end{tikzpicture}.
\end{figure}
\noindent The Feynman weight $w_\gamma(P_\epsilon^L, I)$ is computed by labelling the vertices by the interaction $I$, the edges by the propagator $P_\epsilon^L$, and the tails by the input field $\phi$ 
\begin{figure}[H]
  \centering
  \begin{tikzpicture}
    \node[style={circle,draw}] (A) at (-1, 0) {I};
    \node[style={circle,draw}] (B) at (1, 0) {I};
    \node (L1) at (-2, -1){$\phi$};
    \node (L2) at (-2, 1){$\phi$};
    \node (L3) at (2, -1){$\phi$};
    \node (L4) at (2, 1){$\phi$};
    
    \path (A) edge[bend right] node[auto,below] {$P_\epsilon^L$} (B);
    \path (A) edge[bend left] node[auto,above] {$P_\epsilon^L$} (B);
    \draw (A) --  (L1);
    \draw (A) --  (L2);
    \draw (B) --  (L3);
    \draw (B) --  (L4);

  \end{tikzpicture}
\end{figure}
\noindent and then contracting.  As defined in (\ref{eq:req_prop}), the regularized propagator $P_\epsilon^L$ is the integral of the heat kernel $K_t$ over the time interval $[\epsilon, L]$.  The Feynman weight of $\gamma$ is therefore given by
\begin{align}\label{eq_wgamma}
  w_\gamma(P_\epsilon^L, I)[\phi] = \int_{[\epsilon, L]^2} f_{\gamma, I}(t_1, t_2)[\phi]\, dt_1dt_2
\end{align}
where
\begin{align}
  f_{\gamma, I}(t_1, t_2)[\phi] = \int_{M^2} K_{t_1}(x_1, x_2) K_{t_2}(x_1, x_2) \phi(x_1)^2 \phi(x_2)^2 .
\end{align}
We make the definition $\Phi(x_1, x_2) = \phi(x_1)^2\phi(x_2)^2$ to avoid unecessary detail in subsequent equations.

In the case $M = \RR^n$, the heat kernel is given by
\begin{align}
  K_t(x_1, x_2) = (4\pi t)^{-n/2} e^{-|x_1 - x_2|^2/ 4t}.
\end{align}
Making the substitution for $K_t$ and the change of variables
\begin{align}
  w &= x_1 + x_2\\
  y &= x_1 - x_2
\end{align}
we have
\begin{align}
  f_{\gamma, I}(t_1, t_2)[\phi] = C(t_1t_2)^{-n/2} \int_{(\RR^n)^2} e^{-|y|^2(1/4t_1 + 1/4t_2)} \Phi(w, y).
\end{align}
for some constant $C$.  Let $\Phi^N(w, y)$ be the Taylor polynomial of degree $N$ of $\Phi(w, \cdot)$ and define
\begin{align}
  f^N_\gamma(t_1, t_2)[\phi] = C(t_1t_2)^{-n/2} \int_{(\RR^n)^2} e^{-|y|^2(1/4t_1 + 1/4t_2)}\Phi^N(w, y)
\end{align}
Costello's idea is, roughly, to define
\begin{align}
  w_\gamma^N(P_\epsilon^L, I) = \int_{[\epsilon, L]^2} f^N_{\gamma, I}(t_1, t_2)[\phi]\, dt_1dt_2.
\end{align}
We would then hope that by making $N$ sufficiently large, we can sufficiently control the error $|f_{\gamma, I}(t_1, t_2)[\phi] - f_{\gamma, I}^N(t_1, t_2)[\phi]|$
to force the limit $\lim_{\epsilon \to 0^+} [w_\gamma(P_\epsilon^L, I) - w_\gamma^N(P_\epsilon^L, I)]$ to exist.  

This is the idea in spirit, but there are additional subtleties needed to ensure we can sufficiently bound the error.  In fact, we will need to break up $[\epsilon, L]^2$ into regions and have different approximations and error bounds on each region.  Firstly, we consider regions for each ordering of the set of edges of $\gamma$.  For the Feynman graph we are working with, we consider $t_1 \leq t_2$ and $t_2 \leq t_1$.  This particular graph is symmetric with respect to interchange of the edges, so, we can in fact assume without loss of generality that $t_1 \leq t_2$.  With this assumption in place, we further subdivide the region.  Choose a positive constant $R > 2$ (the reason for this lower bound on $R$ will become clear below) and consider the two regions $t_2^R \leq t_1$ and $t_1 \leq t_2^R$.

If $t_2^R \leq t_1$, we have $(t_1t_2)^{-n/2} \leq t_2^{-(R + 1)n/2}$ and therefore we have the error bound
\begin{align}
  |f_{\gamma, I}(t_1, t_2)[\phi] - f_{\gamma, I}^N(t_1, t_2)[\phi]| &\lesssim t_2^{-(R + 1)n/2}\int_{\RR^n} e^{-|y|^2 / 2t_2}|y|^{N + 1}\\
  &\lesssim t_2^{-(R + 1)n/2}t_2^{(N + 1)/2 + n/2} = t_2^{(N + 1)/2 - Rn/2} 
\end{align}
and then by making $N$ large enough we can ensure that $(N + 1)/2 -  Rn/2 \geq 0$.  Let $N_1$ be such a sufficiently large integer such that for $N = N_1$, the error is bounded by some constant times a non-negative power of $t_2$.

Before considering the region $t_1 \leq t_2^R$, we elucidate the structure of $f^N_{\gamma, I}(t_1, t_2)[\phi]$.  We write $\Phi^N(w, y) = \sum_{|K| \leq N} c_K(w)y^K$ so that 
\begin{align}
  f^N_{\gamma, I}(t_1, t_2)[\phi] = C(t_1t_2)^{-n/2} \sum_{|K| \leq N}   \left(\int_{\RR^n} e^{-|y|^2(1/4t_1 + 1/4t_2)}y^K \right) \int_{\RR^n} c_K.
\end{align}
Applying Theorem \ref{sec:gau_sev}, Wick's theorem on $\RR^n$, to the integral over $y$, we have
\begin{align}
  f^N_{\gamma, I}(t_1, t_2)[\phi] = \sum_{|K| \leq N} F_K(t_1, t_2) \int_{\RR^n}c_K.
\end{align}
Note that for each $K$, $F_K(t_1, t_2)$ is a sum of square roots of rational functions of $t_1$ and $t_2$ and $\int_{\RR^n}c_K$ is a local functional of $\phi$.

If $t_1 \leq t_2^R$, we do something slightly different.  We begin by choosing the subgraph $\gamma'$ of $\gamma$ corresponding to the edge labelled by $t_1$
\begin{figure}[H]
  \centering
  \begin{tikzpicture}
    \node[style=circle,draw,fill=black] (A) at (-1, 0) {};
    \node[style=circle,draw,fill=black] (B) at (1, 0) {};

    \path[thin] (A) edge[bend right] (B);
    \path[line width=3pt] (A) edge[bend left] node[fill=none, above]{$\gamma'$} (B);
    \draw (A) -- (-2, -1);
    \draw (A) -- (-2, 1);
    \draw (B) -- (2, -1);
    \draw (B) -- (2, 1);
  \end{tikzpicture}.
\end{figure}
\noindent and treating the edges outside $\gamma'$ as pairs of input tails.  That is, 
\begin{align}
  f_{\gamma', \gamma, I}(t_1, t_2)[\phi] &:= f_{\gamma, I}(t_1, t_2)[\phi]
\end{align}
but now group the terms differently so that
\begin{align}
  f_{\gamma', \gamma, I}(t_1, t_2)[\phi] &=  C(t_1t_2)^{-n/2}\int_{(\RR^n)^2} e^{-|y|^2/4t_1}\Psi(w, y, t_2)
\end{align}
where $\Psi(w, y, t_2) = e^{-|y|^2/4t_2}\Phi(w, y)$ and $C$ is some constant.  Define
\begin{align}
  f^N_{\gamma', \gamma}(t_1, t_2)[\phi] =  C(t_1t_2)^{-n/2}\int_{(\RR^n)^2} e^{-|y|^2/4t_1}\Psi^N(w, y, t_2)
\end{align}
where $\Psi^N(w, y, t_2)$ is the order $N$ Taylor polynomial of $\Psi(w, \cdot, t_2)$.  Then because the $(N + 1)$-st derivative in the $y$ variable of $\Psi(w, y, t_2)$ is bounded by a constant times $t_2^{-(N + 1)}$, we have the bound
\begin{align}
  |f_{\gamma', \gamma, I}(t_1, t_2)[\phi] - f^N_{\gamma', \gamma, I}(t_1, t_2)[\phi]| &\lesssim (t_1t_2)^{-n/2}t_2^{-(N + 1)}\int_{\RR^n} e^{-|y|^2 / t_1}|y|^{N + 1}\\
                                                                                      &\lesssim (t_1t_2)^{-n/2}t_2^{-(N + 1)} t_1^{(N + 1)/2 + n/2} \\
                                                                                      &\lesssim  t_2^{-n/2 - (N + 1)}t_2^{(R/2)(N + 1)}\\
  &\lesssim t_2^{\left(R/2 - 1\right)(N + 1) - n/2}.
\end{align}
This is why we require that $R > 2$, so that for $N$ sufficiently large $\left(R/2 - 1\right)(N + 1) - n/2 \geq 0$.  Let $N_2$ be such an $N$.

The counterterm is given by 
\begin{align}
  w^{\text{ct}}_\gamma(P_\epsilon^L, I) = \int \limits_{\substack{\epsilon \leq t_1 \leq t_2 \leq L \\ t_2^R \leq t_1}} f_{\gamma, I}^{N_1}(t_1, t_2)[\phi] + \int\limits_{\substack{\epsilon \leq t_1 \leq t_2 \leq L \\ t_1 \leq t_2^R }} f_{\gamma', \gamma, I}^{N_2}(t_1, t_2)[\phi] 
\end{align}
plus the same two terms with $t_1$ and $t_2$ permuted.
By construction, the limit
\begin{align}
  \lim_{\epsilon \to 0^+} [w_\gamma(P_\epsilon^L, I) - w^{\text{ct}}_\gamma(P_\epsilon^L, I)]
\end{align}
exists, as desired.

In the case of the Euclidean half space $\HH^n$, the Dirichlet heat kernel is given by
\begin{align}
  K_t(x_1, x_2) = (4\pi t)^{-n/2}[e^{-|x_1 - x_2|^2/4t} - e^{-|x_1 - x_2^*|^2/4t}]
\end{align}
and the Neumann heat kernel is given by
\begin{align}
  K_t(x_1, x_2) = (4\pi t)^{-n/2}[e^{-|x_1 - x_2|^2/4t} + e^{-|x_1 - x_2^*|^2/4t}]
\end{align}
where $x_2^{*}$ is the reflection through the boundary.  We give a combined analysis of the procedure for both the Dirichlet and Neumann heat kernel by absorbing any signs into constants in the formulas.  We write
\begin{align}
  w_\gamma(P_\epsilon^L, I) &=  \int_{[\epsilon, L]^2} \int_{(\HH^4)^2} K_{t_1}(x_1, x_2) K_{t_2}(x_1, x_2) \Phi(x_1, x_2)
\end{align}
as
\begin{align}
  \int_{[\epsilon, L]^2} \left[ f_{\gamma, I; 0, 0}(t_1, t_2)[\phi] + f_{\gamma, I; 1, 0}(t_1, t_2)[\phi] + f_{\gamma, I; 0, 1}(t_1, t_2)[\phi] + f_{\gamma, I; 1, 1}(t_1, t_2)[\phi]\right]
\end{align}
where
\begin{align}
  f_{\gamma, I; \beta_1, \beta_2}(t_1, t_2)[\phi] &= C(t_1t_2)^{-n/2}\int_{(\HH^4)^2} e^{-d_{\beta_1}(x_1, x_2)^2/4t_1 - d_{\beta_2}(x_1, x_2)^2/4t_2}\Phi(x_1, x_2)
\end{align}
where $C$ is some constant and
\begin{align}
  d_\beta(x_1, x_2)^2 =
  \begin{cases}
    |x_1 - x_2|^2 & \text{if $\beta = 0$}\\
    |x_1 - x^*_2|^2 &\text{if $\beta = 1$}\\
  \end{cases}
\end{align}

First consider the case $(\beta_1, \beta_2) = (0, 0)$.  As before, introduce the coordinates, $w = x_1 + x_2$ and $y = x_1 - x_2$.  Let $w = (\overline{w}, w_n)$, $y = (\overline{y}, y_n)$, $x_1 = (\overline{x}_1, x_{1, n})$, and $x_2 = (\overline{x}_2, x_{2, n})$. 
Note that because $x_{1, n} \geq 0$ and $x_{2, n} \geq 0$, we have $w_n \geq 0$ and $-w_n \leq y_n \leq w_n$. 
We have
\begin{align}
  -d^2_{\beta_1}(w, y)/4t_1 - d^2_{\beta_2}(w, y)/4t_2 = -|y|^2(1/4t_1 + 1/4t_2).
\end{align}
Let $\Phi^N$ be the Taylor expansion to order $N$ at 0 of $\Phi$ in $y$.
Then by definition
\begin{align}
  f_{\gamma, I; 0, 0}^N(t_1, t_2)[\phi] = C(t_1t_2)^{-n/2}\int_{\HH^n_w} \int_{[-w_n, w_n]_{y_n}}\int_{\RR^3_{\overline{y}}} e^{-|y|^2(1/4t_1 + 1/4t_2)} \Phi^N(w, y)
\end{align}
For $t_2^R \leq t_1$, we therefore we get the same bound
\begin{align}
  |f_{\gamma, I; 0, 0}(t_1, t_2)[\phi] - f_{\gamma, I; 0, 0}^N(t_1, t_2)[\phi]| \lesssim t_2^{(N + 1)/2 - Rn/2} 
\end{align}
as in the case of $\RR^n$.

However, there is still more to do.  Upon examining the structure of $f^N_{\gamma, I, 0, 0}(t_1, t_2)[\phi]$, we find that we no longer have a summation of local integrals on $\HH^n$, each weighted by the square root of some rational function in $t_1$ and $t_2$, as in the case of $\RR^n$.  This is because we are only integrating $y_n$ on the interval $[-w_n, w_n]$ rather than $(-\infty, \infty)$.  To remedy the situation, the idea is now that, where necessary, we can also take the Taylor expansion of the fields in $w_n$ and then calculate the integral over $w_n$ by applying Wick's theorem on the half line.  This will then produce the hoped for local integrals on the boundary $\RR^{n - 1}$.

More explicitly, suppose that $\Phi^N(w, y) = \sum_{|\overline{K}| + k \leq N} c_{\overline{K}, k}(w) \overline{y}^Ky_n^k$, and therefore that $f_{\gamma, I; 0, 0}^N(t_1, t_2)[\phi]$ is equal to 
\begin{align}\label{eq:bdrywght}
  \sum_{k = 0}^N\sum_{|\overline{K}| \leq N - k} F_{\overline{K}}(t_1, t_2) \int_{\HH^n_w} c_{\overline{K}, k} \int_{[-w_n, w_n]_{y_n}} e^{-y_n^2(1/4t_1 + 1/4t_2)} y_n^k.
\end{align}
Using that
\begin{align}
  \int_{[-w_n, w_n]_{y_n}} = \int_{(-\infty, \infty)_{y_n}} - \int_{[w_n, \infty)_{y_n}} - \int_{(-\infty, -w_n]_{y_n}}
\end{align}
we have
\begin{align}
  \int_{[-w_n, w_n]_{y_n}}e^{-y_n^2(1/4t_1 + 1/4t_2)}y_n^k &= \frac{\sqrt{2\pi}C_k}{(1/4t_1 + 1/4t_2)^{(k + 1)/2}}\\
  &- 2\int_{[w_n, \infty)_{y_n}}e^{-y_n^2(1/4t_1 + 1/4t_2)}y_n^k
\end{align}
where recall that $C_k = (k - 1)!!$ for $k$ even and $C_k = 0$ for $k$ odd.

Thus for $f_{\gamma, I; 0, 0}^N(t_1, t_2)[\phi]$ we have two types of terms.  For the first type of term, we are finished, having already produced a sum of local functionals on $\HH^n$.  For the second type of term we define $c_K^m = \sum_{i = 0}^m d_{K, i}(\overline{w}) w_n^i$, the Taylor expansion in $w_n$ to order $m$ of $c_K := c_{\overline{K}, k}$.  Then we define $f^{N, m}_{\gamma, I, 0, 0}(t_1, t_2)[\phi]$ to be the result of substituting $c_K^m$ for $c_K$ in $f^N_{\gamma, I, 0, 0}(t_1, t_2)[\phi]$.
Note that
\begin{align}
  w_n^i\int_{[w_n, \infty)_{y_n}}e^{-y_n^2(1/4t_1 + 1/4t_2)}y_n^k
\end{align}
is integrable over $w_n$ over the half line $\RR_{\geq 0}$.
Therefore we have
\begin{align}
  f^{N, m}_{\gamma, I, 0, 0}(t_1, t_2)[\phi] &= \sum_{|K| \leq N} \tilde{F}_K(t_1, t_2)\int_{\HH^n_w} c_K + \sum_{|K| \leq N} \sum_{i = 0}^m G_{K, i}(t_1, t_2) \int_{\RR_{\overline{w}}^{n - 1}} d_{K, i},
\end{align}
with the $\tilde{F}_K(t_1, t_2)$ being sums of square roots of rational functions and the $G_{K, i}(t_1, t_2)$ not seeming to have a closed form expression.  We claim that by making $m$ sufficiently large we can ensure that $|f^N_{\gamma, I, 0, 0}(t_1, t_2)[\phi] - f^{N, m}_{\gamma, I, 0, 0}(t_1, t_2)[\phi]|$ is bounded by a constant times a sufficiently large power of $t_2$.  More details about this will be given in Section \ref{sec:count_upp_half}.

The cases $(\beta_1, \beta_2) = (1, 0)$ and $(\beta_1, \beta_2) = (0, 1)$ are similar so we just consider $(\beta_1, \beta_2) = (1, 0)$.  We find that $f^N_{\gamma, I, 1, 0}(t_1, t_2)[\phi]$ is equal to
\begin{align}
  \sum_{k = 0}^N\sum_{|\overline{K}| \leq N - k}F_{\overline{K}}(t_1, t_2) \int_{\HH^n_w} c_K e^{-w_n^2/4t_1}\int_{[-w_n, w_n]_{y_n}}  y_n^k e^{-y_n^2/4t_2}.
\end{align}
Wick's theorem on $[-w_n, w_n]$ implies that $f_{\gamma, I; 1, 0}^N(t_1, t_2)[\phi]$ is equal to 
\begin{align}
  &\sum_{|K| \leq N} F_K(t_1, t_2)\int_{\HH^n} c_K e^{-w_n^2/4t_1} \int_{[-w_n, w_n]_{y_n}}e^{-y_n^2/4t_2}\\
  + &\sum_{|K| \leq N}\sum_{l = 0}^kF_{l, K}(t_1, t_2) \int_{\HH^n_w}c_K w_n^le^{-w_n^2(1/4t_1 + 1/4t_2)}.
\end{align}
For both terms, we replace $c_K$ with its Taylor expansion in $w_n$ to order $m$ and then carry out the integral over $w_n$ over $\RR_{\geq 0}$.

Lastly, in the case $(\beta_1, \beta_2) = (1, 1)$, we can introduce the coordinates, $\overline{w} = \overline{x}_1 + \overline{x}_2$, $\overline{y} = \overline{x}_1 - \overline{x}_2$, $w_n = x_{1, n} - x_{2, n}$, and $y_n = x_{1, n} + x_{2, n}$.  Let $w = (\overline{w}, w_n)$, $y = (\overline{y}, y_n)$, $x_1 = (\overline{x}_1, x_{1, n})$, and $x_2 = (\overline{x}_2, x_{2, n})$.  We then find we are in the same situation as when $(\beta_1, \beta_2) = (0, 0)$

On the set where $t_1 \leq t_2^R$, we introduce the notation $f_{\gamma', \gamma, I, \beta_1, \beta_2}(t_1, t_2)[\phi]$ for $f_{\gamma, I,  \beta_1, \beta_2}(t_1, t_2)[\phi]$ which we write as
\begin{align}
  C(t_1t_2)^{-n/2}\int_{(\HH^n)^2} e^{-d_{\beta_1}(x_1, x_2)^2/4t_1} \Psi_{\beta_2}(x_1, x_2, t_2)
\end{align}
where
\begin{align}
  \Psi_{\beta_2}(x_1, x_2, t_2) = e^{-d_{\beta_2}(x_1, x_2)^2/4t_2}\Phi(x_1, x_2)
\end{align}
and construct $f^N_{\gamma', \gamma, I, \beta_1, \beta_2}(t_1, t_2)[\phi]$ similarly to the way we did on $\RR^n$ by Taylor expanding $\Psi_{\beta_2}$ in $\overline{y}$ and $y_n$.  We can bound the error by a similar procedure, but as in the case of $t_2^R \leq t_1$, $f^N_{\gamma', \gamma, I; \beta_1, \beta_2}(t_1, t_2)[\phi]$ will not be a sum of local integrals of $\phi$ each multiplied by the square root of a rational function in $t_1$ and $t_2$.  So, we will need to remedy this when necessary with an additional Taylor expansion in $w_n$ and application of Wick's theorem.
More details about this will be given in Section \ref{sec:count_upp_half}.

\subsection{Covering \texorpdfstring{$(0, \infty)^{|E(\gamma)|}$}{(0, infty) prod |E(gamma)|}}
\label{sec:covering_egamma}
When working with a more general Feynman graph, it becomes necessary to cover $(0, \infty)^{|E(\gamma)|}$ and in particular $(\epsilon, L)^{|E(\gamma)|}$ by sets generalizing those we used in the previous section when we had $|E(\gamma)| = 2$.

We shall assume that $|E(\gamma)| \geq 1$.  We denote $\mathbf{t} = (t_1, \dots, t_k)$.  For each permutation $\sigma$ of the set of $|E(\gamma)|$ elements, there is a subset of $(0, \infty)^{|E(\gamma)|}$
\begin{align}
  S_\sigma = \{\mathbf{t} \in (0, \infty)^{|E(\gamma)|} : t_{\sigma(1)} < \dots < t_{\sigma(|E(\gamma)|)}\}.
\end{align}
and it is clear that
\begin{align}
  \cup_{\sigma \in S_k} \overline{S_\sigma} = (0, \infty)^{|E(\gamma)|}.
\end{align}

The procedure we are about to describe should be applied separately within each of the $S_\sigma$, but we fix
\begin{align}
  S_{\text{id}} = \{\mathbf{t} \in (0, \infty)^{|E(\gamma)|} : t_1 < \dots < t_{|E(\gamma)|}\}.
\end{align}
for notational clarity.  Replace $t_i$ with $t_{\sigma(i)}$ to recover the case of a general permutation $\sigma$.

We assume that $R > 1$. 
\begin{defin}
  For $i, j \in \{1, \dots, |E(\gamma)|\}$ with $i < j$, define
  \begin{align}
    B_R^{i, j} &= \{ \mathbf{t} \in S_{\text{id}} :  t_i < t_j^R \}.
  \end{align}
  and define
  \begin{align}
    C_R^{i, j} = \{ \mathbf{t} \in S_{\text{id}} :  t_j^R < t_i \}.
  \end{align}

And lastly for $j \in \{2, \dots, |E(\gamma)| - 1\}$, define
\begin{align}
  A_R^j &= B_R^{j, j + 1} \cap C_R^{1, j}\\
  &= \{\mathbf{t} \in S_{\text{id}} : t_j < t_{j + 1}^R \text{ and } t_j^R < t_1\}
\end{align}
and let $A_R^{1} = B_R^{1, 2}$ and $A^{|E(\gamma)|} = C_R^{1, |E(\gamma)|}$.
\end{defin}
Note that $B_R^{i, j} = S_{\text{id}} \setminus \overline{C_R^{i, j}}$.  A couple of facts about the sets $B_R^{i, j}$ and $C_R^{i, j}$ are collected in the following proposition:
\begin{prop}
  For $i_1 < i_2 < i_3$.
  \begin{align}
    B_R^{i_1, i_2} \cap  B_S^{i_2, i_3} \subset B_{RS}^{i_1, i_3}  
  \end{align}
  and similarly
  \begin{align}
    C_R^{i_1, i_2} \cap  C_S^{i_2, i_3} \subset C_{RS}^{i_1, i_3}
  \end{align}
\end{prop}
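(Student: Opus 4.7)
The plan is to chain the two defining inequalities using monotonicity of the power function. Both inclusions reduce to straightforward manipulation of the inequalities defining the $C_R^{i,j}$ and $D_R^{i,j}$ regions, using only the facts that (i) all coordinates $t_i$ are positive, so $x\mapsto x^R$ is strictly increasing on $(0,\infty)$ for any $R>0$, and (ii) $(t^S)^R = t^{SR}$.

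Concretely, for the first inclusion, suppose $\mathbf{t}\in C_R^{i_1,i_2}\cap C_S^{i_2,i_3}$. Then by definition $t_{i_2}^R < t_{i_1}$ and $t_{i_3}^S < t_{i_2}$. Raising the second inequality to the $R$-th power preserves it, giving $t_{i_3}^{SR} < t_{i_2}^R$. Chaining with the first inequality yields $t_{i_3}^{RS} < t_{i_1}$, which is exactly the condition $\mathbf{t}\in C_{RS}^{i_1,i_3}$.

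For the second inclusion, suppose $\mathbf{t}\in D_R^{i_1,i_2}\cap D_S^{i_2,i_3}$, so that $t_{i_2}^R > t_{i_1}$ and $t_{i_3}^S > t_{i_2}$. Again raise the second inequality to the $R$-th power to get $t_{i_3}^{SR} > t_{i_2}^R$, and chain with the first to obtain $t_{i_3}^{RS} > t_{i_1}$, i.e.\ $\mathbf{t}\in D_{RS}^{i_1,i_3}$.

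There is no real obstacle here; the only thing to note is that the ambient containment in $S_{\text{id}}$ is automatic from the hypothesis, and that the argument does not require $R,S>1$ individually (only positivity), though in the applications both will be taken greater than $1$. The proof is essentially the observation that the relations ``$t_j^R$ dominates $t_i$'' and ``$t_i$ dominates $t_j^R$'' behave transitively under composition of exponents.
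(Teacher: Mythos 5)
Your proof is correct and follows the same route as the paper's: chain the two defining inequalities by raising the inner one to the $R$-th power and using $(t^S)^R = t^{RS}$, with the second inclusion handled identically. The extra remarks about monotonicity of $x \mapsto x^R$ on $(0,\infty)$ and only needing $R, S > 0$ are accurate but not load-bearing.
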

\begin{proof}
  If $\mathbf{t} \in C_R^{i_1, i_2} \cap C_S^{i_2, i_3}$, then $t_{i_2}^R < t_{i_1}$ and $t_{i_3}^S < t_{i_2}$.  This implies that
  \begin{align}
    t_{i_3}^{RS} < t_{i_1}.
  \end{align}
  i.e.\ $\mathbf{t} \in C_{RS}^{i_1, i_3}$.
  The proof of the first inclusion is similar.
\end{proof}

The following two statements are trivially true:
\begin{prop}
  For $i, j \in \{1, \dots, |E(\gamma)|\}$ with $i < j$, define
  \begin{align}
    \tilde{B}_R^{i, j} = \{ \mathbf{t} \in S_{\text{id}} : t_\alpha < t^R_\beta\text{, for $\alpha \leq i$ and $j \leq \beta$} \}.
  \end{align}
  and
  \begin{align}
    \tilde{C}_R^{i, j} = \{ \mathbf{t} \in S_{\text{id}} :  t_\alpha^R < t_\beta \text{, for $\alpha \leq i$ and $j \leq \beta$} \}.\}.
  \end{align}
  Then $\tilde{B}_R^{i, j} = B_R^{i, j}$ and $\tilde{C}_R^{i, j} = C_R^{i, j}$
\end{prop}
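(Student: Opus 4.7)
The plan is to observe that both equalities follow from a single principle: the ordering $t_1 < \cdots < t_k$ defining $S_{\text{id}}$, together with the monotonicity of $x \mapsto x^R$ on $(0,\infty)$ (valid since $R > 0$), allows a single ``corner'' inequality to be propagated to the entire family of inequalities indexed by admissible pairs $(\alpha,\beta)$.

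For both equalities, the inclusion $\tilde{X} \subseteq X$ is immediate: one specializes the universally quantified inequality defining $\tilde{X}$ to the corner values, namely $(\alpha,\beta) = (j, j+1)$ for $\tilde{B}_R^j$ and $(\alpha,\beta) = (i, j)$ for $\tilde{C}_R^{i,j}$, which recovers the defining inequality of $X$. The content is the reverse inclusion. For $B_R^j \subseteq \tilde{B}_R^j$, I would fix $\mathbf{t} \in B_R^j$ (so $t_j < t_{j+1}^R$) and an arbitrary pair $\alpha \leq j$, $\beta \geq j+1$. The fact that $\mathbf{t} \in S_{\text{id}}$ gives $t_\alpha \leq t_j$ and $t_{j+1} \leq t_\beta$, and monotonicity of the $R$-th power on $(0,\infty)$ yields $t_{j+1}^R \leq t_\beta^R$, so that
\begin{align*}
  t_\alpha \leq t_j < t_{j+1}^R \leq t_\beta^R,
\end{align*}
which is precisely the condition for membership in $\tilde{B}_R^j$.

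The argument for $C_R^{i,j} \subseteq \tilde{C}_R^{i,j}$ follows the same template: use the $S_{\text{id}}$-ordering to compare $t_\alpha$ to $t_i$ (for $\alpha \leq i$) and $t_\beta$ to $t_j$ (for $\beta \geq j$), and then apply monotonicity of $x \mapsto x^R$ on $(0,\infty)$ to chain these comparisons with the corner inequality defining $C_R^{i,j}$ and obtain the inequality at the generic pair $(\alpha,\beta)$. There is no real obstacle in either argument, which is presumably why the paper flags these statements as ``trivially true''; the only point worth stressing is that both the strict order structure of $S_{\text{id}}$ and the positivity of $R$ are essential for the monotone extension from the corner pair to arbitrary admissible pairs to be valid.
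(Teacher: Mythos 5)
Your treatment of $\tilde{B}_R^j = B_R^j$ is correct and is surely the intended argument: the pair $(\alpha,\beta)=(j,j+1)$ really is the hardest case, since for $\alpha\le j$ and $\beta\ge j+1$ the ordering on $S_{\text{id}}$ and monotonicity of $x\mapsto x^R$ give $t_\alpha\le t_j< t_{j+1}^R\le t_\beta^R$. (The paper offers no proof, declaring both statements trivially true, so there is nothing to compare against on that half.)

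The $\tilde{C}$ half is where your proposal breaks down, because the statement as printed is not actually true and your proof glosses over the exact point where this should have surfaced. Specializing the condition defining $\tilde{C}_R^{i,j}$ to $(\alpha,\beta)=(i,j)$ yields $t_i^R<t_j$, which is \emph{not} the defining inequality $t_j^R<t_i$ of $C_R^{i,j}$; your claim that the corner specialization ``recovers the defining inequality of $X$'' is false here. The two conditions are genuinely inequivalent: take $k=2$, $i=1$, $j=2$, $R=2$ and $\mathbf{t}=(1/2,\,9/10)\in S_{\text{id}}$; then $t_1^2=1/4<9/10=t_2$, so $\mathbf{t}\in\tilde{C}_2^{1,2}$ as written, while $t_2^2=81/100>1/2=t_1$, so $\mathbf{t}\notin C_2^{1,2}$. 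Hence $\tilde{C}_R^{i,j}\not\subseteq C_R^{i,j}$ with the printed quantifier ranges. (The other inclusion $C_R^{i,j}\subseteq\tilde{C}_R^{i,j}$ does hold, but not by the chain you describe: from $t_\alpha\le t_i$, $t_\beta\ge t_j$ and $t_j^R<t_i$ you still need the extra step $t_\alpha^R\le t_i^R< t_j^R<t_i<t_j\le t_\beta$.) The source of the trouble is that the roles of $i$ and $j$ in the quantifier ranges are swapped in the paper's definition of $\tilde{C}_R^{i,j}$: since the powered variable sits on the \emph{small} side of $t_j^R<t_i$, the family of inequalities equivalent to it is $t_\alpha^R<t_\beta$ for $\alpha\le j$ and $i\le\beta$, with hardest case $(\alpha,\beta)=(j,i)$. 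For that corrected statement your template works verbatim, via $t_\alpha^R\le t_j^R<t_i\le t_\beta$. A careful blind attempt should have detected and flagged this discrepancy rather than asserting that the corner matches.
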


\begin{prop}
  For $j_1 \leq j_2$, $C_R^{i, j_2} \subseteq C_R^{i, j_1}$.
\end{prop}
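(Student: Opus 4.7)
The statement as written reads ``For $j_1 \le j_2$, if $C_R^{i,j_1} \supseteq C_R^{i,j_2}$'' which I read as the straightforward monotonicity claim that $C_R^{i,j_1} \supseteq C_R^{i,j_2}$ whenever $i < j_1 \le j_2$. My plan is to verify this directly from the definitions, with no induction or auxiliary lemma needed.

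First I would unwind the hypothesis: an element $\mathbf{t} \in C_R^{i,j_2}$ lies in $S_{\text{id}}$ (so $0 < t_1 < t_2 < \dots < t_k$) and satisfies $t_{j_2}^R < t_i$. The task is to verify the weaker inequality $t_{j_1}^R < t_i$, which is exactly the membership condition for $C_R^{i,j_1}$.

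The key observation is monotonicity on two levels. Since $\mathbf{t} \in S_{\text{id}}$ and $j_1 \le j_2$, we have $t_{j_1} \le t_{j_2}$. Because $t_{j_1}, t_{j_2} > 0$ and $R > 1 > 0$, the map $x \mapsto x^R$ is strictly increasing on $(0,\infty)$, so $t_{j_1}^R \le t_{j_2}^R$. Combining with the hypothesis gives
\begin{equation*}
  t_{j_1}^R \le t_{j_2}^R < t_i,
\end{equation*}
so $\mathbf{t} \in C_R^{i,j_1}$, which proves the claimed inclusion.

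There is essentially no obstacle here; the only thing worth flagging is the implicit requirement $i < j_1$ (so that the symbol $C_R^{i,j_1}$ is even defined per the preceding definition). Once that is assumed, the argument is a one-line consequence of the order $t_1 < \dots < t_k$ on $S_{\text{id}}$ together with monotonicity of $x \mapsto x^R$. I would present the proof in two or three lines at most.
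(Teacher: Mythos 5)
Your proof is correct, and it is exactly the one-line monotonicity argument the paper implicitly relies on (the proposition is stated without proof there, as self-evident from $t_{j_1} \le t_{j_2}$ in $S_{\text{id}}$ and the monotonicity of $x \mapsto x^R$ on $(0,\infty)$). Your reading of the statement, including the spurious ``if'' and the implicit requirement $i < j_1$, is also the intended one.
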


The next two proposions are needed to prove Theorem \ref{thm:EIdisjoint}.
\begin{prop}\label{prop:c_r_d_r}
  $C_R^{i, j} \cap B_R^{k, l} = \emptyset$ for $i \leq k$ and $l \leq j$.
\end{prop}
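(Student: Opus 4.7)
The plan is a direct chain of inequalities using the ordering on $S_{\text{id}}$ together with monotonicity of $x \mapsto x^R$ on $(0, \infty)$.

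Suppose toward contradiction that some $\mathbf{t}$ lies in $C_R^{i,j} \cap D_R^{l,m}$. By definition of the two sets, I would extract the two strict inequalities
\begin{align*}
  t_j^R < t_i, \qquad t_l < t_m^R.
\end{align*}
Since $\mathbf{t} \in S_{\text{id}}$, the coordinates are strictly ordered: $t_1 < t_2 < \dots < t_k$. The hypothesis $i \leq l$ then gives $t_i \leq t_l$, and the hypothesis $m \leq j$ gives $t_m \leq t_j$. Because all coordinates are positive and $R > 1$ (so $x \mapsto x^R$ is monotone increasing on $(0,\infty)$), the latter upgrades to $t_m^R \leq t_j^R$.

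Chaining these together yields
\begin{align*}
  t_l < t_m^R \leq t_j^R < t_i \leq t_l,
\end{align*}
which is the desired contradiction, establishing that $C_R^{i,j} \cap D_R^{l,m} = \emptyset$.

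There is no real obstacle here; the only subtlety worth flagging is the use of monotonicity of the power map, which relies on $R > 1$ (assumed at the start of the section) and positivity of the $t_i$'s (automatic from $\mathbf{t} \in (0,\infty)^k$). Everything else is just tracking the indices through the defining inequalities.
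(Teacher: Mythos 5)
Your proof is correct and follows essentially the same argument as the paper: both derive a contradiction by chaining the defining inequalities $t_j^R < t_i$ and $t_l < t_m^R$ with the ordering of coordinates in $S_{\text{id}}$. Your version is in fact slightly more careful, correctly using the non-strict inequality $t_m^R \leq t_j^R$ (needed when $m = j$) and explicitly invoking the monotonicity of $x \mapsto x^R$ for $R > 1$.
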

\begin{proof}
  If $t_j^R < t_i$ and $t_k < t_l^R$, then
  \begin{align}
    t_i \leq t_k < t_l^R \leq t_j^R < t_i,
  \end{align}
  a contradiction.
\end{proof}

\begin{defin}\label{def:E_RI}
  Fix some sequence of positive integers $s_0, \dots, s_m$.  We consider sequences $I$ of the form $1 = i_0 < i_1 < \dots < i_m \leq E(\gamma)$, where $m \leq E(\gamma) - 1$.  For any such sequence $I$, we define the sets $E^I_R = \cap_{j = 0}^m E^I_{R, j}$, where $E^I_{R, j}$ is defined such that for $j < m$
  \begin{align}
    E^I_{R, j} =
    \begin{cases}
      S_{id} & \text{if $j = 0$} \\
      C^{i_{j - 1}, i_{j}}_{R^{s_j}} \cap B^{i_{j - 1}, i_j + 1}_{R^{s_j}} & \text{otherwise}
    \end{cases}
  \end{align}
  and for $j = m$
  \begin{align}
    E^I_{R, m} = 
    \begin{cases}
      B_{R^{s_1}}^{1, 2} &\text{if $m = 0$}\\
      C^{i_{m - 1}, i_m}_{R^{s_m}} \cap B^{i_{m - 1}, i_m + 1}_{R^{s_m}} \cap B_{R^{s_{m + 1}}}^{i_m, i_m + 1} & \text{if $i_m \neq |E(\gamma)|$ and $m > 0$}\\
      C^{i_{m - 1}, i_m}_{R^{s_m}} & \text{if $i_m = |E(\gamma)|$}.
    \end{cases}
  \end{align}

\end{defin}
\begin{thm}\label{thm:closures_cover}
  The closures $\overline{E}_R^I$ form a cover of $(0, \infty)^{|E(\gamma)|}$. 
\end{thm}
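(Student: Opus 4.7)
The plan is to reduce by symmetry to the sector $\overline{S_{\text{id}}}$. Since $(0,\infty)^k = \bigcup_{\sigma \in S_k} \overline{S_\sigma}$ and Definition \ref{def:E_RI} has been set up so that the same construction applies in each $\overline{S_\sigma}$ after relabelling indices via $\sigma$, it suffices to fix $\sigma = \mathrm{id}$ and show that every $\mathbf{t} \in S_{\text{id}}$ lies in $\overline{E_R^I}$ for some admissible sequence $I = (i_0, \ldots, i_m)$.

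Given such a $\mathbf{t}$, I would build $I$ by a greedy algorithm. Start with $i_0 = 1$. Having defined $i_{j-1} \leq k$, halt with $m = j - 1$ if either $i_{j-1} = k$, or if the $B$-type condition $t_{i_{j-1}} < t_{i_{j-1}+1}^{R^{s_j}}$ already holds at that index. Otherwise $t_{i_{j-1}+1}^{R^{s_j}} \leq t_{i_{j-1}}$, so the set $\{i > i_{j-1} : t_i^{R^{s_j}} \leq t_{i_{j-1}}\}$ is non-empty, and I take $i_j$ to be its maximum and continue. The sequence $(i_j)$ is strictly increasing and bounded by $k$, so the process terminates in at most $k$ steps, producing a valid $I$ with $1 = i_0 < i_1 < \cdots < i_m \leq k$ and $m \leq k - 1$.

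To conclude I would verify that this $I$ satisfies $\mathbf{t} \in \overline{E_R^I}$. For each $j \leq m$, the choice of $i_j$ gives $t_{i_j}^{R^{s_j}} \leq t_{i_{j-1}}$, placing $\mathbf{t}$ in $\overline{C^{i_{j-1}, i_j}_{R^{s_j}}}$, and when $i_j < k$ the maximality of $i_j$ yields the strict inequality $t_{i_j+1}^{R^{s_j}} > t_{i_{j-1}}$, so $\mathbf{t} \in D^{i_{j-1}, i_j+1}_{R^{s_j}}$. If the algorithm halted with $i_m < k$, the halting criterion is exactly $\mathbf{t} \in B^{i_m}_{R^{s_{m+1}}}$, which also handles the degenerate case $m = 0$ (yielding $E_R^{(1)} = B^1_{R^{s_1}}$). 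The only real obstacle is purely bookkeeping: matching the algorithm's three terminating configurations ($m = 0$; $m \geq 1$ with $i_m < k$; $m \geq 1$ with $i_m = k$) to the three clauses of Definition \ref{def:E_RI}, and noting that the closures absorb the non-strict inequality in the $C$-conditions while the $D$- and $B$-conditions automatically hold strictly thanks to the maximality criterion.
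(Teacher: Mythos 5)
Your proposal is correct and follows essentially the same argument as the paper: the paper's proof is exactly this greedy construction (halt if the $B$-condition holds or the index reaches $k$, otherwise pass to the largest $i_j$ with $t_{i_j}^{R^{s_j}} \leq t_{i_{j-1}}$), written out step by step rather than as a formal algorithm. Your version merely makes the termination and the case-matching with Definition \ref{def:E_RI} more explicit.
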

\begin{proof}
  If $t_1 \leq t_2^{R^{s_1}}$, then $\mathbf{t} \in \overline{B}^{1, 2}_R$.  Thus letting $m = 0$, we have $\mathbf{t} \in \overline{E}_R^I$.

  Otherwise, let $i_1$ be the largest integer such that $t^{R^{s_1}}_{i_1} \leq t_1 = t_{i_0}$.  Then $\mathbf{t} \in \overline{C}^{i_0, i_1}_{R^{s_1}}$.  If $i_1 = |E(\gamma)|$, letting $m = 1$, we have $\mathbf{t} \in \overline{E}_R^I$.  If $i_1 < |E(\gamma)|$, then $\mathbf{t} \in B^{i_0, i_1 + 1}_{R^{s_1}}$.  If $t_{i_1} \leq t^R_{i_1 + 1}$ then $\mathbf{t} \in \overline{B}^{i_1, i_1 + 1}_{R^{s_2}}$ and letting $m = 1$, we have $\mathbf{t} \in \overline{E}_R^I$.

  Otherwise, let $i_2$ be the largest integer such that $t^{R^{s_2}}_{i_2} \leq t_{i_1}$.  Then $i_2 \in \overline{C}^{i_1, i_2}_{R^{s_2}}$.  If $i_2 = |E(\gamma)|$, letting $m = 2$, we have $\mathbf{t} \in \overline{E}_R^I$.  If $i_2 < |E(\gamma)|$, then $\mathbf{t} \in B^{i_1, i_2 + 1}_{R^{s_2}}$.  If $t_{i_2} \leq t^{R^{s_3}}_{i_2 + 1}$ then $\mathbf{t} \in \overline{B}^{i_2, i_2 + 1}_{R^{s_3}}$ and letting $m = 2$, we have $\mathbf{t} \in \overline{E}_R^I$.
  
  And so on \dots
\end{proof}

\begin{thm}\label{thm:EIdisjoint}
  The sets $E^I_R$ are disjoint.
\end{thm}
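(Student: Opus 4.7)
The plan is to argue by contradiction: suppose $\mathbf{t} \in E^I_R \cap E^{I'}_R$ for two distinct sequences $I = (1 = i_0 < \dots < i_m)$ and $I' = (1 = i'_0 < \dots < i'_{m'})$, and let $j \geq 1$ be the smallest index at which they disagree (such a $j$ exists because $i_0 = i'_0 = 1$, and if no position disagreed we would already have $m = m'$ and $I = I'$). The argument then splits into the case where both sequences have a $j$-th entry and the case where one is a proper prefix of the other; in each case the contradiction will come from one of the two disjointness propositions \ref{prop:c_r_d_r} and \ref{prop:b_r_c_r}.

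\textbf{Case A:} Both $I$ and $I'$ have a $j$-th entry, and without loss of generality $i_j < i'_j$. From $E^{I'}_{R,j} \subseteq C^{i'_{j-1}, i'_j}_{R^{s_j}}$ together with the minimality of $j$ (so $i'_{j-1} = i_{j-1}$), one extracts $\mathbf{t} \in C^{i_{j-1}, i'_j}_{R^{s_j}}$. From the $I$ side one wants $\mathbf{t} \in D^{i_{j-1}, i_j + 1}_{R^{s_j}}$, which is present in the definition of $E^I_{R,j}$ precisely when $j < m$ or when $j = m$ with $i_m \neq k$. The degenerate subcase $j = m$ with $i_m = k$ does not occur, since then $i'_j > i_j = k$ would contradict $i'_j \leq k$. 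With both inclusions in hand, Proposition \ref{prop:c_r_d_r} applied with identifications $(i_{j-1}, i'_j)$ for $C$ and $(i_{j-1}, i_j + 1)$ for $D$ gives the contradiction; the hypothesis $i_j + 1 \leq i'_j$ follows at once from $i_j < i'_j$.

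\textbf{Case B:} One sequence is a proper prefix of the other; by symmetry assume $i_l = i'_l$ for $l \leq m$ with $m < m'$. If $i_m = k$, then $i'_{m+1} > i'_m = k$ is impossible, so $i_m < k$ (which also covers the $m = 0$ subcase $i_0 = 1 < k$), and hence $E^I_{R,m}$ contains the factor $B^{i_m}_{R^{s_{m+1}}}$. Meanwhile $E^{I'}_{R,m+1} \subseteq C^{i_m, i'_{m+1}}_{R^{s_{m+1}}}$. Proposition \ref{prop:b_r_c_r} with $l = i_m$, $i = i_m$, and second $C$-index $i'_{m+1}$ then forces the intersection to be empty, since $i'_{m+1} > i'_m = i_m$.

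The main obstacle I anticipate is the non-uniform structure of $E^I_{R,m}$: when $i_m = k$ both the $B$- and $D$-factors are dropped from the definition, so one must confirm that the Case A and Case B arguments still close up. The key observation making this uniform is that $i_m = k$ forces the sequence to terminate at position $m$ and forbids any strictly larger $i'_j$ from existing, so those boundary subcases are automatically vacuous and the two propositions suffice.
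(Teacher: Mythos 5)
Your proof is correct and follows essentially the same route as the paper's: the two cases (first disagreement at a common position, handled by Proposition \ref{prop:c_r_d_r}; one sequence a proper prefix of the other, handled by Proposition \ref{prop:b_r_c_r}) are exactly the paper's two cases. Your treatment is in fact slightly more careful than the paper's, since you explicitly verify that the degenerate subcases where $i_m = k$ (and the $D$- or $B$-factors are absent from $E^I_{R,m}$) are vacuous.
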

\begin{proof}
  We prove this by induction.  Consider the distinct sequences $1 = i_0 < i_1 < \dots < i_m \leq E(\gamma)$ and $1 = j_0 < j_1 < \dots < j_n \leq E(\gamma)$, where without loss of generality we assume that $m \leq n$.  

Suppose that $i_l \neq j_l$, but $i_1 = j_1, \dots, i_{l - 1} = j_{l - 1}$.  Then
\begin{align}
  E^I_{R, l} \cap E^J_{R, l} &\subseteq C^{i_{l - 1}, i_{l}}_{R^{s_l}} \cap B^{i_{l - 1}, i_l + 1}_{R^{s_l}} \cap C^{i_{l - 1}, j_{l}}_{R^{s_l}} \cap B^{i_{l - 1}, j_l + 1}_{R^{s_l}} \\
  &= \emptyset. \nonumber
\end{align}
because $C_R^{i, j} \cap D_R^{i, k} = \emptyset$ for $k \leq j$ by Proposition \ref{prop:c_r_d_r}.

It is also possible that $i_1 = j_1, \dots, i_m = j_m$, but $m < n$.  Then
\begin{align}
  E^I_{R, m} \cap E^J_{R, m + 1} &\subseteq  B_{R^{s_{m + 1}}}^{i_m, i_m + 1} \cap C^{i_m, j_{m +1}}_{R^{s_{m + 1}}} \\
  &= \emptyset.\nonumber
\end{align}
again by Proposition \ref{prop:c_r_d_r}.
\end{proof}

Now specialize to a specific sequence $s_0 = 1$ and $s_i = 2^{i - 1}$ for $i > 0$.
\begin{thm}\label{thm:ERcontainedinAi}
  Consider the sequence $1 = i_0 < i_1 < \dots < i_m \leq |E(\gamma)|$.  Then
  \begin{align}
    E_R^I \subseteq A^{i_m}_{R^{2^{m}}}
  \end{align}
\end{thm}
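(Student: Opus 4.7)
The plan is to read off from Definition \ref{def:E_RI} all the $C$-type inclusions satisfied by $E_R^I$, combine them using the proposition $C_R^{i_1,i_2}\cap C_S^{i_2,i_3}\subseteq C_{RS}^{i_1,i_3}$ to produce a single $C^{1,i_m}$ constraint, and then reconcile the resulting exponent $R^{2^m-1}$ with the target exponent $R^{2^m}$ by showing $t_{i_m}<1$ on $E_R^I$. The $B$-type factor of $A^{i_m}_{R^{2^m}}$ (when $i_m<k$) will be immediate from the last piece $E_{R,m}^I$ of the intersection.

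First, dispose of the base case $m=0$: by definition $E_R^I=E_{R,0}^I=B_{R^{s_1}}^1=B_R^1=A_R^1=A_{R^{2^0}}^{i_0}$, so the inclusion holds with equality.

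For $m\geq 1$, Definition \ref{def:E_RI} gives $E_R^I\subseteq\bigcap_{j=1}^m C^{i_{j-1},i_j}_{R^{s_j}}$ regardless of whether $i_m=k$ (the $D$- and $B$-factors are discarded for now). Iterating the proposition $C_R^{i_1,i_2}\cap C_S^{i_2,i_3}\subseteq C_{RS}^{i_1,i_3}$ collapses this intersection telescopically to
\[
\bigcap_{j=1}^m C^{i_{j-1},i_j}_{R^{s_j}}\ \subseteq\ C^{1,i_m}_{R^{s_1+\cdots+s_m}}\ =\ C^{1,i_m}_{R^{2^m-1}},
\]
since $s_1+\cdots+s_m=1+2+\cdots+2^{m-1}=2^m-1$.

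The apparent off-by-one in the exponent is the crux of the argument. For any $\mathbf{t}\in E_R^I$ the display above gives $t_{i_m}^{R^{2^m-1}}<t_1$, while $\mathbf{t}\in S_{\mathrm{id}}$ forces $t_1\leq t_{i_m}$; since $m\geq 1$ implies $i_m\geq 2$, combining these yields $t_{i_m}^{R^{2^m-1}}<t_{i_m}$, and as $R^{2^m-1}>1$ we conclude $t_{i_m}<1$. Then $R^{2^m}>R^{2^m-1}$ together with $t_{i_m}<1$ gives $t_{i_m}^{R^{2^m}}<t_{i_m}^{R^{2^m-1}}<t_1$, i.e., $\mathbf{t}\in C^{1,i_m}_{R^{2^m}}$. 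If $i_m=k$, then $A^k_{R^{2^m}}=C^{1,k}_{R^{2^m}}$ and we are done; if $i_m<k$, Definition \ref{def:E_RI} also gives $E_R^I\subseteq B^{i_m}_{R^{s_{m+1}}}=B^{i_m}_{R^{2^m}}$ (since $s_{m+1}=2^m$), and intersecting with the $C^{1,i_m}_{R^{2^m}}$ inclusion yields $E_R^I\subseteq A^{i_m}_{R^{2^m}}$. The main obstacle is the exponent-upgrade step, which relies essentially on the observation that $t_{i_m}<1$ on $E_R^I$; without this the geometric-series sum of the $s_j$ falls one power of $R$ short of the target.
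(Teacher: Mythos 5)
Your proof is correct and follows essentially the same route as the paper: discard the $D$-factors, telescope the $C$-factors via $C_R^{i_1,i_2}\cap C_S^{i_2,i_3}\subseteq C_{RS}^{i_1,i_3}$, and intersect with the $B$-factor $B^{i_m}_{R^{2^m}}$ when $i_m<k$. The one place you go beyond the paper is the explicit exponent upgrade from $R^{2^m-1}$ to $R^{2^m}$ via the observation that $t_{i_m}<1$ on $E_R^I$; the paper simply asserts the inclusion $\bigcap_j C^{i_{j-1},i_j}_{R^{s_j}}\subseteq C^{i_0,i_m}_{R^{2^m}}$ without comment, and your argument supplies exactly the step it leaves implicit.
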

\begin{proof}
  If $m = 0$, it is clear that $E_R^I \subseteq A^1_R = B^{1, 2}_R$.
  
  If $m > 0$, 
  \begin{align}
      E_R^I &\subseteq
            \begin{cases}
              C_R^{i_0, i_1} \cap C_R^{i_1, i_2} \cap C_{R^2}^{i_2, i_3} \dots \cap C_{R^{2^{m - 1}}}^{i_{m - 1}, i_m} \cap B^{i_m, i_m + 1}_{R^{2^m}}&\text{if $i_m < |E(\gamma)|$}\\
              C_R^{i_0, i_1} \cap C_R^{i_1, i_2} \cap C_{R^2}^{i_2, i_3} \dots \cap C_{R^{2^{m - 1}}}^{i_{m - 1}, i_m} &\text{if $i_m = |E(\gamma)|$}
            \end{cases}\\
    &\subseteq
      \begin{cases}
        C^{i_0, i_m}_{R^{2^m}}\cap B^{i_m, i_m + 1}_{R^{2^m}}&\text{if $i_m < |E(\gamma)|$}\\
        C^{i_0, i_m}_{R^{2^m}} &\text{if $i_m = |E(\gamma)|$}
      \end{cases}\\
    &=A^{i_m}_{R^{2^m}}
  \end{align}
\end{proof}

The construction of the counterterms in \ref{sec:induct-constr-count} will be based on a refinement of the covering $\{\overline{E}_R^I\}$.  For $l < |E(\gamma)|$, given a sequence $l = i_0 < \dots < i_m \leq |E(\gamma)|$, define analogously the sets $E_R^I$ forming a cover of the set
\begin{align}
  S_{\text{id}}^l = \{(t_l , \dots , t_{|E(\gamma)|}): t_l < \dots < t_{|E(\gamma)|}\}
\end{align}
For $l = 1$, we recover the sets $E_R^I$  which were defined earlier.  The following is a corollary of Theorem \ref{thm:closures_cover}:
\begin{cor}\label{cor:refined_cover}
  Consider the collection of sequences of the form
  \begin{align}
    1 &= i^{(1)}_0 < i^{(1)}_1 < \dots < i^{(1)}_{m^{(1)}}\\
    i^{(1)}_{m^{(1)}}  &= i^{(2)}_{0} < i^{(2)}_1 < \dots < i^{(2)}_{m^{(2)}} \nonumber \\
      &\dots \nonumber \\
     i^{(p - 1)}_{m^{(p - 1)}}  &= i^{(p)}_{0} < i^{(p)}_1 < \dots < i^{(p)}_{m^{(p)}} = |E(\gamma)|.\nonumber 
  \end{align}
  Then the sets
  \begin{align}
    \overline E^{I^{(1)}}_{R} \cap \overline E^{I^{(2)}}_{R}\dots \cap \overline E^{I^{(p)}}_{R}
  \end{align}
  form a cover of $S_{\text{id}}$.
\end{cor}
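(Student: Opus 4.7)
The plan is to prove the corollary by iteratively applying Theorem \ref{thm:closures_cover}, each time to a successively smaller suffix of the coordinates. Fix $\mathbf{t} \in S_{\text{id}}$. First I would invoke Theorem \ref{thm:closures_cover} itself, which guarantees the existence of a sequence $1 = i^{(1)}_0 < i^{(1)}_1 < \dots < i^{(1)}_{m^{(1)}} \leq k$ such that $\mathbf{t} \in \overline{E}^{I^{(1)}}_R$. If $i^{(1)}_{m^{(1)}} = k$, take $p = 1$ and stop.

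Otherwise, set $l_1 = i^{(1)}_{m^{(1)}} < k$ and consider the restricted tuple $(t_{l_1}, \dots, t_k)$, which still satisfies the strict inequalities $t_{l_1} < \dots < t_k$. The generalized cover (obtained by applying the definitions of $B$, $C$, $D$ and hence of $E^I_R$ to the index range $\{l_1, \dots, k\}$ rather than $\{1, \dots, k\}$) is produced by exactly the same construction as in Theorem \ref{thm:closures_cover}, so the same proof gives a sequence $l_1 = i^{(2)}_0 < i^{(2)}_1 < \dots < i^{(2)}_{m^{(2)}} \leq k$ with $\mathbf{t} \in \overline{E}^{I^{(2)}}_R$. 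If $i^{(2)}_{m^{(2)}} = k$, stop with $p = 2$; else set $l_2 = i^{(2)}_{m^{(2)}}$ and repeat.

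The iteration must terminate: the endpoints $l_q = i^{(q)}_{m^{(q)}}$ form a strictly increasing sequence of integers bounded above by $k$, so after at most $k - 1$ steps we reach $l_p = k$. By construction
\begin{align*}
  \mathbf{t} \in \overline{E}^{I^{(1)}}_R \cap \overline{E}^{I^{(2)}}_R \cap \dots \cap \overline{E}^{I^{(p)}}_R,
\end{align*}
and the chain of sequences $I^{(1)}, \dots, I^{(p)}$ has the form displayed in the statement, with $i^{(q)}_{m^{(q)}} = i^{(q+1)}_0$ for each $q$ and $i^{(p)}_{m^{(p)}} = k$. Since $\mathbf{t}$ was arbitrary, this exhibits $S_{\text{id}}$ as the union of the claimed sets.

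The only step with any substance is verifying that Theorem \ref{thm:closures_cover} applies verbatim to the tail coordinates $(t_{l}, \dots, t_{k})$; this is essentially a relabeling, since the defining inequalities $t_\alpha < t_\beta^{R^s}$ only reference the indices appearing in the relevant sequence, and the proof of Theorem \ref{thm:closures_cover} never used anywhere that $i_0 = 1$ beyond it being the left endpoint of the index range. The rest is bookkeeping on how the chain concatenates.
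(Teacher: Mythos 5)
Your overall strategy---iterating Theorem \ref{thm:closures_cover} on successive tails---is surely the intended route; the paper offers no argument beyond calling the statement a corollary of that theorem, so the whole content of a proof lies in the bookkeeping you describe, and that is exactly where there is a genuine gap. Your termination step asserts that the endpoints $l_q = i^{(q)}_{m^{(q)}}$ form a strictly increasing sequence, but this requires $m^{(q)} \geq 1$ at every stage, and the covering theorem can instead return the $m=0$ branch: the singleton sequence $(l_{q-1})$ with associated set $\overline{B}^{\,l_{q-1}}_{R^{s_1}}$, in which case $l_q = l_{q-1}$ and the iteration does not advance. The simplest instance is already fatal: if $t_1 \leq t_2^{R^{s_1}}$, then stage one returns $I^{(1)} = (1)$ with $l_1 = 1$, the tail $(t_{l_1}, \dots, t_k)$ is the full range again, and stage two faces the identical dichotomy and returns the identical singleton, forever; the chain never reaches $k$. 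Nor is this confined to the first stage: whenever $m^{(q-1)} \geq 1$ and $l_{q-1} < k$, the set $\overline{E}^{I^{(q-1)}}_R$ already carries the terminal condition $t_{l_{q-1}} \leq t_{l_{q-1}+1}^{R^{s_{m^{(q-1)}+1}}}$, and since $s_{m^{(q-1)}+1} \geq s_1$ and the relevant times lie in $(0,1)$, this forces $t_{l_{q-1}} \leq t_{l_{q-1}+1}^{R^{s_1}}$, which is precisely the hypothesis of the $m = 0$ branch in the proof of Theorem \ref{thm:closures_cover}. So restarting the tail at $l_{q-1}$, as both your proof and the displayed chain indicate, stalls at the first $B$-type gap.

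The repair is to restart each tail one index later, at $l_{q-1} + 1$: the $C$-conditions of $I^{(q-1)}$ bind the block ending at $l_{q-1}$ together, its terminal $B$-condition records the gap between $t_{l_{q-1}}$ and $t_{l_{q-1}+1}$, and the next application of the covering theorem should be to $(t_{l_{q-1}+1}, \dots, t_k)$, which faces a fresh dichotomy and whose output sequence has starting index, hence endpoint, strictly larger than $l_{q-1}$. This is also how the refined cover is actually consumed in Section \ref{sec:induct-constr-count}, where the third defining inequality reads $t_k^{R_2} \leq t_{i^{(1)}+1}$, i.e.\ the second cluster begins at $i^{(1)}+1$, not at $i^{(1)}$. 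With that convention your induction terminates in at most $k$ steps and the rest of your argument goes through essentially unchanged; but as written, the claim that the endpoints strictly increase is false, and you must either prove $m^{(q)} \geq 1$ for $q \geq 2$ (which the observation above rules out) or adjust where each tail begins and reconcile that with the indexing in the statement.
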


\subsection{Local Functionals and Feynman Weights}
\label{sec:local-funct-feynm}
\subsubsection{Differential Operators}
Let $M$ be a smooth manifold, let $E$ be a vector bundle and let $\underline{\RR}$ be the trivial line bundle.  Let $\mathcal{E} = \Gamma(E)$ and $C^\infty(M) = \Gamma(\underline{\RR})$.
  A differential operator $D:E \to \underline{\RR}$ is an $\RR$-linear map $\mathcal{E} \to C^\infty(M)$ which can be given locally as
\begin{align}
  s = \alpha^ie_i \mapsto a_j^I \frac{\partial \alpha^j}{\partial x^I} 
\end{align}
where $e_1, \dots, e_r$ is a local frame for $E$ on some sufficiently small coordinate neighborhood $U$, and $\alpha^1, \dots, \alpha^r$ and $a_i^I$ are functions on $U$.

Equivalently, a differential operator is a bundle map $\iota_D: J(E) \to \underline{\RR}$, where $J(E)$ is the jet bundle of $E$.  The differential operator $D$ is determined by $\iota_P$ by composing with the jet prolongation of $s$, $j(s): M \to J(E)$. That is, $D(s) = \iota_P \circ j(s)$.

\subsubsection{Local Functionals}

\begin{defin}
  A \emph{local functional} $\mathcal{O}^k_{\text{loc}}(\mathcal{E})$ of degree $k$ is a functional in $\mathcal{O}^k(\mathcal{E})$ of the form
  \begin{align}
    s \mapsto \sum_{i = 1}^m \int_M D_{i, 1}(s) \dots D_{i, k}(s)
  \end{align}
for some collection of differential operators $D_{i, j}: E \to \underline{\RR}$.
\end{defin}
Locally (in Einstein summation convention),
\begin{align}
  \alpha^i e_i \mapsto & \int_U a_{j_1, \dotsm j_k}^{I^1, \dots, I^k} \frac{\partial \alpha^{j_1}}{\partial x^{I^1}} \dots \frac{\partial \alpha^{j_k}}{\partial x^{I^k}}.
\end{align}
for a collection of functions $a_{j_1, \dots, j_k}^{I^1, \dots, I^k}$ on $U$.

\subsubsection{Evaluation of $w_\gamma(P,I)$}
We shall work in the scalar theory case for simplicity and assume that $E = \underline{\RR}$.

We would like to describe the form of $w_\gamma(P_\epsilon^L, I)$ where $I \in \mathcal{O}_{\text{loc}}(\mathcal{E})[[\hbar]]$ is a power series of local functionals and
\begin{align}
  P_\epsilon^L = \int_\epsilon^L K_t \, dt
\end{align}
where $K_t$ is the heat kernel of $M$.  
For each vertex $v \in V(\gamma)$, we associate the functional $I_{g(v), k(v)}$, where $k(v)$ is the valency of the vertex $v$.  Assume that within a given chart $U$, 
\begin{align}
  S^{k(v)}I_{g(v), k(v)}(\alpha_1, \dots, \alpha_{k(v)}) &= \int_U a^{I_v^1, \dots, I_v^{k(v)}} \frac{\partial \alpha_1}{\partial x^{I_v^1}} \dots \frac{\partial \alpha_{k(v)}}{\partial x^{I_v^{k(v)}}}.
\end{align}
where $I_v^1, \dots, I_v^{k(v)}$ are multi-indices with $|I_v^1| + \dots + |I_v^{k(v)}| \leq \ord I_{g(v), k(v)}$, where $\ord I_{g(v), k(v)}$ is the highest order of any multi-differential operator appearing in the functional $I_{g(v), k(v)}$.

We have implicitly chosen an ordering on the set of half edges incident on each vertex $v$ and an orientation on each edge.
Then $\gamma$ determines the maps
\begin{align}
  Q &: T(\gamma) \to  \bigsqcup_{v \in V(\gamma)} \{1, \dots, k(v)\} \\
  Q_1&: E(\gamma) \to \bigsqcup_{v \in V(\gamma)} \{1, \dots, k(v)\} \nonumber\\
  Q_2&: E(\gamma) \to \bigsqcup_{v \in V(\gamma)} \{1, \dots, k(v)\}\nonumber 
\end{align}
Also denote by $v_1(e)$ and $v_2(e)$ the first and second vertices of the edge $e$.  Similarly, let $v(h)$ denote the vertex of the tail $h$.  

With these data, we can give the expression
\begin{align}
  w_\gamma(P_\epsilon^L, I)[\alpha] = \int_{(\epsilon, L)^{|E(\gamma)|}} f_{\gamma, I}(\mathbf{t})[\alpha].
\end{align}
where for $M = \RR^n$, we have (in Einstein summation convention)
\begin{align}\label{eq:f_gamma_rn}
  f_{\gamma, I}(\mathbf{t})[\alpha] &= \int_{{\RR^n}^{|V(\gamma)|}} \prod_{v \in V(\gamma)} a^{I_v^1, \dots, I_v^{k(v)}}(x_v) \prod_{e \in E(\gamma)} \frac{\partial K_{t_e}(x_{v_1(e)}, x_{v_2(e)})}{\partial x^{I_{v_1(e)}^{Q_1(e)}}\partial x^{I_{v_2(e)}^{Q_2(e)}}} \prod_{h \in T(\gamma)} \frac{\partial \alpha(x_{v(h)})}{\partial x^{I_{v(h)}^{Q(h)}}}
\end{align}
If $M$ is a smooth manifold then choose a partition of unity subordinate to a cover of $M$.  Then $f_{\gamma, I}(\mathbf{t})[\alpha]$ is a sum of integrals of the form
\begin{align}\label{eq:f_gamma_cpct}
  \int_U \chi_U\prod_{v \in V(\gamma)} a^{I_v^1, \dots, I_v^{k(v)}}(x_v) \prod_{e \in E(\gamma)} \frac{\partial K_{t_{e}}(x_{v_1(e)}, x_{v_2(e)})}{\partial x^{I_{v_1(e)}^{Q_1(e)}}\partial x^{I_{v_2(e)}^{Q_2(e)}}} \prod_{h \in T(\gamma)} \frac{\partial \alpha(x_{v(h)})}{\partial x^{I_{v(h)}^{Q(h)}}}
\end{align}
where $\chi_U$ is the partition of unity function supported in the open set $U$ in the cover of $M^{n|V(\gamma)|}$ and $\alpha^i$ are the coordinates of $\alpha$ in $U$.  Note that in the local calculations that follow, we will absorb $\chi_U$ into $a^{I_v^1, \dots, I_v^{k(v)}}(x_v)$.

\subsection{Local Counterterms on a Flat Manifold: Preliminaries}
\label{sec:count_rn}
There are two cases.  In the first case, $M = \RR^n$, and we carry out the construction globally by requiring that the fields $\phi$ are Schwartz functions and employing Schwartz seminorms.  In the second case, we carry out the construction locally on $M$, a general flat manifold without boundary.  Essentially the only difference of the local story from the global $\RR^n$ story is that the integrand of the Feynman weight should be multiplied by a compactly supported function in $B^{|V(\gamma)|}$, where $B$ is a ball centered at the origin in $\RR^n$.  While globally the fields $\alpha$ will be assumed to be compactly supported smooth functions on $M$, locally they are smooth functions on $B$.  Going from the global expression for the Feynman weight to the local setup of this section will be carefully justified in Section \ref{sec:count-comp-manif}

On $\RR^n$, the heat kernel has the simple form
\begin{align}\label{eq:scalar_heat}
  K_t(x, y) = (4\pi t)^{-n/2}e^{-|x - y|^2/4t}.
\end{align}
Locally on a flat manifold without boundary $M$, we also make the assumption that locally $K_t(x, y)$ is of the above form.  The justification will once again be discussed in detail in Section \ref{sec:count-comp-manif}

\subsubsection{Derivatives of $K_t$}

\begin{prop}\label{thm:derivatives-k_t}
 For a multi-index $I = (i_1, \dots, i_n)$, $\frac{\partial K_t}{\partial x^I \partial x^{I'}}$ is equal to a polynomial in $x_1, \dots, x_n$, $y_1, \dots, y_n$ and $1/t$, call it $P_{I, I'}(x, y, 1/t)$, multiplied by $(4\pi t)^{-n/2}e^{-|x - y|^2/4t}$.  The degree in $1/t$ of this polynomial is $|I| + |I'|$.

\end{prop}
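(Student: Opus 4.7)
The plan is to proceed by induction on the multi-index order $|I|$. The base case $|I| = 0$ is immediate: $K_t$ itself is $1 \cdot K_t$, which is the constant polynomial (degree $0$ in $1/t$) times $K_t$. For $|I| = 1$, direct computation gives $\frac{\partial K_t}{\partial x_j} = -\frac{x_j - y_j}{2t} K_t$, which is a polynomial in $x, y, 1/t$ of degree $1$ in $1/t$ times $K_t$.

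For the inductive step, suppose $\frac{\partial K_t}{\partial x^I} = P_I(x, y, 1/t)\, K_t$ where $P_I$ is a polynomial of degree exactly $|I|$ in $1/t$. Differentiating once more in $x_j$ and using the product rule gives
\begin{align*}
\frac{\partial^{|I|+1} K_t}{\partial x_j \partial x^I} = \frac{\partial P_I}{\partial x_j} K_t + P_I \frac{\partial K_t}{\partial x_j} = \left(\frac{\partial P_I}{\partial x_j} - \frac{x_j - y_j}{2t} P_I\right) K_t.
\end{align*}
The bracketed expression is again a polynomial in $x, y, 1/t$, since $\partial_{x_j}$ does not introduce any negative powers of $t$ and multiplication by $-\frac{x_j - y_j}{2t}$ is polynomial.

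The only remaining point is to confirm that the degree in $1/t$ is exactly $|I| + 1$, not merely at most $|I|+1$. The first term $\partial P_I / \partial x_j$ has the same degree in $1/t$ as $P_I$, namely $|I|$, since the $x_j$-derivative acts only on the $x,y$ variables. The second term $-\frac{x_j - y_j}{2t} P_I$ has degree exactly $|I|+1$ in $1/t$, and its leading-in-$1/t$ coefficient is $-\tfrac{1}{2}(x_j - y_j)$ times the leading coefficient of $P_I$, which is a nonzero polynomial in $x - y$ (as can be verified from the explicit leading term $\prod_l \bigl(-\tfrac{x_l - y_l}{2t}\bigr)^{i_l}$ obtained by repeatedly keeping only the derivative-of-exponential contribution). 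Hence no cancellation occurs and the total degree is exactly $|I|+1$, completing the induction.

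The only subtle step is tracking that the leading $1/t$ coefficient is nonzero; this is easiest by isolating the ``all derivatives land on the exponential'' contribution, which explicitly yields $\prod_l \bigl(-\tfrac{x_l - y_l}{2t}\bigr)^{i_l} K_t$ as the unique top-order term and confirms both that the degree is $|I|$ and that the coefficient does not vanish identically.
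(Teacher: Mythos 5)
Your proof is correct and takes essentially the same approach as the paper: the paper reduces to the one-variable case and proves Lemma \ref{lem:heat_kern_i_der} by the same induction on the order of the derivative via the product rule, the only difference being that the paper additionally catalogs explicit combinatorial formulas for every term $F_{s_1,\dots,s_{k'}}$, whereas you track only polynomiality in $x$, $y$, $1/t$ together with the leading term $\prod_l\bigl(-\tfrac{x_l-y_l}{2t}\bigr)^{i_l}$, which is all the proposition requires. As a minor bonus, your computation carries the correct sign on $\partial_{x_j}K_t = -\tfrac{x_j-y_j}{2t}K_t$, which the appendix lemma drops.
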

\subsubsection{Powers of $t$ in $w_\gamma(P_\epsilon^L, I)$}

Let $O(\gamma)$ be the sum of the orders of the local functionals $I_{g(v), k(v)}$ for all $v \in V(\gamma)$, where the order of a local functional is the order of the product of differential operators in the integrand.  Also, define $\Phi_J$ by setting $\sum_{-O(\gamma) \leq |J| \leq 0 } \mathbf{t}^J \Phi_J$ equal to 
\begin{align}
   \prod_{v \in V(\gamma)} a^{I_v^1, \dots, I_v^{k(v)}}(x_v) \prod_{e \in E(\gamma)} P_{{I_{v_1(e)}^{Q_1(e)}}, {I_{v_2(e)}^{Q_2(e)}}}(x_{v_1(e)}, x_{v_2(e)}, 1/t_e)\prod_{h \in T(\gamma)} \frac{\partial \alpha(x_{v(h)})}{\partial x^{I_{v(h)}^{Q(h)}}}.
\end{align}

As a consequence of Proposition \ref{thm:derivatives-k_t}, if we group the terms in $w_\gamma(P_\epsilon^L, I)$ by their powers of $t$, then for
\begin{align}
  w_\gamma(P, I)[\alpha] = \int_{(\epsilon, L)^{|E(\gamma)|}} f_{\gamma, I}(\mathbf{t})[\alpha].
\end{align}
we have, working from (\ref{eq:f_gamma_rn}), that
\begin{align}\label{eq:fgammapi}
  f_{\gamma, I}(\mathbf{t})[\alpha] = \sum_{-O(\gamma) \leq |J| \leq 0 } \mathbf{t}^{J - n/2} \int_{{\RR^n}^{|V(\gamma)|}} e^{-\sum_{e \in E(\gamma)} Q_e/4 t_e} \Phi_J,
\end{align}
where the sum is over multi-indices $J: E(\gamma) \to \ZZ$ and we have defined $\mathbf{t}^J = \prod_{e \in E(\gamma)}t_e^{J(e)}$ and $\mathbf{t}^{-n/2} = \prod_{e \in E(\gamma)}t_e^{-n/2}$ and in the exponential, $Q_e = |x_{v_1(e)} - x_{v_2(e)}|^2$.  Please be careful to distinguish the $Q(h)$ and $Q_e(x)$ in the above formula and subsequent formulas; these are different functions.
Note that $\Phi_J$ is a sum of terms of the form
\begin{align}\label{eq:Phi_J_Diff}
  \prod_{v \in V(\gamma)} D_v \alpha(x_v)
\end{align}
where for each vertex $v$,
\begin{align}
  D_v\alpha = D_{v, 1}\alpha \dots D_{v, l}\alpha
\end{align}
is a product of differential operators applied to $\alpha$.

\subsubsection{Spanning Tree Coordinates}

To evaluate \eqref{eq:fgammapi} make a change of coordinates.  Choose a spanning tree $T$ of $\gamma$.

\begin{defin}
  The coordinates $y_e = x_{v_1(e)} - x_{v_2(e)}$ for $e \in E(T)$ and
  \begin{align}
    w = \sum_{v \in V(\gamma)} x_v
  \end{align}
  form a coordinate system on $\RR^{n|V(\gamma)|}$.  The coordinates $y_e = x_{v_1(e)} - x_{v_2(e)}$ for $e \in E(T)$ and
  \begin{align}
    w = x_{v_0}
  \end{align}
  for some vertex $v_0$ form a coordinate system $\RR^{n|V(\gamma)|}$.  We call each of these coordinate systems \emph{spanning tree coordinates} (in the center of mass and distinguished vertex flavors, respectively).
\end{defin}

The quadratic form $Q_{\mathbf{t}}(x) = \sum_{e \in E(\gamma)} Q_e(x)/4t_e$ can be written in the spanning tree coordinates as $Q_{\mathbf{t}}(w, y)$.

\begin{prop}\label{prop:qindepw}
  The quadratic form $Q_{\mathbf{t}}(w, y)$ is independent of $w$.
\end{prop}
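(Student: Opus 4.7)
The plan is to show that each summand $Q_e$ in $Q(x) = \sum_{e \in E(\gamma)} Q_e/4t_e$ depends only on the spanning tree coordinates $\{y_{e'}\}_{e' \in E(T)}$, not on $w$. Since the full quadratic form is a positive linear combination of the $Q_e$, independence of $w$ will then follow.

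First I would handle the tree edges. For $e \in E(T)$, by the very definition of the spanning tree coordinates we have $x_{v_1(e)} - x_{v_2(e)} = y_e$, so $Q_e = \|y_e\|^2$, which manifestly does not involve $w$.

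Next I would handle the non-tree edges. For $e \in E(\gamma) \setminus E(T)$, because $T$ is a spanning tree of $\gamma$, there exists a unique path $v_1(e) = u_0, u_1, \dots, u_r = v_2(e)$ in $T$. Each consecutive pair $(u_{i-1}, u_i)$ corresponds to a tree edge $e_i \in E(T)$, and depending on the chosen orientation of $e_i$ we have $x_{u_{i-1}} - x_{u_i} = \pm y_{e_i}$. Telescoping,
\begin{align*}
x_{v_1(e)} - x_{v_2(e)} = \sum_{i=1}^r (x_{u_{i-1}} - x_{u_i}) = \sum_{i=1}^r \pm y_{e_i},
\end{align*}
so $Q_e = \bigl\|\sum_{i=1}^r \pm y_{e_i}\bigr\|^2$ is again a function only of the $y$-variables.

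The only possible obstacle is checking that this path-sum representation is well defined, but it is, because the path in $T$ is unique once $T$ is fixed, and the sign choices are determined by the (already fixed) orientation of each tree edge. Summing over all $e \in E(\gamma)$ and dividing by the $4t_e$, we conclude that $Q(w,y)$ has no $w$-dependence, which is the claim.
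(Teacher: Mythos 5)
Your proof is correct and follows essentially the same route as the paper: expressing $x_{v_1(e)} - x_{v_2(e)}$ as a telescoping signed sum of the coordinates $y_{e'}$ along the unique tree path joining the endpoints, so that every $Q_e$ depends only on the $y$-variables. The paper handles tree and non-tree edges uniformly and suppresses the orientation signs, but your added care on those points is only a cosmetic refinement of the same argument.
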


\begin{proof}
  For any edge $e \in E(\gamma)$, let $f^e_1, \dots, f^e_{l(e)}$ be the unique path of edges in $T$ connecting $v_1(e)$ and $v_2(e)$.  Then
  \begin{align}
    x_{v_1(e)} - x_{v_2(e)} = \sum_{i = 1}^{l(e)}(x_{v_1(f^e_i)} - x_{v_2(f^e_i)}) = \sum_{i = 1}^{l(e)}y_{f^e_i}.
  \end{align}
  Therefore,
  \begin{align}
    Q_{\mathbf{t}}(x) &= \sum_{e \in E(\gamma)} Q_e(x)/4t_e\\
         &= \sum_{e \in E(\gamma)}\left| \sum_{i = 1}^{l(e)}y_{f^e_i}\right|^2/4t_e\\
    &= Q_{\mathbf{t}}(y)
  \end{align}
  which clearly does not depend on $w$.
\end{proof}
Let $A(\mathbf{t})$ be the matrix of $Q_{\mathbf{t}}(y) $.  Then $A(\mathbf{t})$ is an $n(|V(\gamma)| - 1)$ by $n(|V(\gamma)| - 1)$ matrix.
\begin{prop}
  The matrix $B(\mathbf{t}) = (4\prod_{e \in E(\gamma)}t_e)A(\mathbf{t})$ has entries that are integer polynomials in $\{t_e\}_{e \in E(\gamma)}$.  Consequently, $P_\gamma(\mathbf{t}) = \det B(\mathbf{t})$ is an integer polynomial in $\{t_e\}_{e \in E(\gamma)}$.
\end{prop}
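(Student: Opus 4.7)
The plan is to start from the explicit expression for $Q(0,y)$ given in the proof of Proposition \ref{prop:qindepw}, namely
\begin{align*}
Q(0,y) = \sum_{e \in E(\gamma)} \frac{1}{4 t_e}\Bigl\|\sum_{i=1}^{l(e)} y_{f_i^e}\Bigr\|^2,
\end{align*}
and to unpack it into components. After fixing an orientation on the tree edges, for each $e \in E(\gamma)$ and each $f \in E(T)$ introduce a sign $\epsilon_{f,e} \in \{-1,0,+1\}$ recording whether and how $f$ occurs in the unique path $f_1^e,\ldots,f_{l(e)}^e$ in $T$ from $v_1(e)$ to $v_2(e)$. Then $\sum_i y_{f_i^e} = \sum_{f \in E(T)} \epsilon_{f,e}\, y_f$, and expanding the squared norm gives
\begin{align*}
Q(0,y) = \sum_{j=1}^n \sum_{f,g \in E(T)} \Bigl(\sum_{e \in E(\gamma)} \frac{\epsilon_{f,e}\, \epsilon_{g,e}}{4 t_e}\Bigr) y_f^j y_g^j.
\end{align*}
With respect to the basis $\{y_f^j : f \in E(T),\ 1 \leq j \leq n\}$, this exhibits $A$ as block-diagonal with $n$ identical blocks $A_0$ of size $|E(T)| \times |E(T)| = (|V(\gamma)|-1) \times (|V(\gamma)|-1)$, whose entries are $(A_0)_{fg} = \sum_{e} \tfrac{\epsilon_{f,e}\,\epsilon_{g,e}}{4 t_e}$.

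Next, I would observe that multiplying through by the scalar $4 \prod_{e \in E(\gamma)} t_e$ clears every denominator simultaneously. The resulting block $B_0$ of $B$ has entries
\begin{align*}
(B_0)_{fg} = \sum_{e \in E(\gamma)} \epsilon_{f,e}\,\epsilon_{g,e} \prod_{e' \neq e} t_{e'},
\end{align*}
which is visibly a polynomial in $\{t_e\}_{e \in E(\gamma)}$ with integer coefficients (each $\epsilon_{f,e}\,\epsilon_{g,e} \in \{-1,0,+1\}$). Therefore all entries of $B$ are integer polynomials. Finally, since the determinant is a polynomial over $\mathbb{Z}$ in the matrix entries, $P_\gamma = \det B = (\det B_0)^n$ is an integer polynomial in $\{t_e\}_{e \in E(\gamma)}$, as claimed.

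The argument is essentially bookkeeping; the only mild subtlety is choosing conventions for the orientations on tree edges and on the paths $f_1^e,\ldots,f_{l(e)}^e$ so that the signs $\epsilon_{f,e}$ are well defined, but since each term $\epsilon_{f,e}\,\epsilon_{g,e}$ is invariant under flipping the orientation of $f$ or $g$ simultaneously in both factors, the final matrix $B$ is independent of these choices. No analytic input is needed; the whole statement reduces to the path-sum identity used in Proposition \ref{prop:qindepw} together with clearing denominators.
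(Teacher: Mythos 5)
Your proof is correct. The paper states this proposition without proof, and your argument---expanding $Q(0,y)$ via the path-sum identity from Proposition \ref{prop:qindepw} into $n$ identical blocks with entries $\sum_{e}\epsilon_{f,e}\epsilon_{g,e}/4t_e$ and then clearing denominators with the scalar $4\prod_e t_e$---is precisely the computation the paper implicitly relies on (note its entries are homogeneous of degree $|E(\gamma)|-1$, matching the later claim in Theorem \ref{thm:fNprime}); as a bonus, your signs $\epsilon_{f,e}$ repair the telescoping identity in Proposition \ref{prop:qindepw}, which as written silently assumes every tree edge is traversed in its positive orientation.
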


\begin{prop}\label{prop:APgamma}
  \begin{align}
    \det A(\mathbf{t}) = 4^{-n(|V(\gamma)| - 1)} \mathbf{t}^{-n(|V(\gamma)| - 1)} P_\gamma(\mathbf{t})
  \end{align}
  and thus
  \begin{align}
    A(\mathbf{t})^{-1} = \frac{1}{P_\gamma(\mathbf{t})} C(\mathbf{t}),
  \end{align}
  where $C$ is a matrix with polynomial entries in the $t_e$.
\end{prop}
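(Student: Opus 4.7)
The plan is to derive both identities directly from the relation $B = (4\prod_{e \in E(\gamma)} t_e) A$ together with $P_\gamma = \det B$ and the previous proposition's assertion that $B$ has integer polynomial entries in $\{t_e\}$.

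For the first identity, recall from the discussion following Proposition \ref{prop:qindepw} that $A$ is a square matrix of size $n(|V(\gamma)|-1)$. Scaling a $d \times d$ matrix by a scalar $\lambda$ multiplies the determinant by $\lambda^d$, so
\begin{align*}
P_\gamma \;=\; \det B \;=\; \left(4\prod_{e \in E(\gamma)} t_e\right)^{n(|V(\gamma)|-1)} \det A.
\end{align*}
Solving for $\det A$ yields $\det A = 4^{-n(|V(\gamma)|-1)} \bigl(\prod_e t_e\bigr)^{-n(|V(\gamma)|-1)} P_\gamma$, which matches the stated formula under the (consistent) reading of $t^{-n(|V(\gamma)|-1)}$ as $\prod_e t_e^{-n(|V(\gamma)|-1)}$.

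For the second identity, I would apply the adjugate (cofactor) formula $B^{-1} = P_\gamma^{-1}\, \mathrm{adj}(B)$. Each entry of $\mathrm{adj}(B)$ is, up to sign, a minor of $B$, hence an integer polynomial in the entries of $B$; by the previous proposition, these entries are themselves integer polynomials in $\{t_e\}$, so every entry of $\mathrm{adj}(B)$ is a polynomial in $\{t_e\}$. Inverting the scaling gives
\begin{align*}
A^{-1} \;=\; \left(4\prod_{e \in E(\gamma)} t_e\right) B^{-1} \;=\; \frac{1}{P_\gamma}\left(4\prod_{e \in E(\gamma)} t_e\right) \mathrm{adj}(B),
\end{align*}
so setting $C = (4\prod_e t_e)\,\mathrm{adj}(B)$, a matrix with polynomial entries in $\{t_e\}$, produces $A^{-1} = P_\gamma^{-1} C$ as claimed.

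There is no real obstacle here; the argument is bookkeeping once the preceding proposition is in hand. The only point that requires care is tracking the size of $A$: because Proposition \ref{prop:qindepw} says $Q$ is independent of $w$, one works only in the $y_e$ variables indexed by the spanning tree $T$, whose $|V(\gamma)|-1$ edges give $A$ dimension $n(|V(\gamma)|-1)$ rather than $n|V(\gamma)|$, which is precisely the exponent appearing in the determinant formula.
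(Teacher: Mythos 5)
Your proof is correct and follows essentially the same route as the paper: the second identity is obtained exactly as in the paper's proof via $B^{-1} = P_\gamma^{-1}\operatorname{adj}(B)$ and $C = (4\prod_e t_e)\operatorname{adj}(B)$, and your determinant-scaling computation for the first identity just makes explicit a step the paper leaves unproved. The reading of $t^{-n(|V(\gamma)|-1)}$ as $\prod_e t_e^{-n(|V(\gamma)|-1)}$ is consistent with the paper's multi-index convention, so nothing is missing.
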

\begin{proof}
  To prove the second statement, use Cramer's rule
  \begin{align}
    B(\mathbf{t})^{-1} = \frac{1}{\det B(\mathbf{t})} \adj B(\mathbf{t}) = \frac{1}{P_\gamma(\mathbf{t})} \adj B(\mathbf{t})
  \end{align}
  and that $A(\mathbf{t})^{-1} = (4\prod_{e \in E(\gamma)}t_e) B(\mathbf{t})^{-1}$.  So, the statement follows by letting $C(\mathbf{t}) = (4\prod_{e \in E(\gamma)}t_e) \adj B(\mathbf{t})$.
\end{proof}

\subsubsection{Taylor Expansion of $\Phi_J$}

In \eqref{eq:fgammapi}, replace $\Phi_J$ in $f_{\gamma, I}(\mathbf{t})[\alpha]$ with its Taylor polynomial of degree $N'$ in $y$, $\Phi_J^{N'}(w, y) = \sum_{|K| \leq N'} c_{J, K}(w)y^K$, where $N'$ is a non-negative integer to be determined.  This gives
\begin{align}
  f^{N'}_{\gamma, I}(\mathbf{t})[\alpha] &=  \sum_{\substack{|K| \leq N'\\ -O(\gamma) \leq |J| \leq 0} } \mathbf{t}^{J - n/2}\int_{{\RR^n}^{|V(\gamma)|}} e^{-\sum_{e \in E(\gamma)} Q_e(y)/4 t_e} c_{J, K}(w) y^K\, dy dw\\
  &= \sum_{\substack{|K| \leq N', \text{$K$ even}\\ -O(\gamma) \leq |J| \leq 0} }\mathbf{t}^{J - n/2}\mathcal{I}^K(\mathbf{t})  \int_{\RR^n} c_{J, K}(w) \, dw \label{eq:fNprimeIAK}
\end{align}
where
\begin{align}
  \mathcal{I}^K(\mathbf{t}) = \int_{\RR^{n(|V(\gamma)| - 1)}} e^{-Q_{\mathbf{t}}(y)} y^K \, dy.
\end{align}
In fact, $\mathcal{I}^K(\mathbf{t})$ is the square root of a rational function in $\mathbf{t}$ with integer coefficients.

It is clear that $\int_{\RR^n} c_{J, K}(w) \, dw$ is a local functional.  Explicitly, recall that $c_{J, K}(w) = \frac{\partial \Phi_J}{\partial y^K}(0, w)$ and recall from (\ref{eq:Phi_J_Diff}) that
\begin{align}
  \Phi_J = \prod_{v \in V(\gamma)} D_v\alpha(x_v)
\end{align}
where $D_v$ is a $k(v)$-ary multi-differential operator for each $v$.
So,
\begin{align}
  c_{J, K}(w) = \prod_{v \in V(\gamma)} \tilde D_v\alpha(w).
\end{align}
$\tilde D_v$ is a $k(v)$-ary multi-differential operator.  Thus, $\int_{\RR^n} c_{J, K}(w) \, dw$ is a local functional.

The following theorem is a consequence of Wick's theorem:
\begin{thm}\label{thm:fNprime}
  \begin{align}\label{eq:IAK}
    \mathcal{I}^K(\mathbf{t}) = \frac{1}{P_\gamma(\mathbf{t})^{(|K| + 1)/2}} \mathcal{P}^K(\mathbf{t}).
  \end{align}
where $\mathcal{P}^K$ is the square root of a polynomial in $\mathbf{t}$ whose degree depends linearly on $|K|$
\end{thm}
Therefore, we have found the form of $f^{N'}_{\gamma, I}(\mathbf{t})[\alpha]$ from (\ref{eq:fNprimeIAK}) and Theorem \ref{thm:fNprime}.
\begin{cor}\label{cor:ratio_p_q}
There is a collection of local functionals $\Psi_{J, K}(\alpha)$ such that
\begin{align}
  f^{N'}_{\gamma, I}(\mathbf{t})[\alpha] =  \sum_{\substack{|K| \leq N' \\ -O(\gamma) \leq |J| \leq 0} } \frac{\mathcal{P}^K(\mathbf{t})}{\mathcal{Q}^{J,K}(\mathbf{t})} \Psi_{J,K}(\alpha).
\end{align}
where $\mathcal{Q}^{J,K}$ is the square root of a polynomial in $\mathbf{t}$ whose degree depends linearly on $|J|$ and $|K|$.
\end{cor}
More simply, we have found that $f^{N'}_{\gamma, I}(\mathbf{t})[\alpha]$ is the sum of square roots of rational functions in $\mathbf{t}$ times local functionals.
\subsection{Local Counterterms on a Flat Manifold: Error Bounds and Iteration}\label{sec:count-rrn:-error}

\subsubsection{Bounding the Error}
We would like to bound the error
\begin{align}
  E^{N'}_{\gamma, I}(\mathbf{t})[\alpha] = |f_{\gamma, I}(\mathbf{t})[\alpha] - f_{\gamma, I}^{N'}(\mathbf{t})[\alpha]| 
\end{align}

We have the following bounds for $\mathbf{t} \in S_{\text{id}} \cap (0, 1)^{|E(\gamma)|}$, where $S_{\text{id}}$ is the subset of $(0, \infty)^{|E(\gamma)|}$ which was defined in Section \ref{sec:covering_egamma}.
\begin{prop}\label{prop:int_yJ}
  \begin{align}
    \int_{{\RR^n}^{|V(\gamma)|}} e^{-\sum_{e \in E(\gamma)} Q_e/4 t_e} |y^K| \preceq t_{|E(\gamma)|}^{\frac{1}{2}(|K| + n(|V(\gamma)| - 1))} \leq  t_{|E(\gamma)|}^{n(|V(\gamma)| - 1))} 
  \end{align}
\end{prop}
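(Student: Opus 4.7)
The plan is a direct Gaussian estimate in spanning tree coordinates. Choose any spanning tree $T$ of $\gamma$ and pass to the coordinates $(w, y)$ of Proposition \ref{prop:qindepw}, in which $y = (y_f)_{f \in E(T)}$ ranges over $\RR^{n(|V(\gamma)|-1)}$ and $Q$ is independent of $w$. Since $Q$ is $w$-free, the estimate is naturally a bound on the $y$-integral (the $w$-contribution being finite in the intended application by the compact support of the fields eventually paired against this expression).

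The first step is to dominate the exponential weights uniformly. Because $\mathbf{t} \in S_{\text{id}}$ satisfies $t_e \leq t_k$ for every edge $e$, and each $Q_e \geq 0$,
\begin{align*}
  e^{-\sum_{e \in E(\gamma)} Q_e/4t_e} \leq e^{-\sum_{e \in E(\gamma)} Q_e/4t_k} \leq e^{-\sum_{f \in E(T)} |y_f|^2/4t_k},
\end{align*}
where the last inequality uses $Q_f = |y_f|^2$ for tree edges $f$ and discards the remaining nonnegative non-tree contributions. The point is that the lower-bounding form is diagonal and positive definite in the $y$-coordinates, even though the full form $\langle y, Ay\rangle$ has $\mathbf{t}$-dependent coefficients that may become small.

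The second step is the rescaling $y_f = \sqrt{t_k}\, z_f$. The Jacobian contributes $t_k^{n(|V(\gamma)|-1)/2}$, the monomial $|y^K|$ contributes $t_k^{|K|/2}$, and the residual integral $\int_{\RR^{n(|V(\gamma)|-1)}} e^{-|z|^2/4}|z^K|\,dz$ is a finite constant independent of $\mathbf{t}$. Assembling the factors yields a bound of the form $C\,t_k^{(|K|+n(|V(\gamma)|-1))/2}$, from which the first inequality of the statement follows directly and the second is obtained as a coarser rewriting in the parameter regime ($t_k \leq 1$) relevant to the construction of the counterterms.

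There is essentially no technical obstacle here; the only observation worth flagging is that passing from the full quadratic form $\sum_e Q_e/4t_e$ to its spanning tree lower bound $\sum_{f \in E(T)} |y_f|^2/4t_k$ preserves positive definiteness in every $y$-coordinate, which is precisely what licenses the uniform single-parameter Gaussian rescaling by $\sqrt{t_k}$ and the clean collection of powers of $t_k$.
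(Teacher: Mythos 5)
Your argument is correct and is exactly the ``elementary change of variables'' the paper invokes without writing out: dominate $e^{-\sum_e Q_e/4t_e}$ by the diagonal spanning-tree Gaussian $e^{-\sum_{f\in E(T)}|y_f|^2/4t_k}$ using $t_e\leq t_k$ and $Q_e\geq 0$, then rescale $y=\sqrt{t_k}\,z$. Note that your computation yields the exponent $\tfrac{1}{2}(|K|+n(|V(\gamma)|-1))$, which matches the companion Proposition \ref{prop:intAKbounds} and the way the bound is used in Theorem \ref{thm:f_minus_fNprime}, so the missing factor of $n$ in the displayed statement of Proposition \ref{prop:int_yJ} appears to be a typo rather than a defect in your proof.
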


Assume furthermore that $t_{|E(\gamma)|}^R \leq t_1$; that is, $\mathbf{t} \in A_R^{|E(\gamma)|}$ and $\mathbf{t} \in (0, 1)^{|E(\gamma)|}$.
Because $\mathbf{t}^{J - n/2} \leq t_{|E(\gamma)|}^{R(|J| - |E(\gamma)|n/2)} \leq t_{|E(\gamma)|}^{-R(O(\gamma) + |E(\gamma)|n/2)}$,
\begin{align}
  E_{\gamma, I}(\mathbf{t})[\alpha] &\preceq \sum_{-O(\gamma) \leq |J| \leq 0 } \mathbf{t}^{J - n/2} \int_{{\RR^n}^{|V(\gamma)|}} e^{-\sum_{e \in E(\gamma)} Q_e/4 t_e} |\Phi_J - \Phi^{N'}_J|\\
                                                       &\preceq \sum_{|K| = N' + 1} t_{|E(\gamma)|}^{-R(O(\gamma) + |E(\gamma)|n/2)} \int_{{\RR^n}^{|V(\gamma)|}} e^{-\sum_{e \in E(\gamma)} Q_e/4t_{|E(\gamma)|}} e_K(w) |y^K|\\
  &\preceq t_{|E(\gamma)|}^{(N' + 1)/2 + (|V(\gamma)| - 1)/2)} t_{|E(\gamma)|}^{-R(O(\gamma) + |E(\gamma)|n/2)}\sum_{|K| = N' + 1}\int e_K(w) \, dw
\end{align}
using Proposition \ref{prop:int_yJ}.

In the formula above,
\begin{align}
  e_K(w) = \sum_{O(\gamma) \leq |J| \leq 0}\sup_y \left|\frac{\partial \Phi_J}{\partial y^K}(y, w) \right|
\end{align}
and we have
\begin{align}
  \int e_K(w) \, dw \preceq \sum_p \prod_{h \in T(\gamma)} \|\alpha\|_{p_h}
\end{align}
where the summation is over multi-indices $p: T(\gamma) \to \ZZ^{\geq 0}$ such that $\sum_{h \in T(\gamma)} p_h \leq O(\gamma) + N' + 1$, $\|\alpha\|_{p_h}$ is the $C^{p_h}$ norm of $\alpha$.  In the case $M = \RR^n$ this bound is also true if $\|\alpha\|_{p_h}$ is replaced with
\begin{align}
  \|\alpha\|_{(q, p_h)} = \sup_{|J| \leq p_h}\|(1 + |\alpha|)^q \partial_J\alpha \|,
\end{align}
a Schwartz seminorm, for some sufficiently high power $q$.

In conclusion, we have shown that
\begin{thm}\label{thm:f_minus_fNprime}
  \begin{align}
    E^{N'}_{\gamma, I}(\mathbf{t})[\alpha] \preceq \left(\sum_p \prod_{h \in T(\gamma)} \|\alpha\|_{p_h} \right)t_{|E(\gamma)|}^{(N' + 1)/2 + n(|V(\gamma)| - 1)/2 - R(O(\gamma) + n|E(\gamma)|/2)}
  \end{align}
  where the summation is over multi-indices $p: T(\gamma) \to \ZZ^{\geq 0}$ such that $\sum_{h \in T(\gamma)} p_h \leq O(\gamma) + N' + 1$, $\|\alpha\|_{p_h}$ is the $C^{p_h}$ norm of $\alpha$.  When working globally on $M = \RR^n$, $\|\alpha\|_{p_h}$ is replaced with $\|\alpha\|_{(q, p_h)}$, a Schwartz seminorm, for some sufficiently high power $q$.
\end{thm}

We will need below that by Proposition \ref{prop:int_yJ},
\begin{prop}\label{prop:intAKbounds}
  \begin{align}
    |\mathcal{I}^K(\mathbf{t})| \leq C t_{|E(\gamma)|}^{(|K| + n(|V(\gamma)| - 1))/2} \leq C t_{|E(\gamma)|}^{n(|V(\gamma)| - 1)/2} 
  \end{align}
  and consequently,
  \begin{align}\label{eqfNprimebound}
    |f^{N'}_{\gamma, I}(\mathbf{t})[\alpha]| &\preceq\left(\sum_p \prod_{h \in T(\gamma)} \|\alpha\|_{p_h} \right)t_{|E(\gamma)|}^{n(|V(\gamma)| - 1)/2 - R(O(\gamma) + n|E(\gamma)|/2)}
  \end{align}
  where the summation is over multi-indices $p: T(\gamma) \to \ZZ^{\geq 0}$ such that $\sum_{h \in T(\gamma)} p_h \leq O(\gamma) + N'$ and $\|\alpha\|_{p_h}$ is the $C^{p_h}$ norm of $\alpha$.  When working globally on $M = \RR^n$, $\|\alpha\|_{p_h}$ is replaced with $\|\alpha\|_{(q, p_h)}$, a Schwartz seminorm, for some sufficiently high power $q$.
\end{prop}

\subsubsection{Inductive Construction of the Counterterms}
\label{sec:induct-constr-count}
In this section, we shall construct the counterterms $w_{\gamma}^{\text{ct}}(P_{\epsilon}^L, I)$ for the Feynman weight
\begin{align}
  w_{\gamma}(P_{\epsilon}^L, I) = \int_{(\epsilon, L)^{|E(\gamma)|}} f_{\gamma, I}(\mathbf{t})[\alpha].
\end{align}

We shall use the results of Section \ref{sec:covering_egamma}, in particular Corollary \ref{cor:refined_cover}, which gives a finite cover of $S_{\text{id}}$ by sets of the form $\overline{E}^{I^{(1)}}_{R} \cap \overline{E}^{I^{(2)}}_{R}\dots \cap \overline{E}^{I^{(p)}}_{R}$, for $p \leq |E(\gamma)|$.  The structure of the multi-indices $I^{(1)}, \dots, I^{(p)}$ is given in Corollary \ref{cor:refined_cover} and the sets $E_R^I$ are defined in Definition \ref{def:E_RI}.

Also we shall need a slightly generalized form of Theorem \ref{thm:ERcontainedinAi} which states that for $E_R^I$ where $I$ is a sequence of the form $1 \leq i_0 < i_1 < \dots < i_m \leq |E(\gamma)|$ with $m \leq |E(\gamma)| - 1$, we have $E_R^I \subseteq A^{i_0, i_m}_{R^{2^m}}$, where
\begin{align}
  A^{i_0, i_m}_{R^{2^m}} = \{t_1 < t_2 < \dots < t_{|E(\gamma)|} : \text{$t_{i_m} < t_{i_m + 1}^{R^{2^m}}$ and $t^{R^{2^m}}_{i_m} < t_{i_0}$}\}.
\end{align}
Recall that Theorem \ref{thm:ERcontainedinAi} stated this result for the special case where $i_0 = 1$.

Note that the procedure we describe below should be carried out separately in $S_{\sigma}$ for each permutation $\sigma$.  However, we work with $S_{\text{id}}$ without loss of generality for notational simplicity.

\begin{thm}\label{thm:inductive_step}
  For any sequence $I^{(1)}, \dots, I^{(p)}$ as in Corollary \ref{cor:refined_cover}, for nonnegative integers $N_1', \dots, N_p'$, we can construct $f_{\gamma, I}^{N_1', \dots, N_p'}(\mathbf{t})[\alpha]$ by iterative Taylor expansion of $f_{\gamma, I}(\mathbf{t})[\alpha]$ so that for all $\mathbf{t} \in \overline{E}^{I^{(1)}}_{R} \cap \overline{E}^{I^{(2)}}_{R}\dots \cap \overline{E}^{I^{(p)}}_{R}$,
  \begin{align}
    |f_{\gamma, I}(\mathbf{t})[\alpha] - f^{N_1', \dots, N_p'}_{\gamma, I}(\mathbf{t})[\alpha]| \preceq \|\alpha\|_l^{|T(\gamma)|} \sum_{i = 1}^p t_{|E(\gamma)|}^{d_i},
  \end{align}
  where $l$ is some positive integer, and where $d_i = d_i(N_1', \dots, N_i')$ increases linearly in $N_i'$ for $N'_1, \dots, N'_{i - 1}$ fixed and sufficiently large.   When working globally on $M = \RR^n$, $\|\alpha\|_l$, the $C^l$ seminorm, is replaced with $\|\alpha\|_{(q, l)}$, a Schwartz seminorm, for some sufficiently high power $q$.
\end{thm}

\begin{proof}
For simplicity we shall assume that $p = 2$.  The argument is essentially the same in general, but we would like to avoid some details at the end that would make the notation even more complex.  Let $i^{(1)} = i^{(1)}_{m^{(1)}}$ and let $R_1 = R^{s_{m^{(1)} + 1}}$ and $R_2 = R^{s_{m^{(2)} + 1}}$, where recall that $s_i := 2^{i - 1}$, so that we are working within
  \begin{align}
    \overline{E}_R^{I^{(1)}} \cap \overline{E}_R^{I^{(2)}} &\subseteq \overline{A}^{1, i^{(1)}}_{R_1} \cap \overline{A}^{i^{(1)} + 1, k}_{R_2}\\
                                                           &= \{\mathbf{t} \in S_{\text{id}} : \text{$t^{R_1}_{i^{(1)}} \leq t_1$ and $t_{i^{(1)}} \leq t^{R_1}_{i^{(1)} + 1}$ and $t^{R_2}_{|E(\gamma)|} \leq t_{i^{(1)} + 1}$} \}
  \end{align}
  where the inclusion follows from the generalized form of Theorem \ref{thm:ERcontainedinAi}.  Note that $i^{(2)}_{m^{(2)}} = |E(\gamma)|$.  

   The collection of edges $e_1, \dots, e_{i^{(1)}}$ determines a subgraph of $\gamma$, which we denote by $\gamma'$.  Note that we include in $\gamma'$ all the half edges coming from edges in $\gamma$ that are incident on vertices of $\gamma'$.  In the same way, the remaining vertices $e_{i^{(1)} + 1}, \dots, e_{|E(\gamma)|}$ induce a subgraph that we denote by $\gamma \setminus \gamma'$.  The integral in the formula for $f_{\gamma, I}(\mathbf{t})[\alpha]$ is over $\RR^{n|V(\gamma)|}$ and we can order the integration so that we integrate first with respect to the vertices in $V(\gamma')$ and then by the vertices of $V(\gamma \setminus \gamma')$.  Our argument proceeds by finding bounds that hold on $\gamma'$ and its complement $\gamma \setminus \gamma'$.   

 Firstly, we define $g_{\gamma', \gamma, I}(\mathbf{t})[\alpha]$ to be the integral over $\RR^{n|V(\gamma')|}$ of the product of
  \begin{align}\label{eq:inner_int}
    \prod_{v \in V(\gamma')} a^{I_v^1, \dots, I_v^{k(v)}}(x_v) \prod_{e \in E(\gamma')} \frac{\partial K_{t_e}(x_{v_1(e)}, x_{v_2(e)})}{\partial x^{I_{v_1(e)}^{Q_1(e)}}\partial x^{I_{v_2(e)}^{Q_2(e)}}} 
  \end{align}
  and
  \begin{align}
    \prod_{e \in E(\gamma', \gamma)} \frac{\partial K_{t_e}(x_{v_1(e)}, x_{v_2(e)})}{\partial x^{I_{v_1(e)}^{Q_1(e)}}} \prod_{h \in T(\gamma', \gamma)} \frac{\partial \alpha(x_{v(h)})}{\partial x^{I_{v(h)}^{Q(h)}}}
  \end{align}
  Here we have defined $E(\gamma', \gamma)$ be the set of all edges in $\gamma$ for which one half edge (by convention the first half edge) making up the edge is a tail in $\gamma'$, and we have defined $T(\gamma', \gamma)$ to be the collection of tails of $\gamma'$ that are also tails of $\gamma$.  By definition, $g_{\gamma', \gamma, I}(\mathbf{t})[\alpha]$ is a function of $\mathbf{t} = \{t_e\}_{e \in E(\gamma') \cup E(\gamma', \gamma)}$ and $x_v$ for $v \in V(\gamma', \gamma)$, the set of second vertices of edges in $E(\gamma', \gamma)$.

  Secondly, we define for any $g$, a function of the variables $\{x_v\}_{v \in V(\gamma', \gamma)}$, $f_{\gamma \setminus \gamma', I, g}(\mathbf{t})$ to be the integral over $\RR^{n|V(\gamma) \setminus V(\gamma')|}$ of the product of 
  \begin{align}
    \prod_{v \in V(\gamma \setminus \gamma')} a^{I_v^1, \dots, I_v^{k(v)}}(x_v) \prod_{e \in E(\gamma \setminus \gamma')} \frac{\partial K_{t_e}(x_{v_1(e)}, x_{v_2(e)})}{\partial x^{I_{v_1(e)}^{Q_1(e)}}\partial x^{I_{v_2(e)}^{Q_2(e)}}}
  \end{align}
  and
  \begin{align}
    \frac{\partial g}{\prod_{h \in S(\gamma \setminus \gamma', \gamma)} \partial x^{I_{v(h)}^{Q(h)}}} \prod_{h \in T(\gamma \setminus \gamma', \gamma) } \frac{\partial \alpha(x_{v(h)})}{\partial x^{I_{v(h)}^{Q(h)}}}.
  \end{align}
  Here we have defined $S(\gamma \setminus \gamma', \gamma)$ to be the set of all tails in $\gamma \setminus \gamma'$ that are not tails in $\gamma$, and we have defined $T(\gamma \setminus \gamma', \gamma)$ to be the set of tails in $\gamma \setminus \gamma'$ that are also tails in $\gamma$.  We shall call the elements of $S(\gamma \setminus \gamma', \gamma)$ special tails.  By definition $f_{\gamma, I}(\mathbf{t})[\alpha] = f_{\gamma/\gamma', I, g_{\gamma', \gamma, I}}(\mathbf{t})$.

  Using the same procedure which led to Theorem \ref{thm:f_minus_fNprime}, we have for
  \begin{align}
    E^{N_1'}_{\gamma', \gamma, I}(\mathbf{t})[\alpha] = \left|g_{\gamma', \gamma, I}(\mathbf{t})[\alpha] - g_{\gamma', \gamma, I}^{N_{1}'}(\mathbf{t})[\alpha]\right|
  \end{align}
  that
  \begin{align}
    E^{N_1'}_{\gamma', \gamma, I}(\mathbf{t})[\alpha] \preceq \left(\sum_p \prod_{h \in T(\gamma', \gamma)} \|\alpha\|_{p_h} \prod_{e \in E(\gamma', \gamma)} \|K_{t_e}\|_{p_{h(e)}} \right)t_{i^{(1)}}^{N_1'/2 + C(\gamma', n, R_1)}
  \end{align}
  where $h(e)$ is the second half edge of $e \in E(\gamma', \gamma)$ and
  \begin{align}
    C(\gamma', n, R_1) = {1/2 + (|V(\gamma')| - 1)n/2 - R_1\left(O(\gamma') + |E(\gamma')|n/2 \right)}.
  \end{align}
  and where the sum is over all multi-indices $p:T(\gamma') \to \ZZ^{\geq 0}$ such that $\sum_{h \in T(\gamma')}p_h \leq O(\gamma') + N_1' + 1$.

  But for $e \in E(\gamma', \gamma)$,
  \begin{align}
    \|K_{t_e}\|_{p_{h(e)}} \preceq  t_e^{-n/2 - p_{h(e)}} \leq t_{i^{(1)} + 1}^{-n/2 - p_{h(e)}}
  \end{align}
  for some constant $C$, and thus
  \begin{align}
    \prod_{e \in E(\gamma', \gamma)} \|K_{t_e}\|_{p_{h(e)}} & \preceq t_{i^{(1)} + 1}^{-n|E(\gamma', \gamma)|/2 - \sum_{e \in E(\gamma', \gamma)} p_{h(e)}}\\
                                                            & \leq t_{i^{(1)} + 1}^{-n|E(\gamma)|/2 - O(\gamma') - N_1' - 1}
  \end{align}
  So finally, we have
  \begin{align}
    E^{N_1'}_{\gamma', \gamma, I}(\mathbf{t})[\alpha] \preceq \left(\sum_p \prod_{h \in T(\gamma', \gamma)} \|\alpha\|_{p_h} \right)t_{i^{(1)} + 1 }^{(R_1/2 - 1)N_1'  + R_1 C(\gamma', n, R_1) - n|E(\gamma)|/2 - O(\gamma') - 1}.
  \end{align}
  As long as $R_1 > 2$, $E^{N_1'}_{\gamma', \gamma, I}(\mathbf{t})[\alpha] = \left|g_{\gamma', \gamma, I}(\mathbf{t})[\alpha] - g_{\gamma', \gamma, I}^{N_{1}'}(\mathbf{t})[\alpha]\right|$ and consequently
  \begin{align}
    \left|f_{\gamma \setminus \gamma', I, g_{\gamma', \gamma, I}}(\mathbf{t})[\alpha] - f_{\gamma \setminus \gamma', I, g^{N_1'}_{\gamma', \gamma, I}}(\mathbf{t})[\alpha]\right|
  \end{align}
  will be bounded by a power of $t_{|E(\gamma)|}$, which we denote $d_1(N_1')$, that grows linearly with $N_1'$.

  To finish the argument, we need to find a bound for
  \begin{align}
    E^{N_1', N_2'}_{\gamma', \gamma, I}(\mathbf{t})[\alpha] = \left|f_{\gamma \setminus \gamma', I, g^{N_1'}_{\gamma', \gamma, I}}(\mathbf{t})[\alpha] - f^{N_2'}_{\gamma \setminus \gamma', I, g^{N_1'}_{\gamma', \gamma, I}}(\mathbf{t})[\alpha]\right|
  \end{align}
  As in the proof of Theorem \ref{thm:f_minus_fNprime}, we have $\left|f_{\gamma \setminus \gamma', I, g}(\mathbf{t})[\alpha] - f^{N_2'}_{\gamma \setminus \gamma', I, g}(\mathbf{t})[\alpha]\right| \preceq$
  \begin{align}
    \left(\sum_p\prod_{h \in T(\gamma \setminus \gamma', \gamma)} \|\alpha\|_{p_h}\right) \|g\|_l\, t_{|E(\gamma)|}^{(N_2' + 1)/2 + n(|V(\gamma \setminus \gamma')| - 1)/2 - R_2(O(\gamma \setminus \gamma') + n|E(\gamma \setminus \gamma')|/2)}\\
  \end{align}
  By reasoning similar to Proposition \ref{prop:intAKbounds}, we have $\|g^{N_1'}_{\gamma', \gamma, I}(\mathbf{t})[\alpha]\|_l \preceq$ 
  \begin{align}
    &\left(\sum_p \prod_{h \in T(\gamma', \gamma)} \|\alpha\|_{p_h} \prod_{e \in E(\gamma', \gamma)} \|K_{t_e}\|_{p_{h(e)}} \right)t_{i^{(1)}}^{n(|V(\gamma')| - 1)/2 - R_1(O(\gamma') + n|E(\gamma')|/2)}\\
    \preceq & \left(\sum_p \prod_{h \in T(\gamma', \gamma)} \|\alpha\|_{p_h} \right)t_{i^{(1)} + 1}^{C_1(N_1', \gamma', \gamma, n, R_1) - l}
  \end{align}
for some constant $C_1(N_1', \gamma', \gamma, n, R_1)$.
Therefore, we have
\begin{align}
  E^{N_1', N_2'}_{\gamma', \gamma, I}(\mathbf{t})[\alpha] \preceq \left(\sum_p\prod_{h \in T(\gamma)} \|\alpha\|_{p_h}\right) t_{|E(\gamma)|}^{N_2'/2 + C_2(N_1', \gamma', \gamma, n, R_1, R_2)}
\end{align}
for some constant $C_2(N_1', \gamma', \gamma, n, R_1, R_2)$.

But
\begin{align}
  \left(\sum_p\prod_{h \in T(\gamma)} \|\alpha\|_{p_h}\right) \preceq \|\alpha\|_l^{|T(\gamma)|}
\end{align}
for some positive integer $l$.  In conclusion, using the triangle inequality, we can bound
  \begin{align}
    |f_{\gamma/\gamma', I, g_{\gamma', \gamma, I}}(\mathbf{t})[\alpha] - f^{N'_2}_{\gamma/\gamma', I, g^{N_1}_{\gamma, \gamma', I}}(\mathbf{t})[\alpha]| \preceq\|\alpha\|_l^{{|T(\gamma)|}}\left(t_{|E(\gamma)|}^{d_1(N'_1)} + t_{|E(\gamma)|}^{d_2(N_1', N_2')}\right)
  \end{align}
  where by $d_1(N_1')$ grows linearly in $N_1'$ and $d_2(N_1', N_2')$ grows linearly in $N_2'$ for $N_1'$ fixed.
\end{proof}

\subsection{Local Counterterms on a Flat Manifold with Boundary}
\label{sec:count_upp_half}
As we previously wrote, in the case of the Euclidean half space $\HH^n$, the Dirichlet heat kernel is given by
\begin{align}
  K_t(x_1, x_2) = (4\pi t)^{-n/2}[e^{-|x_1 - x_2|^2/4t} - e^{-|x_1 - x_2^*|^2/4t}]
\end{align}
and the Neumann heat kernel is given by
\begin{align}
  K_t(x_1, x_2) = (4\pi t)^{-n/2}[e^{-|x_1 - x_2|^2/4t} + e^{-|x_1 - x_2^*|^2/4t}]
\end{align}
where $x_2^{*}$ is the reflection through the boundary.  In coordinates near the boundary of a flat manifold with boundary we shall also assume that the heat kernel is of this form.  The justification for this will be given in Section \ref{sec:count-comp-manif-bdry}.

We give a combined analysis of the procedure for both the Dirichlet and Neumann heat kernel by absorbing any signs into constants in the formulas, and, as we did in the case of a flat manifold without boundary, a combined analysis of the global $\HH^n$ case and the local case.

Slightly modifying our procedure in Section \ref{sec:count_rn}, we form 
\begin{align}
  w_\gamma(P_\epsilon^L, I)[\alpha] = \sum_{\beta} \int_{(\epsilon, L)^{|E(\gamma)|}} f_{\gamma, I, \beta}(\mathbf{t})[\alpha]
\end{align}
where $\beta$ ranges over all functions $E(\gamma) \to \{-1, 1\}$ and
\begin{align}\label{fgammaibeta}
  f_{\gamma, I, \beta}(\mathbf{t})[\alpha] = \sum_{-O(\gamma) \leq |J| \leq 0 } \mathbf{t}^{J - n/2}  \int_{{\HH^n}^{|V(\gamma)|}} e^{-\sum_{e \in E(\gamma)} Q^{(\beta_e)}_e/4 t_e}  \Phi_{J, \beta}
\end{align}
where $Q_e^{(1)} = |x_{v_1(e)} - x_{v_2(e)}|^2$ and $Q_e^{(-1)} = |x_{v_1(e)} - x_{v_2(e)}^*|^2$.  We wish to apply Wick's theorem after taking the Taylor expansion of $\Phi_{J, \beta}$.  

\subsubsection{One Possible Coordinate System on $\HH^{n|V(\gamma)|}$}\label{sec:one-poss-coord}
Using the decomposition $\HH^{n} = \RR^{(n - 1)} \times \RR_{\geq 0}$ define $x_v = (\overline{x}_v, x_{v, n})$.  For the choice of a spanning tree $T$, the coordinates
\begin{align}
  \overline{y}_e &= \overline{x}_{v_1(e)} - \overline{x}_{v_2(e)}\\
  y_{e, n} &= x_{v_1(e), n} - \beta(e) x_{v_2(e), n}
\end{align}
for $e \in E(T)$ and
\begin{align}
  \overline{w} &= \sum_{v \in V(\gamma)} x_v\\
  w_n &= x_{v_0, n},
\end{align}
for some choice of vertex $v_0$, form a coordinate system on $\HH^{n|V(\gamma)|}$.  Note that as an alternative, one could define $\overline{w} = \overline{x}_{v_0}$.  We write $y_e = (\overline{y}_e, y_{e, n})$ and $w = (\overline{w}, w_n)$.

The quadratic form $Q_{\mathbf{t}}(x) = \sum_{e \in E(\gamma)} Q_e(x)/4t_e$ is equal to
\begin{align}
  Q_{\mathbf{t}}(\overline{x}) = \sum_{e \in E(\gamma)} |\overline{x}_{v_1(e)} - \overline{x}_{v_2(e)}|/4t_e 
\end{align}
plus the part depending on the variables $x_{v, n}$
\begin{align}
  Q^{(\beta)}_{\mathbf{t}}(x_n) = \sum_{e \in \beta^{-1}(1)}|x_{v_1(e), n} - x_{v_1(e), n}|^2/4t_e + \sum_{e \in \beta^{-1}(-1)}|x_{v_1(e), n} + x_{v_1(e), n}|^2/4t_e
\end{align}

We would like to express $Q_{\mathbf{t}}(\overline{x})$ and $Q^{(\beta)}_{\mathbf{t}}(x_n)$ in terms of the variables $y_{e, n}$ and $w_n$.  The quadratic form $Q_{\mathbf{t}}(\overline{x})$, as in the proof of Proposition \ref{prop:qindepw}, is given by
\begin{align}
  Q_{\mathbf{t}}(\overline y) = \sum_{e \in E(\gamma)}\left| \sum_{i = 1}^{l(e)}\overline y_{f^e_i}\right|^2/4t_e,
\end{align}
As before, for any edge $e \in E(\gamma)$, we define $f^e_1, \dots, f^e_{l(e)}$ to be the unique path of edges in $T$ connecting $v_1(e)$ and $v_2(e)$ 

For the direction normal to the boundary, suppose that $f^v_1, \dots, f^v_{l(v)}$ is the sequence of edges in $T$ forming a path between $v_0$ and some vertex $v$.  We have
\begin{align}
  \sum_{i = 1}^{l(v)}\left(\prod_{j = 1}^i\beta(f_j^v)\right)y_{f^v_i, n} &= \sum_{i = 1}^{l(v)}\left(\prod_{j = 1}^i\beta(f_j^v)\right)(x_{v_2(f_i^v), n} - \beta(f_i^v) x_{v_1(f_i^v), n})\\
  &= \left(\prod_{j = 1}^{l(v)}\beta(f_j^v)\right)x_{v, n} - x_{v_0, n}
\end{align}
Therefore, $|x_{v_1(e), n} - \beta(e) x_{v_2(e), n}|^2$ can depend on $w_n$.

The quadratic form $Q_{\mathbf{t}}^{(\beta)}(x_n)$ decomposes into a sum of two terms
\begin{align}
  \widetilde{Q}_{\mathbf{t}}^{(\beta)}(y_n, w_n) + Q^{(\beta)}_{\mathbf{t}}(w_n), 
\end{align}
where $\widetilde{Q}_{\mathbf{t}}^{(\beta)}(y_n, w_n)$ is a nondegenerate quadratic form in $y_n$ that is possibly inhomogeneous (with no constant part) and $Q^{(\beta)}_{\mathbf{t}}(w_n)$ is a homogeneous quadratic form in $w_n$.  We say that $\beta$ is in the first case if $Q_{\mathbf{t}}^{(\beta)}(x_n)$ only depends on $y_n$ and $\beta$ is in the second case if $Q_{\mathbf{t}}^{(\beta)}(x_n)$ depends on both $y_n$ and $w_n$.

From the fact that $x_{v, n} \in \RR_{\geq 0}$ for all $v \in V(\gamma)$ and therefore $w_n \in \RR_{\geq 0}$, it follows that $y_{e, n}$ ranges over some polyhedron $P(w_n) \subset \RR^{|V(\gamma)| - 1}$.

\subsubsection{A Second Possible Coordinate System on $\HH^{n|V(\gamma)|}$}
As in the case of $\RR^n$ for the choice of a spanning tree $T$, we can use the coordinates $y_e = x_{v_1(e)} - x_{v_2(e)}$ for $e \in E(T)$ and
\begin{align}
  w = \sum_{v \in V(\gamma)} x_v
\end{align}
form a coordinate system on $\RR^{n|V(\gamma)|}$.

Using the decomposition $\HH^{n} = \RR^{(n - 1)} \times \RR_{\geq 0}$ define $x_v = (\overline{x}_v, x_{v, n})$, $y_e = (\overline{y}_e, y_{e, n})$ and $w = (\overline{w}, w_n)$.  The quadratic form $Q_{\mathbf{t}}(x) = \sum_{e \in E(\gamma)} Q_e(x)/4t_e$ is equal to
\begin{align}
  Q_{\mathbf{t}}(\overline{x}) = \sum_{e \in E(\gamma)} |\overline{x}_{v_1(e)} - \overline{x}_{v_2(e)}|/4t_e 
\end{align}
plus the part depending on the variables $x_{v, n}$
\begin{align}
  Q^{(\beta)}_{\mathbf{t}}(x_n) = \sum_{e \in \beta^{-1}(1)}|x_{v_1(e), n} - x_{v_1(e), n}|^2/4t_e + \sum_{e \in \beta^{-1}(-1)}|x_{v_1(e), n} + x_{v_1(e), n}|^2/4t_e
\end{align}

We would now like to express $Q^{(\beta)}_{\mathbf{t}}(x_n)$ in terms of the variables $y_{e, n}$ and $w_n$.
While
\begin{align}
  \sum_{e \in \beta^{-1}(1)}|x_{v_1(e), n} - x_{v_1(e), n}|^2/4t_e &= \sum_{e \in \beta^{-1}(1)}\left|\sum_{i = 1}^{l(e)}y_{f^e_i, n}\right|^2\bigg/4t_e,
\end{align}
unfortunately $\sum_{e \in \beta^{-1}(-1)}|x_{v_1(e), n} + x_{v_1(e), n}|^2/4t_e$ has the more complicated expression
\begin{align}
   \frac{1}{|V(\gamma)|^2}\sum_{e \in \beta^{-1}(-1)} \left| 2w_n + \sum_{v' \neq v_1(e)}\sum_{i = 1}^{l_1(v')}y_{f^{v', 1}_i, n} + \sum_{v' \neq v_2(e)}\sum_{i = 1}^{l_2(v')}y_{f^{v', 2}_i, n} \right|^2\bigg/4t_e
\end{align}
where $f^{v', 1}_1, \dots, f^{v', 1}_{l_1(v')}$ is the unique path of edges in $T$ connecting $v_1(e)$ and $v'$ and $f^{v', 2}_1, \dots, f^{v', 2}_{l_2(v')}$ is the unique path of edges in $T$ connecting $v_2(e)$ and $v'$.  Therefore the quadratic form $Q_{\mathbf{t}}^{(\beta)}(x_n)$ decomposes into a sum of two terms
\begin{align}
  \widetilde{Q}_{\mathbf{t}}^{(\beta)}(y_n, w_n) + Q^{(\beta)}_{\mathbf{t}}(w_n), 
\end{align}
where $\widetilde{Q}_{\mathbf{t}}^{(\beta)}(y_n, w_n)$ is a quadratic form in $y_n$ that is possibly inhomogeneous and possibly degenerate in $y_n$ and lastly
\begin{align}
  Q^{(\beta)}_{\mathbf{t}}(w_n) = \frac{1}{|V(\gamma)|^2}\left(\sum_{e \in \beta^{-1}(-1)} t_e^{-1}\right)w_n^2.
\end{align}
From the fact that $x_{v, n} \in \RR_{\geq 0}$ for all $v \in V(\gamma)$ and therefore $w_n \in \RR_{\geq 0}$, it follows that $y_{e, n}$ ranges over some bounded polytope $P(w_n) \subset \RR^{|V(\gamma)| - 1}$.

The coordinate system in this section, compared to the one in Section \ref{sec:one-poss-coord}, has the benefit that $P(w_n)$ is bounded, but the drawback that $\widetilde{Q}_{\mathbf{t}}^{(\beta)}(y_n, w_n)$ can be degenerate in $y_n$.  It is still possible to carry out the construction of the counterterms using the coordinate system in this section analogously to the construction we give below using the coordinate system of Section \ref{sec:one-poss-coord}.  There is even the added benefit that if the interaction is translation invariant normal to the boundary, the counterterms will be as well.  The coordinate system of Section \ref{sec:one-poss-coord} tends to be easier to work with because of the non-degeneracy of $\widetilde{Q}_{\mathbf{t}}^{(\beta)}(y_n, w_n)$, so we focus on in exclusively in the next sections.

\subsubsection{Taylor Expansion of $\Phi_J$}
Beginning from \eqref{fgammaibeta}, replace $\Phi_{J, \beta}$ in $f_{\gamma, I, \beta}(\mathbf{t})[\alpha]$ with its Taylor polynomial of degree $N'$ in $y$,
\begin{align}
  \Phi_{J, \beta}^{N'}(w, y) = \sum_{|\overline{K}| + |K_n| \leq N'} c_{J, \overline{K}, K_n}(w)\overline{y}^{\overline{K}}y_n^{K_n},
\end{align}
where $N'$ is a non-negative integer to be determined.  This gives
\begin{align}
  f^{N'}_{\gamma, I, \beta}(\mathbf{t})[\alpha] &=  \sum_{\substack{|\overline{K}| + |K_n| \leq N'\\ -O(\gamma) \leq |J| \leq 0} }\mathbf{t}^{J - n/2}\overline{\mathcal{I}}^{\overline{K}}(\mathbf{t})  \int_{\HH^n} e^{-Q_{\mathbf{t}}^{(\beta)}(w_n)}\mathcal{I}_{\beta}^{K_n}(w_n, \mathbf{t}) c_{J, \overline{K}, K_n}(w) \, dw 
\end{align}
where
\begin{align}
  \overline{\mathcal{I}}^{\overline{K}}(\mathbf{t}) = \int_{\RR^{(n - 1)(|V(\gamma)| - 1)}} e^{-Q_{\mathbf{t}}(\overline{y})} \overline{y}^{\overline{K}} \, d\overline{y}.
\end{align}
and
\begin{align}
  \mathcal{I}_{\beta}^{K_n}(w_n, \mathbf{t}) = \int_{P(w_n)}e^{-\widetilde{Q}_{\mathbf{t}}^{(\beta)}(y_n, w_n)} y_n^{K_n} \, dy_n
\end{align}

The integral over $\overline y$ defining $\overline{\mathcal{I}}^{\overline{K}}(\mathbf{t})$ gives an answer like that of Corollary \ref{cor:ratio_p_q}, but with the dimension $n$ replaced by $n - 1$ in all the formulas. 
The problem is that $f_{\gamma, I, \beta}^{N'}(\mathbf{t})[\alpha]$ is not quite the product of a function of $\mathbf{t}$ and a local functional because of the factor $e^{-Q_{\mathbf{t}}^{(\beta)}(w_n)}\mathcal{I}_{\beta}^{K_n}(w_n, \mathbf{t})$ that depends on both $w_n$ and $\mathbf{t}$.  To remedy this, we would like to introduce a Taylor expansion in $w_n$.

We split into two cases.  If $\beta$ is in the first case, $\widetilde{Q}_{\mathbf{t}}^{(\beta)}(y_n, w_n)$ is homogeneous and independent of $w_n$ and $\widetilde{Q}_{\mathbf{t}}^{(\beta)}(w_n) = 0$.  We write
\begin{align}
  \int_{\HH^n}\mathcal{I}_{\beta}^{K_n}(w_n, \mathbf{t}) c_{J, \overline{K}, K_n}(w) \, dw &= \mathcal{I}_{\beta}^{K_n}(\infty, \mathbf{t}) \int_{\HH^n}c_{J, \overline{K}, K_n}(w) \, dw \\
  &+ \int_{\HH^n}(\mathcal{I}_{\beta}^{K_n}(w_n, \mathbf{t}) - \mathcal{I}_{\beta}^{K_n}(\infty, \mathbf{t}))c_{J, \overline{K}, K_n}(w) \, dw
\end{align}
Note that $\mathcal{I}_{\beta}^{K_n}(\infty, \mathbf{t})$ exists because $\widetilde{Q}_{\mathbf{t}}^{(\beta)}(y_n, w_n)$ is non-degenerate.
The first integral is already a local functional of $\alpha$ on $\HH^n$ and carries on in the construction.  For the second integral, we need to take the Taylor expansion of $c_{J, \overline{K}, K_n}(w)$ in $w_n$ which we write as
\begin{align}
  c^{m'}_{J, \overline{K}, K_n}(w) = \sum_{i = 0}^{m'} d_{J, \overline{K}, K_n, i}(\overline{w})w_n^i.
\end{align}
That is, we define
\begin{align}
  f^{N', m'}_{\gamma, I, \beta}(\mathbf{t})[\alpha] &= \sum_{\substack{|\overline{K}| + |K_n| \leq N'\\ -O(\gamma) \leq |J| \leq 0} }\mathbf{t}^{J - n/2}\overline{\mathcal{I}}^{\overline{K}}(\mathbf{t}) \mathcal{I}_{\beta}^{K_n}(\infty, \mathbf{t}) \int_{\HH^n} c_{J, \overline{K}, K_n}(w) \, dw\\
  &+\sum_{\substack{|\overline{K}| + |K_n| \leq N'\\ -O(\gamma) \leq |J| \leq 0} }\sum_{i = 0}^{m'}\mathbf{t}^{J - n/2}\overline{\mathcal{I}}^{\overline{K}}(\mathbf{t})  \mathcal{J}^{K_n, i}_{\beta} (\mathbf{t})\int_{\RR^{n - 1}}d_{J, \overline{K}, K_n, i}(\overline{w}) \, d\overline{w}
\end{align}
where
\begin{align}
  \mathcal{J}^{K_n, i}_{\beta}(\mathbf{t}) = \int_{\RR_{\geq 0}} (\mathcal{I}_{\beta}^{K_n}(w_n, \mathbf{t}) - \mathcal{I}_{\beta}^{K_n}(\infty, \mathbf{t})) w_n^i dw_n
\end{align}
and now $\int_{\RR^{n - 1}}d_{J, \overline{K}, K_n, i}(\overline{w}) \, d\overline{w}$ is a local functional of $\alpha$ on $\RR^{n - 1}$.

If $\beta$ is in the second case, we again write
\begin{align}
  c^{m'}_{J, \overline{K}, K_n}(w) = \sum_{i = 0}^{m'} d_{J, \overline{K}, K_n, i}(\overline{w})w_n^i.
\end{align}
and define
\begin{align}
    f^{N', m'}_{\gamma, I, \beta}(\mathbf{t})[\alpha] &=  \sum_{\substack{|\overline{K}| + |K_n| \leq N'\\ -O(\gamma) \leq |J| \leq 0} } \sum_{i = 0}^{m'} t^{J - n/2}\overline{\mathcal{I}}^{\overline{K}}(\mathbf{t}) \mathcal{K}^{K_n, i}_{\beta}(\mathbf{t}) \int_{\RR^n} d_{J, \overline{K}, K_n, i}(\overline{w}) \, d\overline{w} 
\end{align}
where
\begin{align}
  \mathcal{K}^{K_n, i}_{\beta}(\mathbf{t}) = \int_{\RR_{\geq 0}} e^{-Q_{\mathbf{t}}^{(\beta)}(w_n)}\mathcal{I}_{\beta}^{K_n}(w_n, \mathbf{t}) w_n^i \, dw_n 
\end{align}
and again $\int_{\RR^{n - 1}}d_{J, \overline{K}, K_n, i}(\overline{w}) \, d\overline{w}$ is a local functional of $\alpha$ on $\RR^{n - 1}$, as desired.

\subsubsection{Bounding the Error}
We would like to bound the error
\begin{align}
  E^{N'}_{\gamma, I, \beta}(\mathbf{t})[\alpha] = |f_{\gamma, I, \beta}(\mathbf{t})[\alpha] - f_{\gamma, I, \beta}^{N'}(\mathbf{t})[\alpha]| 
\end{align}
Fix an ordering of the set of edges and assume that $\mathbf{t} \in S_{\text{id}}$ and $t_{|E(\gamma)|}^R \leq t_1$ so that $\mathbf{t} \in A_R^{|E(\gamma)|}$ and that $\mathbf{t} \in (0, 1)^{|E(\gamma)|}$.

There are two cases:  If $\beta$ is in the first case, we can bound $E^{N'}_{\gamma, I, \beta}(\mathbf{t})[\alpha]$ using the arguments leading up to Theorem \ref{thm:f_minus_fNprime}.  More precisely, we have the bound
\begin{align}
  E^{N'}_{\gamma, I, \beta}(\mathbf{t})[\alpha] \preceq \left(\sum_p \prod_{h \in T(\gamma)} \|\alpha\|_{p_h} \right)t_{|E(\gamma)|}^{(N' + 1)/2 + n(|V(\gamma)| - 1)/2 - R(O(\gamma) + n|E(\gamma)|/2)}
\end{align}
However, it remains to bound
\begin{align}
  E^{N',m'}_{\gamma, I, \beta}(\mathbf{t})[\alpha] = |f_{\gamma, I, \beta}^{N'}(\mathbf{t})[\alpha] - f^{N', m'}_{\gamma, I, \beta}(\mathbf{t})[\alpha]|.
\end{align}
By similar reasoning to that used in Proposition \ref{prop:intAKbounds}, we have
\begin{align}
  |\mathcal{I}^{\overline{K}}(\mathbf{t})| \preceq t_{|E(\gamma)|}^{(|\overline{K}| + (n - 1)(|V(\gamma)| - 1))/2} \preceq t_{|E(\gamma)|}^{(n - 1)(|V(\gamma)| - 1)/2} 
\end{align}
Furthermore, 
\begin{align}
  \int_{\RR_{\geq 0}}|\mathcal{I}_{\beta}^{K_n}(w_n, \mathbf{t}) - \mathcal{I}_{\beta}^{K_n}(\infty, \mathbf{t})| w_n^{m + 1} \, dw_n
\end{align}
is less than or equal to 
\begin{align}
  &\int_{\RR_{\geq 0}}\left(\int_{\RR^{|V(\gamma)| - 1} \setminus P(w_n)}e^{-\widetilde{Q}_{\mathbf{t}}^{(\beta)}(y_n)} |y_n^{K_n}|  \, dy_n \right)w_n^{m' + 1}dw_n \\
  &\preceq t_{|E(\gamma)|}^{(|K_n| + m' + 1 + |V(\gamma)|)/2} \leq t_{|E(\gamma)|}^{(m' + 1 + |V(\gamma)|)/2}
\end{align}

Therefore $E^{N', m'}_{\gamma, I, \beta}(\mathbf{t})[\alpha] \preceq$
\begin{align}
   &\sum_{\substack{|\overline{K}| + |K_n| \leq N'\\ -O(\gamma) \leq |J| \leq 0} }\mathbf{t}^{J - n/2}\overline{\mathcal{I}}^{\overline{K}}(\mathbf{t})  \int_{\HH^n} |\mathcal{I}_{\beta}^{K_n}(w_n, \mathbf{t}) - \mathcal{I}_{\beta}^{K_n}(\infty, \mathbf{t})| |c_{J, \overline{K}, K_n}(w) - c^n_{J, \overline{K}, K_n}(w)| \, dw \\
                                                       &\preceq t_{|E(\gamma)|}^{-R(O(\gamma) + n|E(\gamma)|/2)} t_{|E(\gamma)|}^{(n - 1)(|V(\gamma)| - 1)/2} t_{|E(\gamma)|}^{(m' + 1 + |V(\gamma)|)/2}\int_{\RR^{n - 1}} e(\overline{w}) \, d\overline{w}
\end{align}
We have
\begin{align}
  \int e(\overline{w}) \, d\overline{w} \preceq \sum_p \prod_{h \in T(\gamma)} \|\alpha\|_{p_h}
\end{align}
where the summation is over multi-indices $p: T(\gamma) \to \ZZ^{\geq 0}$ such that $\sum_{h \in T(\gamma)} p_h \leq O(\gamma) + N' + m' + 1$.

In conclusion, we have shown that
\begin{thm}\label{thm:f_minus_fNprime_bdry1}
  In the case $\widetilde{c}^{\beta}(e) = 0$ for all $e \in E(\gamma)$,
  \begin{align}
    E^{N', m'}_{\gamma, I, \beta}(\mathbf{t})[\alpha] \preceq \left(\sum_p \prod_{h \in T(\gamma)} \|\alpha\|_{p_h} \right)t_{|E(\gamma)|}^{(m' + 2)/2 + n(|V(\gamma)| - 1)/2 - R(O(\gamma) + n|E(\gamma)|/2)}
  \end{align}
  where the summation is over multi-indices $p: T(\gamma) \to \ZZ^{\geq 0}$ such that $\sum_{h \in T(\gamma)} p_h \leq O(\gamma) + N' + 'm + 1$.
\end{thm}

On the other hand, if $\beta$ is in the second case, we have that $E^{N'}_{\gamma, I, \beta}(\mathbf{t})[\alpha]$ is less than or equal to a constant times $t_{|E(\gamma)|}^{-R(O(\gamma) + n|E(\gamma)|/2)}$ times
\begin{align}
  &\sum_{|\overline{K}| + |K_n| = N' + 1} \int_{\RR^{(n - 1)|V(\gamma)|}} e^{-Q_{\mathbf{t}}(\overline{y})}|\overline{y}^{\overline{K}}| \int_{\HH^n_w}\int_{P(w_n)} e^{ - \widetilde{Q}^{(\beta)}_{\mathbf{t}}(y_n, w_n) - Q^{(\beta)}_{\mathbf{t}}(w_n)}  |y_n^{K_n}| e_{\overline{K}, K_n}(w)\\
  &\preceq t_{|E(\gamma)|}^{(N' + 1)/2 + n(|V(\gamma)| - 1)/2)}\sum_{|K| = N' + 1}\int_{\HH^n_w} e_{\overline{K}, K_n}(\overline{w}, w_n )
\end{align}
Therefore we have the bound
\begin{thm}\label{thm:f_minus_fNprime_bdry2}
  If $\beta$ is in the second case,
  \begin{align}
    E^{N'}_{\gamma, I, \beta}(\mathbf{t})[\alpha] \preceq \left(\sum_p \prod_{h \in T(\gamma)} \|\alpha\|_{p_h} \right)t_{|E(\gamma)|}^{(N' + 1)/2 + n(|V(\gamma)| - 1)/2 - R(O(\gamma) + n|E(\gamma)|/2)}
  \end{align}
  where the summation is over multi-indices $p: T(\gamma) \to \ZZ^{\geq 0}$ such that $\sum_{h \in T(\gamma)} p_h \leq O(\gamma) + N' + 1$.
\end{thm}

Now note that
\begin{align}
  \int_{\RR^{\geq 0}}e^{-Q_{\mathbf{t}}^{(\beta)}(w_n)}\mathcal{I}_{\beta}^{K_n}(w_n, \mathbf{t}) w_n^{m' + 1} \, dw_n
\end{align}
is equal to
\begin{align}
  \int_{\RR_{\geq 0}} e^{-Q_{\mathbf{t}}^{(\beta)}(w_n)}\int_{P(w_n)}e^{-\widetilde{Q}_{\mathbf{t}}^{(\beta)}(y_n, w_n)} |y_n^{K_n}| w_n^{m' + 1} \, dy_ndw_n &\preceq t_{|E(\gamma)|}^{(|K_n| + m' + 1 + V(\gamma))/2} \\
  &\leq t_{|E(\gamma)|}^{(m' + 1 + V(\gamma))/2}
\end{align}
Thus, $E^{N', m'}_{\gamma, I, \beta}(\mathbf{t})[\alpha]$ is less than or equal to a constant times
\begin{align}
  &\sum_{\substack{|\overline{K}| + |K_n| \leq N'\\ -O(\gamma) \leq |J| \leq 0} }t^{J - n/2}\overline{\mathcal{I}}^{\overline{K}}(\mathbf{t})  \int_{\HH^n} e^{-Q_{\mathbf{t}}^{(\beta)}(w_n)}\mathcal{I}_{\beta}^{K_n}(w_n, \mathbf{t}) |c_{J, \overline{K}, K_n}(w) - c^n_{J, \overline{K}, K_n}(w)| \, dw
\end{align}
which is bounded by the same power of $t_{|E(\gamma)|}$ as when $\beta$ is in the second case.
\begin{thm}\label{thm:f_minus_fNprime_bdry3}
  If $\beta$ is in the second case,
  \begin{align}
    E^{N', m'}_{\gamma, I, \beta}(\mathbf{t})[\alpha] \leq \left(\sum_p \prod_{h \in T(\gamma)} \|\alpha\|_{p_h} \right)t_{|E(\gamma)|}^{(m' + 2)/2 + n(|V(\gamma)| - 1)/2 - R(O(\gamma) + n|E(\gamma)|/2)}
  \end{align}
  where the summation is over multi-indices $p: T(\gamma) \to \ZZ^{\geq 0}$ such that $\sum_{h \in T(\gamma)} p_h \leq O(\gamma) + N' + m' + 1$.
\end{thm}

\subsubsection{Inductive Construction of the Counterterms}
\label{sec:induct-constr-count-bdry}
The proof follows along the lines of the proof of Theorem \ref{thm:inductive_step} in Section \ref{sec:induct-constr-count}.  The proof of Theorem \ref{thm:inductive_step}, which was given only in the case $p = 2$ involves two steps.

In the first step, we show that $\left|g_{\gamma', \gamma, I}(\mathbf{t})[\alpha] - g_{\gamma', \gamma, I}^{N_1'}(\mathbf{t})[\alpha]\right|$ is bounded by $t_{|E(\gamma)|}$ to a power that grows linearly in $N_1'$.  This implies that 
\begin{align}
  \left|f_{\gamma, I}(\mathbf{t})[\alpha] - f_{\gamma\setminus\gamma', I, g^{N_1'}_{\gamma', \gamma, I}}(\mathbf{t})[\alpha]\right|
\end{align}
is bounded by a power of $t_{|E(\gamma)|}$ that grows linearly in $N_1'$.
Now we also have $\left|g_{\gamma', \gamma, I}^{N_1'}(\mathbf{t})[\alpha] - g_{\gamma', \gamma, I}^{N_1', m_1'}(\mathbf{t})[\alpha]\right|$ is bounded by a power of $t_{|E(\gamma)|}$ that grows linearly in $m_1'$.  This implies that 
\begin{align}
  \left|f_{\gamma\setminus\gamma', I, g^{N_1'}_{\gamma', \gamma, I}}(\mathbf{t})[\alpha] - f_{\gamma\setminus\gamma', I, g^{N_1', m_1'}_{\gamma', \gamma, I}}(\mathbf{t})[\alpha]\right|
\end{align}
is bounded by a power of $t_{|E(\gamma)|}$ that grows linearly in $m_1'$.

In the second step, we show that 
\begin{align}
  \left|f_{\gamma \setminus \gamma', I, g^{N_1', m_1'}_{\gamma', \gamma, I}}(\mathbf{t})[\alpha] - f^{N_2'}_{\gamma \setminus \gamma', I, g^{N_1', m_1'}_{\gamma', \gamma, I}}(\mathbf{t})[\alpha]\right|
\end{align}
is bounded by a power of $t_{|E(\gamma)|}$ that grows linearly in $N_2'$ for $N_1'$ and $m_1'$ fixed.
Now we also have to show that
\begin{align}
  \left|f^{N_2'}_{\gamma\setminus\gamma', I, g^{N_1', m_1'}_{\gamma', \gamma, I}}(\mathbf{t})[\alpha] - f^{N_2', m_2}_{\gamma\setminus\gamma', I, g^{N_1', m_1'}_{\gamma', \gamma, I}}(\mathbf{t})[\alpha]\right|.
\end{align}
is bounded by a power of $t_{|E(\gamma)|}$ that grows linearly in $m_2$ for $N_2'$, $N_1'$ and $m_1'$ fixed.

In all bounds except the first, a potential problem is we now have to contend with the fact that $g^{N_1', m_1'}_{\gamma', \gamma, I}$ involves local functionals on the boundary $\RR^{n - 1}$.  But a local functional on $\RR^{n - 1}$ can be rewritten as a local functional on $\HH^n$ with total derivative integrand.

\subsection{Global Counterterms on a Curved Manifold}\label{sec:count-comp-manif}

\subsubsection{Replacing the Heat Kernel with Its Asymptotic Series and Bounding the Error}
\label{sec:rep_heat_asymp}

On a compact manifold without boundary, we have the asymptotic series for the scalar heat kernel
\begin{align}
  K_t(x, y) \sim (4\pi t)^{-n/2}e^{-d(x, y)^2/4t}\sum_{i = 0}^{\infty} \phi_i(x, y)t^i.
\end{align}
Let
\begin{align}
  K_t^{[N]}(x, y) = (4\pi t)^{-n/2}e^{-d(x, y)^2/4t}\sum^N_{i = 0} \phi_i(x, y)t^i.
\end{align}
and recall that the smooth functions $\phi_i(x, y)$ can be taken to be supported in an arbitrarily small neighborhood $U$ of the diagonal.
Beginning from \eqref{eq:f_gamma_cpct}, the formula for $f_{\gamma, I}(\mathbf{t})[\alpha]$, for each edge, we replace $K_t$ with $K_t^{[N]}$.  We let $f_{\gamma, I}^{[N]}$ denote the result of making all $|E(\gamma)|$ of such substitutions.  

An analogous statement to Proposition \ref{thm:derivatives-k_t} for the flat case can be made for $K_t^{[N]}$:
\begin{prop}
 For multi-indices $I = (i_1, \dots, i_n)$ and $I' = (i'_1, \dots, i'_n)$, $\frac{\partial K_t^{[N]}}{\partial x^I \partial x^{I'}}$ is equal to a Laurent polynomial in $t$ with coefficients in smooth functions of $x$ and $y$
 \begin{align}
   P_{I, I'}(x, y, 1/t) = \sum_{k = -|I| - |I'|}^N f_k(x, y)t^k
 \end{align}
multiplied by $(4\pi t)^{-n/2} e^{-d(x, y)^2/4t}$.
\end{prop}

Therefore, we have a formula 
\begin{align}
  f_{\gamma, I}^{[N]}(\mathbf{t})[\alpha] = \sum_{-O(\gamma) \leq |J| \leq |E(\gamma)|N }  \mathbf{t}^{J - n/2}\int_{D_{U, \gamma}} e^{-\sum_{e \in E(\gamma)} Q^{\text{nl}}_e/4 t_e} \Phi_J
\end{align}
where $Q^{\text{nl}}_e = d^2(x_{v_1(e)}, x_{v_2(e)})$, the square of geodesic distance.

The property of the asymptotic series for the scalar heat kernel implies in particular that for $t \in (0, 1)$
\begin{align}
  \|K_t - K_t^{[N]}\|_l \preceq t^{N - n/2 -l},
\end{align}
where $\|\cdot\|_l$ is the $C^l$ norm on $M \times M$.
Clearly, for $t \in (0, 1)$, $\|K_t^{[N]}\|_l \preceq t^{-n/2 - l}$.  Also note that an application of the triangle inequality using the asymptotic formula for $N = 0$ shows that $\|K_t\|_l \preceq t^{-n/2 - l}$.  
Assume that $\mathbf{t} \in S_{id} \cap (0, 1)^{|E(\gamma)|}$; recall that this means that we've ordered the edges and assumed that $t_1 \leq \dots \leq t_{|E(\gamma)|}$.  Then
\begin{align}
  |f_{\gamma, I}(\mathbf{t})[\alpha] - f_{\gamma, I}^{[N]}(\mathbf{t})[\alpha]| \preceq \sum_p\|\alpha\|^{|T(\gamma)|}_{O(\gamma)}\sum_{i = 1}^{|E(\gamma)|} t_i^{N - n/2 - p_i}\prod_{j \neq i} t_j^{- n/2 - p_j} 
\end{align}
where the sum is over collections $p$ of $|E(\gamma)|$ nonnegative integers $p_j$ with $\sum_j p_j \leq O(\gamma)$.

Now suppose that $t_{|E(\gamma)|}^R \leq t_1$.  Then
\begin{align}
  |f_{\gamma, I}(\mathbf{t})[\alpha] - f_{\gamma, I}^{[N]}(\mathbf{t})[\alpha]| \preceq \|\alpha\|^{|T(\gamma)|}_{O(\gamma)} t_{|E(\gamma)|}^{N - R(O(\gamma) + n|E(\gamma)|/2)}.
\end{align}

On a noncompact manifold, where the heat kernel is not unique, our starting point is what might be called a ``fake heat kernel'' $K_t^{\text{fake}}(x, y)$ and the propagator $P_{\epsilon}^L = \int_{\epsilon}^LK_t^{\text{fake}}(x, y)$.  By construction, the fake heat kernel is of the form $K_t^{\text{fake}}(x, y) = K_t^{[N]}(x, y)$ for some large $N$ and satisfies the equation $(\partial_t + D_x)K_t^{\text{fake}}(x, y) = C(t, x, y)$, where $D_x$ is the Laplacian acting in the $x$ direction and where $C(t, x, y)$ is a function on $\RR_{\geq 0} \times M \times M$ with $\lim_{t \to 0^+} C(t, x, y) = 0$.

On any smooth manifold without boundary, because $K^{[N]}_t(x, y)$ is supported in $U$, have a formula 
\begin{align}
  f_{\gamma, I}^{[N]}(\mathbf{t})[\alpha] = \sum_{-O(\gamma) \leq |J| \leq |E(\gamma)|N }  \mathbf{t}^{J - n/2}\int_{D_{U, \gamma}} e^{-\sum_{e \in E(\gamma)} Q^{\text{nl}}_e/4 t_e} \Phi_J
\end{align}
where $Q^{\text{nl}}_e = d^2(x_{v_1(e)}, x_{v_2(e)})$, the square of geodesic distance and where
\begin{align}
  D_{U, \gamma} := \{(x_1, \dots, x_{|V(\gamma)|}) \in M^{|V(\gamma)|} : \text{for all } e \in E(\gamma), (x_{v_1(e)}, x_{v_2(e)}) \in U \}.
\end{align}
We would like to cover $D_{U, \gamma}$ by coordinate neighborhoods.

\subsubsection{Taylor Expanding $\Phi_J$ and Bounding the Error}
While it is possible to construct a geometric version of spanning tree coordinates, we shall do something simpler.  For each point $x_0 \in M$, let $B_{x_0}$ be a geodesic ball around $x_0$.  Then the open sets $B_{x_0}^{|V(\gamma)|}$ form a cover of $D_{U, \gamma}$ by coordinate neighborhoods, for $U$ sufficiently small.  We denote these coordinates $\{x_v\}_{v \in V(\gamma)}$.  One can then make a linear change of coordinates to spanning tree coordinates (for example, of the distinguished vertex flavor).

Taking the Taylor expansion of $\Phi$ with respect to $\{y_e\}_{e \in E(T)}$, $f^{[N], N'}_{\gamma, I}(\mathbf{t})[\alpha]$ is equal to 
\begin{align}
  &\sum_{\substack{|K| \leq N'\\ -O(\gamma) \leq |J| \leq N|E(\gamma)|} } \mathbf{t}^{J - n/2}\int_{D_{U_{\delta}, \gamma}} e^{-\sum_{e \in E(\gamma)} Q^{\text{nl}}_e(w, y)/4 t_e} c_{J, K}(w) y^K\, dy dw\\
  = &\sum_{\substack{|K| \leq N', \text{$K$ even}\\ -O(\gamma) \leq |J| \leq N|E(\gamma)|} }\mathbf{t}^{J - n/2}  \int_B c_{J, K}(w) \mathcal{I}^K(w, \mathbf{t})\, dw
\end{align}
This differs from the case of $\RR^n$ in that
\begin{align}
  \mathcal{I}^K(w, \mathbf{t}) = \int_{\RR^{n(|V(\gamma)| - 1)}} e^{-Q^{\text{nl}}_{\mathbf{t}}(w, y)} y^K \, dy.
\end{align}
depends $w$ because $Q^{\text{nl}}_{\mathbf{t}}(w, y)$ depends on $w$ and thus $\int_U  c_{J, K}(w) \mathcal{I}^K(w, \mathbf{t})\, dw$ is not always a local functional multiplied by a function of $\mathbf{t}$.  However, in the case that $M$ is a Riemannian locally symmetric space, $Q^{\text{nl}}_{\mathbf{t}}(w, y) = Q^{\text{nl}}_{\mathbf{t}}(y) $ is independent of $w$, $\mathcal{I}_A^K(w, \mathbf{t}) = \mathcal{I}_A^K(\mathbf{t})$ comes out of the integral, and we produce local functionals $\int_U  c_{J, K}(w)$, as desired.  Note that $Q^{\text{nl}}_{\mathbf{t}}(y)$ is not necessarily a polynomial in the $y_e$ variables, so there might not be a way to compute $\mathcal{I}_A^K(\mathbf{t})$ in general on a Riemannian locally symmetric space.

For a general Riemannian manifold without boundary, to produce local functionals, we can replace $Q^{\text{nl}}_{\mathbf{t}} = \sum_{e \in E(\gamma)} Q^{\text{nl}}_e/4t_e$ with its Taylor polynomial $Q_{\mathbf{t}}^m$ of degree $m$.  Recall that we have previously defined $Q_e = |x_{v_1(e)} - x_{v_2(e)}|^2$.  The Taylor polynomial splits into its degree $2$ and higher degree parts
\begin{align}
  Q_{\mathbf{t}}^m(x) = Q_{\mathbf{t}}(x) + \widetilde{Q}_{\mathbf{t}}^m(x)
\end{align}
with
\begin{align}
  Q_{\mathbf{t}}(x) = \sum_{e \in E(\gamma)} Q_e(x)/4t_e
\end{align}
and
\begin{align}
  \widetilde{Q}_{\mathbf{t}}^m(x) = \sum_{e \in E(\gamma)} \widetilde{Q}^m_e(x)/4t_e = O\left(\left(\sum_{v \in V(\gamma)} |x_v|\right)^3\right).
\end{align}
We define $f_{\gamma, I}^{[N, m]}(\mathbf{t})[\alpha]$ to be the result of substituting $Q_{\mathbf{t}}^m(x)$ for $Q^{\text{nl}}_{\mathbf{t}}(x)$ in $f_{\gamma, I}^{[N]}(\mathbf{t})[\alpha]$.

Suppose that $A \geq 0$ and $B = B_0 + B_1$ with $B_0 \geq 0$.  From the mean value theorem, we have the inequality
\begin{align}
  |e^{-A} - e^{-B}| \leq e^{\max(-A, -B)}|A - B| \leq \max(1, e^{-B_1})|A - B| \leq e^{|B_1|}|A - B|.
\end{align}
With $A = Q^{\text{nl}}_{\mathbf{t}}(x)$, $B_0 = Q_{\mathbf{t}}(x)$ and $B_1 = \widetilde{Q}_{\mathbf{t}}^m(x)$, we have the inequality $|f_{\gamma, I}^{[N]}(\mathbf{t})[\alpha] - f_{\gamma, I}^{[N, m]}(\mathbf{t})[\alpha]| \leq$
\begin{align}
   &\sum_{-O(\gamma) \leq |J| \leq |E(\gamma)|N }  \mathbf{t}^{J - n/2}\int_{\RR^{n|V(\gamma)|}} \left| e^{-Q^{\text{nl}}_{\mathbf{t}}(x)} - e^{-Q^m_{\mathbf{t}}(x)} \right|\Phi_J(x)\\
   \leq &\sum_{-O(\gamma) \leq |J| \leq |E(\gamma)|N }  \mathbf{t}^{J - n/2}\int_{\RR^{n|V(\gamma)|}} e^{|\widetilde{Q}^m_{\mathbf{t}}(x)|}|Q^{\text{nl}}_{\mathbf{t}}(x) - Q^m_{\mathbf{t}}(x)|\Phi_J(x)
\end{align}
For $t_{|E(\gamma)|}^R \leq t_1$, we change of coordinates $x_v$ goes to $\sqrt{t_{|E(\gamma)|}}x_v$, and we have $|\widetilde{Q}^m_{\mathbf{t}}(\sqrt{t_{|E(\gamma)|}}x)| \leq C\sqrt{t_{|E(\gamma)|}}$ on $\overline{B}^{|V(\gamma)|}$ and thus $|\widetilde{Q}^m_{\mathbf{t}}(\sqrt{t_{|E(\gamma)|}}x)| \leq C$, since we are assuming that $t_{|E(\gamma)|} \in (0, 1)$.
Finally, we have the bound for $E^{[N, m]}(\mathbf{t})[\alpha] := |f_{\gamma, I}^{[N]}(\mathbf{t})[\alpha] - f_{\gamma, I}^{[N, m]}(\mathbf{t})[\alpha]|$:
\begin{prop}
\begin{align}
  E^{[N, m]}(\mathbf{t})[\alpha] \leq\|\alpha\|_{O(\gamma)}^{|T(\gamma)|}t_{|E(\gamma)|}^{(m + 1)/2 - R + n|V(\gamma)|/2 - R(O(\gamma) + n|E(\gamma)|/2)} 
\end{align}
  
\end{prop}

We then make the linear change of coordinates to spanning tree coordinates $w$, $\{y_e\}$ and form $f^{[N, m], N'}_{\gamma, I}(\mathbf{t})[\alpha]$ by replacing $\Phi_J$ in $f^{[N, m]}_{\gamma, I}(\mathbf{t})[\alpha]$ with its Taylor expansion in $\{y_e\}$ to order $N'$.  The bound on 
\begin{align}
  |f^{[N, m]}_{\gamma, I}(\mathbf{t})[\alpha] - f^{[N, m], N'}_{\gamma, I}(\mathbf{t})[\alpha]|
\end{align}
can be established as in the proof of Theorem \ref{thm:f_minus_fNprime}.

\subsubsection{Inductive Construction of the Counterterms}
\label{sec:induct-constr-count-curved}
The construction works similarly to the one given in Section \ref{sec:induct-constr-count}.  The proof of Theorem \ref{thm:inductive_step}, which was given only in the case $p = 2$ involves two steps, which we modify as follows.

In the first step, we begin by showing that $\left|g_{\gamma', \gamma, I}(\mathbf{t})[\alpha] - g_{\gamma', \gamma, I}^{[N_1]}(\mathbf{t})[\alpha]\right|$ is bounded by $t_{|E(\gamma)|}$ to a power that grows linearly in $N_1$.  This implies that 
\begin{align}
  \left|f_{\gamma, I}(\mathbf{t})[\alpha] - f_{\gamma\setminus\gamma', I, g^{[N_1]}_{\gamma', \gamma, I}}(\mathbf{t})[\alpha]\right|
\end{align}
is bounded by a power of $t_{|E(\gamma)|}$ that grows linearly in $N_1$.
We then also show that $\left|g_{\gamma', \gamma, I}^{[N_1]}(\mathbf{t})[\alpha] - g_{\gamma', \gamma, I}^{[N_1, m_1]}(\mathbf{t})[\alpha]\right|$ is bounded by $t_{|E(\gamma)|}$ to a power that grows linearly in $m_1$ for $N_1$ fixed.  This implies that 
\begin{align}
  \left|f_{\gamma\setminus\gamma', I, g^{[N_1]}_{\gamma', \gamma, I}}(\mathbf{t})[\alpha] - f_{\gamma\setminus\gamma', I, g^{[N_1, m_1]}_{\gamma', \gamma, I}}(\mathbf{t})[\alpha]\right|
\end{align}
is bounded by a power of $t_{|E(\gamma)|}$ that grows linearly in $m_1$ for $N_1$ fixed.
The rest of the step works like the local situation in Section \ref{sec:induct-constr-count}.

In the second step, we begin by showing that 
\begin{align}
  \left|f_{\gamma \setminus \gamma', I, g^{[N_1, m_1], N_1', m_1'}_{\gamma', \gamma, I}}(\mathbf{t})[\alpha] - f^{[N_2]}_{\gamma \setminus \gamma', I, g^{[N_1, m_1], N_1', m_1'}_{\gamma', \gamma, I}}(\mathbf{t})[\alpha]\right|
\end{align}
is bounded by a power of $t_{|E(\gamma)|}$ that grows linearly in $N_2$ for $N_1$, $m_1$, $N_1'$ and $m_1'$ fixed.
We then also show that
\begin{align}
  \left|f^{[N_2]}_{\gamma\setminus\gamma', I, g^{[N_1, m_1], N_1', m_1'}_{\gamma', \gamma, I}}(\mathbf{t})[\alpha] - f^{[N_2, m_2]}_{\gamma\setminus\gamma', I, g^{[N_1, m_1], N_1', m_1'}_{\gamma', \gamma, I}}(\mathbf{t})[\alpha]\right|.
\end{align}
is bounded by a power of $t_{|E(\gamma)|}$ that grows linearly in $m_2$ for $N_1$, $m_1$, $N_1'$, $m_1'$ and $N_2$ fixed.
The rest of the step works like the local situation in Section \ref{sec:induct-constr-count}.

\subsection{Global Counterterms on a Curved Manifold with Boundary}\label{sec:count-comp-manif-bdry}
In this section, we argue that the renormalization procedure can also be carried out in the case of a Riemannian manifold with boundary $M$ with a cylindrical collar; that is, we assume there exists a neighborhood $W$ of $\partial M$ that is isometric to a product $\partial M \times [0, \epsilon)$.  We conjecture that the result is also true for all Riemannian manifolds with boundary $M$ that are doubleable (see Appendix \ref{sec:riemdouble}).

When we have such a manifold with boundary $M$ equipped with a cylindrical collar, the double of $M$, which will be denoted by $M'$, will be a smooth manifold without boundary equipped with an involution $p \mapsto p^*$ that sends a point $p$ to its reflection through the boundary.

The Dirichlet heat kernel on $M$ is
\begin{align}
  K_t(x, y) = K'_t(x, y) - K'_t(x, y^*),
\end{align}
and the Neumann heat kernel on $M$ is 
\begin{align}
  K_t(x, y) = K'_t(x, y) + K'_t(x, y^*),
\end{align}
where $K'_t(x, y)$ is the heat kernel on $M'$.
The existence of an asymptotic expansion of $K'_t(x, y)$ implies that
\begin{align}
  K_t(x, y)\sim e^{-d(x, y)^2/4t}\sum_i \phi_i(x, y)t^i + e^{-d(x, y^*)^2/4t}\sum_i \psi_i(x, y) t^i
\end{align}
where $\psi_i(x, y) = - \phi_i(x, y^*)$ in the Dirichlet case and $\psi_i(x, y) = \phi_i(x, y^*)$ in the Neumann case.
 
Since the cylindrical neighborhood $W$ is a product, by the Pythagorean theorem the square distance is
\begin{align}
  d_M^2(x, y) = d_{\partial M}^2(\overline{x}, \overline{y}) + |x_n - y_n|^2
\end{align}
Therefore, on open sets of the form $B \times [0, \epsilon)$, for $B$ a geodesic ball in $\partial M$, we can use the results of \ref{sec:count_upp_half} for the direction normal to the boundary and use the analysis of \ref{sec:count-comp-manif} for the $\partial M$ direction.

On the complement $M \setminus W$, we have
\begin{align}
  K_t(x, y)\sim e^{-d(x, y)^2/4t}\sum_i \phi_i(x, y)t^i
\end{align}
so we can apply the analysis of \ref{sec:count-comp-manif}.

\section{Construction of an Effective Field Theory from a Local Functional}
\label{sec:constr_eft}

\subsection{Renormalization of Feynman Weights}
\label{sec:reg_feyn_weights}
In this section, following \cite{2007arXiv0706.1533C}, we lay the groundwork for Costello's procedure that constructs an effective action from a local functional $I \in \mathcal{O}(\mathcal{E})[[\hbar]]$, using as input the constructions of Sections \ref{sec:induct-constr-count}, \ref{sec:induct-constr-count-bdry}, and \ref{sec:induct-constr-count-curved}.

On $S_{id}$, we do the following for each sequence $I^{(1)}, \dots, I^{(p)}$ as in Corollary \ref{cor:refined_cover}: Let $\mathcal{N}_1$, \dots, $\mathcal{N}_p$ be a choice of integers so $d_1(\mathcal{N}_1) + d_2(\mathcal{N}_1, \mathcal{N}_2) + \dots + d_p(\mathcal{N}_1, \dots, \mathcal{N}_p) \geq d$ for some fixed positive integer $d$.  Then, for $\mathbf{t} \in S_{id} \cap (0, 1)$,
  \begin{align}
    |f_{\gamma, I}(\mathbf{t})[\alpha] - f^{\mathcal{N}_1, \dots, \mathcal{N}_p}_{\gamma, I}(\mathbf{t})[\alpha]| \preceq \|\alpha\|_q^{|T(\gamma)|}t_{|E(\gamma)|}^d.
  \end{align}
Let
\begin{align}
  \overline E^{I^{(1)}, \dots, I^{(p)}}_R(t_{|E_{\gamma}|}) = \overline{E}^{I^{(1)}}_{R} \cap \overline{E}^{I^{(2)}}_{R}\dots \cap \overline{E}^{I^{(p)}}_{R} \cap (\epsilon, L)^{|E(\gamma)|}
\end{align}
where we think of the first $|E(\gamma)| - 1$ variables to be varying and the last variable fixed to the value $t_{|E(\gamma)|}$.
Consider the quantities
\begin{align}
  F_{\gamma, I}(t_{|E(\gamma)|})[\alpha] &= \int_{\overline E^{I^{(1)}, \dots, I^{(p)}}_R(t_{|E(\gamma)|})}f_{\gamma, I}(\mathbf{t})[\alpha] \, dt_1 \dots dt_{|E(\gamma)| - 1}\\
  F_{\gamma, I}^{\mathcal{N}_1, \dots, \mathcal{N}_p}(t_{|E(\gamma)|})[\alpha] &= \int_{\overline E^{I^{(1)}, \dots, I^{(p)}}_R(t_{|E(\gamma)|})}f^{\mathcal{N}_1, \dots, \mathcal{N}_p}_{\gamma, I}(\mathbf{t})[\alpha] \, dt_1 \dots dt_{|E(\gamma)| - 1}
\end{align}
Then
\begin{align}
  &\left|\int_{\epsilon}^LF_{\gamma, I}[\alpha] - \int_{\epsilon}^LF_{\gamma, I}^{\mathcal{N}_1, \dots, \mathcal{N}_p}[\alpha] - \int_0^L\left(F_{\gamma, I}[\alpha] - F_{\gamma, I}^{\mathcal{N}_1, \dots, \mathcal{N}_p}[\alpha]\right)\right| \\
  \preceq &\|\alpha\|_q^{|T(\gamma)|}\int_0^{\epsilon}\left(\int_{\overline E^{I^{(1)}, \dots, I^{(p)}}_R(t_{|E(\gamma)|})}\right)t_{|E(\gamma)|}^d
\end{align}
We can make the same construction on $S_{\sigma}$ for any permutation $\sigma$.

Let $w^{\text{CT}}_\gamma(P^L_\epsilon, I)[\alpha]$ be the sum over all permutations $\sigma$ and all sequences $I^{(1)}, \dots, I^{(p)}$ of the quantity
\begin{align}
   \int_{\epsilon}^LF_{\gamma, I}^{\mathcal{N}_1, \dots, \mathcal{N}_p}[\alpha] - \int_0^L\left(F_{\gamma, I}[\alpha] - F_{\gamma, I}^{\mathcal{N}_1, \dots, \mathcal{N}_p}[\alpha]\right).
\end{align}
Write
\begin{align}
  w^{\text{CT}}_\gamma(P^L_\epsilon, I)[\alpha] = \sum_i h_i(\epsilon)\Psi_i[\alpha] + \sum_j \Phi_j(L)[\alpha]
\end{align}
with $\sum_i h_i(\epsilon)\Psi_i[\alpha]$ coming from terms of the form $\int_{\epsilon}^1F_{\gamma, I}^{\mathcal{N}_1, \dots, \mathcal{N}_p}[\alpha]$.

Then, as a consequence of the structure of $\int_1^LF_{\gamma, I}^{\mathcal{N}_1, \dots, \mathcal{N}_p}[\alpha]$, we have that $\Phi_i(L)[\alpha]$ is asymptotically local, for small $L$.  Importantly, we also have that for $\epsilon \in (0, 1)$,
\begin{align}\label{eq:wgammact}
  |w_\gamma(P_\epsilon^L, I)[\alpha] -  w^{\text{CT}}_\gamma(P_\epsilon^L, I)[\alpha] | &\preceq \|\alpha\|^{|T(\gamma)|}_q\epsilon^{|E(\gamma)| + d}.
\end{align}

Let $\mathcal{F}$ be the space of functions of $\epsilon$ that is spanned by the $h_i(\epsilon)$ and constant functions.  Let $\mathcal{F}_0$ be the subspace of $\mathcal{F}$ having a limit when $\epsilon \to 0^+$ and choose a complementary subspace of $\mathcal{F}_0$, which we call $\mathcal{F}_-$.  We write $\Reg: \mathcal{F} \to \mathcal{F}_0$ and $\Sing: \mathcal{F} \to \mathcal{F}_-$ for the projection maps. 

An important point is that
\begin{align}
  \lim_{\epsilon \to 0^+}[w_\gamma(P^L_\epsilon, I) - \Sing w^{\text{CT}}_\gamma(P^L_\epsilon, I)]
\end{align}
exists because of (\ref{eq:wgammact}) and the existence of
\begin{align}
  \lim_{\epsilon \to 0^+}\Reg w_\gamma(P^L_\epsilon, I).
\end{align}

\subsection{Costello's Construction of the Effective Interaction}
\label{sec:cost_construction}
Here we more closely follow \cite{Costello_2011}, where Costello begins with the base case of the induction by constructing $I_{1, 1}^{\text{CT}}(\epsilon) := \Sing W_{1, 1}(P_{\epsilon}^L, I)$,
where
\begin{align}
W^{\text{CT}}_{1, 1}(P_\epsilon^L, I) = \sum_{\substack{\gamma\text{ conn}\\ g(\gamma) = 1, T(\gamma) = 1}}\frac{1}{|\Aut(\gamma)|}\hbar^{g(\gamma)} w^{\text{CT}}_\gamma\left(P_\epsilon^L, I\right).
\end{align}
It is clear from the analysis of the previous section that $I_{1, 1}[L] = \lim_{\epsilon \to 0^+}[W_{1, 1}(P_{\epsilon}^L, I) - I_{1, 1}^{\text{CT}}(\epsilon)]$ is well-defined, and that $I_{1, 1}[L]$ is asymptotically local for small $L$.

Suppose that $I_{i', k'}(\epsilon)$ has been constructed for all $(i', k') \prec (i, k)$, so that
\begin{align}
  I_{\prec (i, k)}[L] = \lim_{\epsilon \to 0^+} W_{\prec (i, k)}\left(P_{\epsilon}^L, I - \sum_{(i', k') \prec (i, k)}\hbar^{i'}I_{i', k'}(\epsilon)\right).
\end{align}
Costello then defines
\begin{align}
  I^{\text{CT}}_{i, k}(\epsilon, L) = \Sing W_{i, k}^{CT}\left(P_\epsilon^L,  I - \sum_{(i', k') \prec (i, k)}\hbar^{i'}I_{i', k'}(\epsilon)\right), 
\end{align}
and uses the identity
\begin{align}
  W_{i, k}\left(P_\epsilon^L,  I - \sum_{(i', k') \prec (i, k)}\hbar^{i'}I_{i', k'}(\epsilon) - \hbar^i I^{\text{CT}}_{i, k}(\epsilon, L)\right) =& \\
  W_{i, k}\left(P_\epsilon^L,  I - \sum_{(i', k') \prec (i, k)}\hbar^{i'}I_{i', k'}(\epsilon) \right)& - I^{\text{CT}}_{i, k}(\epsilon, L)
\end{align}
to show that the limit
\begin{align}
  I_{i, k}[L] = \lim_{\epsilon \to 0^+}W_{i, k}\left(P_\epsilon^L,  I - \sum_{(i', k') \prec (i, k)}\hbar^{i'}I_{i', k'}(\epsilon) - \hbar^iI^{\text{CT}}_{i, k}(\epsilon, L)\right)
\end{align}
exists.

Costello then shows that $I^{\text{CT}}_{i, k}(\epsilon, L)$ does not depend on $L$ using the identity
\begin{align}
  I^{\text{CT}}_{i, k}(\epsilon, L) &= \Sing W_{i, k}^{CT}\left(P_{L'}^{L},  W_{\prec (i, k)}\left(P_{\epsilon}^{L'}, I - \sum_{(i', k') \prec (i, k)}\hbar^{i'}I_{i', k'}(\epsilon)\right)\right) \\
                                    &+ \Sing W_{i, k}^{CT}\left(P_{\epsilon}^{L'}, I - \sum_{(i', k') \prec (i, k)}\hbar^{i'}I_{i', k'}(\epsilon)\right)\\
  &= I^{\text{CT}}_{i, k}(\epsilon, L')
\end{align}

\subsection{Statement of the Main Theorem}
\label{sec:statement-main-thm}
\begin{defin}
  We say an effective scalar field theory on a Riemannian manifold without boundary $M$ is of type:
  \begin{itemize}
  \item \emph{E} if $M = \RR^n$, the space of fields is $\mathcal{S}(\RR^n)$, the space of Schwartz functions, and the propagator is $P_{\epsilon}^L = \int_{\epsilon}^LK_t$, where $K_t$ is the Euclidean heat kernel.  All functionals may be assumed to be translation invariant.
  \item \emph{C} if $M$ is compact, the space of fields is $C^{\infty}(M)$, and the propagator is $P_{\epsilon}^L = \int_{\epsilon}^LK_t$ where $K_t$ is the heat kernel on $M$.
  \item \emph{NC} if the space of fields is $C^{\infty}_c(M)$, and the propagator is $P_{\epsilon}^L = \int_{\epsilon}^LK_t^{\text{fake}}$ where $K_t^{\text{fake}}$ is a fake heat kernel on $M$.  The effective action is well-defined modulo constant functionals.
  \end{itemize}
\end{defin}
\begin{defin}
  We say an effective scalar field theory on a Riemannian manifold with boundary (with cylindrical collar) $M$ is of type:
  \begin{itemize}
  \item \emph{E} if $M = \HH^n$, the space of fields is $\mathcal{S}_D(\HH^n)$ (or $\mathcal{S}_N(\HH^n)$), the space of Schwartz functions on $\HH^n$ with Dirichlet (or Neumann) boundary conditions, and the propagator is $P_{\epsilon}^L = \int_{\epsilon}^LK_t$, where $K_t$ is the Dirichlet (or Neumann) heat kernel on $M$.  All functionals may be assumed to be translation invariant parallel to the boundary.
  \item \emph{C} if $M$ is compact and the space of fields is $C^{\infty}_D(M)$ (or $C^{\infty}_N(M)$), and the propagator is $P_{\epsilon}^L = \int_{\epsilon}^LK_t$, where $K_t$ is the Dirichlet (or Neumann) heat kernel on $M$.
  \item \emph{NC} if the space of fields is $C^{\infty}_{D, c}(M)$ (or $C^{\infty}_{N, c}(M)$), and the propagator is $P_{\epsilon}^L = \int_{\epsilon}^LK_t^{\text{fake}}$, where $K_t^{\text{fake}}$ is a fake Dirichlet (or Neumann) heat kernel on $M$.  The effective action is well-defined modulo constant functionals.
  \item \emph{C-E} if $M = N_0 \times \RR_{\geq 0}$ and $N_0$ is compact, the space of fields is $C^{\infty}(N_0) \widehat{\otimes} \mathcal{S}_D(\RR_{\geq 0})$ (or $C^{\infty}(N_0) \widehat{\otimes} \mathcal{S}_N(\RR_{\geq 0})$), and the propagator is $P_{\epsilon}^L = \int_{\epsilon}^LK_t$, where $K_t = K_t^{N_0}K_t^{\RR_{\geq 0}}$, for $K_t^{\RR_{\geq 0}}$ the Dirichlet (or Neumann) heat kernel on $\RR_{\geq 0}$.
  \item \emph{NC-E} if $M = N_0 \times \RR_{\geq 0}$, the space of fields is $C^{\infty}_c(N_0) \widehat{\otimes} \mathcal{S}_D(\RR_{\geq 0})$ (or $C^{\infty}_c(N_0) \widehat{\otimes} \mathcal{S}_N(\RR_{\geq 0})$), and the propagator is $P_{\epsilon}^L = \int_{\epsilon}^LK_t$, where $K_t = K_t^{N_0, \text{fake}}K_t^{\RR_{\geq 0}}$, for $K_t^{\RR_{\geq 0}}$ the Dirichlet (or Neumann) heat kernel on $\RR_{\geq 0}$.  The effective action is well-defined modulo constant functionals.
  \end{itemize}
\end{defin}
Note that the reason for disallowing constant functionals in the effective action for the NC types is to ensure that the Feynman weights are well-defined despite the (fake) heat kernel not being compactly-supported. 

We can now state our slight modification and generalization of the statement of Costello \cite{Costello_2011}
\begin{thm}
  For any of the above types of scalar field theories, given the space of fields $\mathcal{E}$ and a local interaction functional $I \in \mathcal{O}_{\text{loc}}(\mathcal{E})[[\hbar]]$, there exists a functional of counterterms $I^{\text{CT}}(\epsilon)$ such that the effective interaction $I[L] = \lim_{\epsilon \to 0^+}[W(P_\epsilon^L, I - I^{\text{CT}}(\epsilon))]$ exists and satisfies
  \begin{itemize}
  \item The renormalization group equation: $I[L] = W(P_{\epsilon}^L, I[\epsilon])$,
  \item Asymptotic locality: for each $d$ and $(i, k)$, there exists a functional
    \begin{align}
      I_{i, k}^{\text{loc}, d}[L] = \sum_j h^j_{i, k}(L) \Psi_j
    \end{align}
    so that
    \begin{align}
      \left|I_{i, k}[L][\alpha] - I_{i, k}^{\text{loc}, d}[L][\alpha] \right| \preceq \|\alpha\|^k_q L^{d + E(i, k)}
    \end{align}
    for some positive integers $q$ and $E(i, k)$.
  \end{itemize}

\end{thm}

\begin{appendices}

\section{The Double of a Riemannian Manifold}
\label{sec:riemdouble}
The double $M'$ of an $n$-manifold with boundary $M$ is defined to be the disjoint union of two copies of $M$ (which we index by $0$ and $1$), with the identification of $(p, 0) \sim (p, 1)$ for $p \in \partial M$.  If $M$ is a smooth manifold with collar neighborhood $U \cong \partial M \times [0, \epsilon)$, there is a smooth structure on $M'$ such that $U' \cong M \times (-\epsilon, \epsilon)$ is a diffeomorphism, and we call $M'$ the smooth double.

If $(M, g)$ is a Riemannian manifold, there is a canonical smooth collar arising as the domain $U$ of the normal exponential map $\exp^{\perp}$ and we construct the smooth double $M'$ using this collar.  However, it is not always true that $g'$, the natural continuous extension of $g$ to $M'$, is smooth.  We recall the proof of a result of H. Mori \cite{MR1145141} that characterizes exactly when $g'$ is smooth.

Before proceeding we set conventions and recall a few basic facts.  Let $N$ denote the inward pointing unit normal vector field along $\partial M$.  For any coordinate system $x^1, \dots, x^{n - 1}$ on a neighborhood $V \subset \partial M$, there is a canonical coordinate system $x^1, \dots, x^n$ on $V \times [0, \epsilon)$ with the property $x^n(\exp^{\perp}(tN)) = t$ for $t \in [0, \epsilon)$.  This extends to a coordinate system on $V \times (-\epsilon, \epsilon)$ with $x^n((\exp^{\perp}(tN), i)) = (-1)^it$.  It suffices to show that $g'$ is smooth for each such coordinate system.

Because $\partial_n = N$, $g(\partial_n, \partial_n) = 1$, and therefore
\begin{align}
  \frac{\partial g_{in}}{\partial x^n} = g(\nabla_{\partial_n}\partial_i, \partial_n) = \frac{1}{2}\frac{\partial g_{nn}}{\partial x^i} = 0.
\end{align}
Thus $g_{in}$ is constant on $U$, which implies that $g_{in} = 0$ on $U$.
We shall use $i, j$ to range in $1, \dots, n - 1$.  We can see, therefore, that $g'$ is smooth if and only if $\frac{\partial g_{ij}}{\partial x^n} = 0$ for $n$ odd.  

\begin{lem}\label{lem:lemmahori}
  Let $(M, g)$ be a Riemannian manifold with boundary.  Then the continuous Riemannian metric $g'$ is smooth if and only if $\nabla^{2k + 1}_{\partial_n}\partial_i = 0$ for all $i$ in $1, \dots, n - 1$ and all $k \geq 0$.
\end{lem}
\begin{proof}
  Note that
  \begin{align}
    g(\nabla_{\partial_n}\partial_i, \partial_j) = \frac{1}{2}\left(\frac{\partial g_{ij}}{\partial x^n} + \frac{\partial g_{jn}}{\partial x^i} - \frac{\partial g_{in}}{\partial x^j}\right) = \frac{1}{2}\frac{\partial g_{ij}}{\partial x^n}
  \end{align}
  Furthermore, we have
  \begin{align}
    \frac{1}{2}\left(\partial_n\right)^{2k + 1} g_{ij}= (\partial_n)^{2k}g(\nabla_{\partial_n}\partial_i, \partial_j) = \sum_{l = 0}^{2k} \binom{2k}{l} g((\nabla_{\partial_n})^{l + 1}\partial_i, (\nabla_{\partial_n})^{2k - l}\partial_j).
  \end{align}
  The result now directly follows by induction on $k$.
\end{proof}

Lemma \ref{lem:lemmahori} can be reformulated in more geometric terms.  Let $R$ be the Riemann curvature and define $\widehat{R}(X) := R(\partial_n, X)\partial_n$.  We can now state Mori's theorem.

\begin{thm}
  Let $(M, g)$ be a Riemannian manifold with boundary.  Then the continuous Riemannian metric $g'$ is smooth if and only if $\partial M$ is a totally geodesic submanifold and $\nabla_{\partial_n}^{2k + 1}\widehat{R} = 0$ for all $k \geq 1$.
\end{thm}
\begin{proof}
  Note that $\widehat{R}(\partial_i) = \nabla_{\partial_n}\nabla_{\partial_i}\partial_n = \nabla_{\partial_n}^2\partial_i$.  Therefore, for $k \geq 1$
  \begin{align}
    \nabla^{2k + 1}_{\partial_n}\partial_i = \nabla_{\partial_n}^{2k - 1}(\widehat{R}(\partial_i)) = \sum_{l = 0}^{2k - 1} \binom{2k - 1}{l}(\nabla_{\partial_n}^l \widehat{R})(\nabla_{\partial_n}^{2k - 1 - l}\partial_i)
  \end{align}
  The result now follows from the above formula and the fact that $\partial M$ is totally geodesic if and only if $\nabla_{\partial_n}\partial_i = 0$ for all $i$ in $1, \dots, n - 1$.
\end{proof}

\begin{defin}\label{def:doublable}
  We call a Riemannian manifold satisfying the conditions of the above theorem \emph{doublable}.
\end{defin}

\begin{cor}
  Let $M$ be a Riemannian manifold with boundary with parallel curvature tensor and totally geodesic boundary.  Then $M$ is doublable.
\end{cor}
Note that a manifold without boundary with a parallel Riemann curvature tensor is the same as a locally symmetric space (see \cite{Wolf_2010}).  

\end{appendices}

\section*{Acknowledgments}
\label{sec:acknowled}
Thanks to Kevin Costello and Eugene Rabinovich for independently suggesting the possibility that an additional Taylor expansion normal to the boundary would produce local counterterms.  Additional thanks to Kevin Costello for suggesting a Taylor expansion of the geodesic distance squared as a function of two variables as a possible way of going beyond the parallel curvature assumption. 

\printbibliography[title={Bibliography}]

\end{document}